\documentclass[a4paper,10pt,reqno]{amsart}

\usepackage{pstricks}

\usepackage{graphicx}
\usepackage{fancybox}
\usepackage[centertags]{amsmath}
\usepackage{amsfonts,amssymb,amsthm}
\usepackage{float,longtable}
\usepackage[centertags]{amsmath}
\usepackage{amsfonts,amssymb,amsthm}
\usepackage{float,longtable}
\usepackage{graphicx}
\usepackage{a4wide}
\usepackage{psfrag}

\usepackage[latin1]{inputenc}
\usepackage{newlfont}
 \theoremstyle{plain}
 \begingroup
 \newtheorem{thm}{Theorem}[section]
 \newtheorem{lem}[thm]{Lemma}

 \endgroup

 \theoremstyle{definition}
 \begingroup
 \newtheorem{defn}[thm]{Definition}
 
 \endgroup

 \theoremstyle{remark}
 \begingroup
 \newtheorem{rem}[thm]{Remark}
 \endgroup

 \numberwithin{equation}{section}

















\title[Damage as $\Gamma$-limit of microfractures]
{Damage as $\Gamma$-limit of microfractures \\in anti-plane linearized elasticity}
\author[]{LUCIA SCARDIA}
\address[]{S.I.S.S.A., Via Beirut 2-4, 34014, Trieste, Italy}
\email[]{scardia@sissa.it}

\begin{document}
\maketitle
\begin{center}
\begin{minipage}{12cm}
\small{
\noindent {\bf Abstract.} A homogenization result is given for a material having 
brittle inclusions arranged in a periodic structure. According to the relation between the 
softness parameter and the size of the microstructure, three different limit models are deduced via 
$\Gamma$-convergence. In particular, damage is obtained as limit of periodically distributed
microfractures.

\vspace{15pt}
\noindent {\bf Keywords:} brittle fracture, damage, homogenization, $\Gamma$-convergence, 
integral representation

\vspace{6pt}
\noindent {\bf 2000 Mathematics Subject Classification:} 74Q99, 74R05, 74R10}
\end{minipage}
\end{center}

\bigskip

\section{Introduction}
The results contained in this paper describe the homogenization of a material composed by two
constituents which are distributed in a periodic way and which have a very different elastic 
behaviour. More precisely, we consider the case of an unbreakable elastic material presenting 
disjoint brittle inclusions arranged in a periodic way. 
In other words, we assume that cracks can appear and grow only in a prescribed disconnected region 
of the material, composed of a large number of small components with small toughness. 

In what follows, let $\Omega\subset\mathbb{R}^n$, with $n\geq 2$, be the region occupied by the material and let $\varepsilon>0$ be a small parameter. 
We introduce a structure on $\Omega$ whose periodicity cells $\varepsilon\, Q$ are the 
$\varepsilon$-homothetic of the unit square $Q:=(0,1)^n$. For any $0<\delta<1/2$ we denote 
with $Q_\delta \subset Q$ the concentric cube $(\delta, 1-\delta)^n$.
Let us focus on a single cell $\varepsilon\, Q$. 
We assume that cracks can appear only in a region contained in $\varepsilon\, Q_\delta$. Moreover, in 
order to deal with a quite general situation we allow the fragile part to have an $n$-dimensional
component and an $(n-1)$-dimensional one, which can be interpreted as a \textit{fissure} in the 
material. Hence, we consider an open set $E\subset Q_\delta$ and an $(n-1)$-dimensional set $F\subset Q_\delta$
and we require that the fracture in a single cell is contained in $\varepsilon\,E \cup \varepsilon\,F$.

A pictorial idea of the composition of the material is given by the following figure:

\vspace{-.2 cm}
\begin{figure}[htbp]
\begin{center}
\includegraphics[height=.3\textwidth]{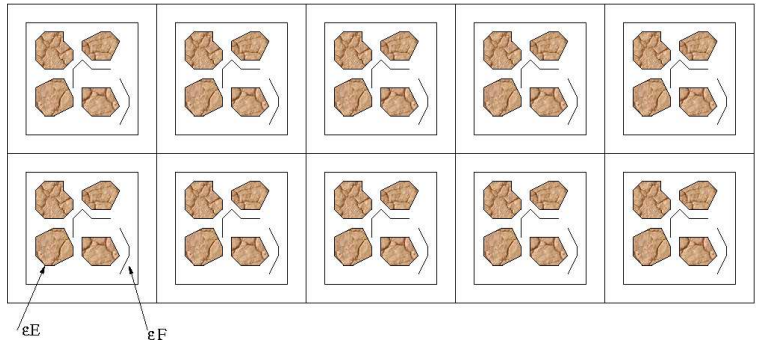}
\end{center}
\caption{Composite material}
\label{visc}
\end{figure}

To simplify the mathematical description of the model we consider only linearly elastic 
materials, and we restrict our analysis to the case of anti-plane shear. More precisely, 
we assume that the reference configuration is an infinite cylinder $\Omega \times \mathbb{R}$ and 
the displacement $v: \Omega\times \mathbb{R} \rightarrow \mathbb{R}^{n+1}$ has the special 
form $v(x,t):= (0,\dots,0,u(x))$ for every $(x,t)\in \Omega\times \mathbb{R}$, where 
$u: \Omega \rightarrow \mathbb{R}$. 

Since we are taking into account the possibility of creating cracks, displacements 
are allowed to have discontinuities. Therefore, the natural functional setting for the problem 
is the space of special functions with bounded variation. More precisely, we consider displacements 
$u\in SBV^2(\Omega)$, that is, we assume in addition that the approximate gradient $\nabla u$ is in 
$L^2$ and that the $(n-1)$-dimensional Hausdorff measure of the jump set $S_u$ is finite.

The elastic energy $\mathcal{F}^{\varepsilon}$ associated to a displacement 
$u\in SBV^2(\Omega)$ is defined as
\begin{equation*}
\mathcal{F}^{\varepsilon}(u) = \int_{\Omega} |\nabla  u|^2 dx + \int_{S_{u}}f_{\alpha_\varepsilon}
\Big(\frac{x}{\varepsilon}\Big) \,d \mathcal{H}^{n - 1}(x), 
\end{equation*}
where $f_{\alpha_\varepsilon}: \mathbb{R}^{n}
\rightarrow [0, + \infty]$ is a $Q$-periodic function defined as
\begin{equation*}
f_{\alpha_\varepsilon}(y) = 
\begin{cases}
\alpha_\varepsilon \quad & \hbox{in } E\cup F,\\
+ \infty \quad & \hbox{otherwise in } Q,
\end{cases}
\end{equation*}
and $\alpha_\varepsilon$ is a positive parameter depending on $\varepsilon$.

The volume term in the expression of $\mathcal{F}^{\varepsilon}$ represents the linearly elastic 
energy of the body, while the surface integral describes the energy needed in order to open a crack 
in a material with toughness $\alpha_\varepsilon$, according to Griffith's model of brittle fractures 
(see \cite{Grif}).
More precisely, the density $f_{\alpha_\varepsilon}$ acts as a \textit{weight} 
for the measure of the jump set $S_u$ of the displacement $u$. 
Indeed, the energy is finite only when $S_u$ lies in the fragile part of the material.
 
We are interested in the asymptotic behaviour of the sequence $\mathcal{F}^{\varepsilon}$
as $\varepsilon$ goes to zero, in the framework of $\Gamma$-convergence.

Heuristically, as $\varepsilon$ becomes smaller and smaller, the microscopic structure of the material 
becomes finer and finer, while, on the other hand, from a macroscopic point of view the 
behaviour of the composite tends to be simpler. 
So we expect the limit behaviour of the material to be described in terms of a different 
\textit{homogeneous} material, that captures the main features of the two original constituents. 

We consider the case in which $\delta$ is fixed and independent of $\varepsilon$, while $\alpha_\varepsilon$
converges to zero as $\varepsilon \rightarrow 0$. We show that the limit model depends on the
behaviour of the ratio $\frac{\alpha_\varepsilon}{\varepsilon}$ as $\varepsilon$ goes to zero.
However, it turns out that the different limiting models present a common feature: they describe an
unbreakable material.
This means that, even if at scale $\varepsilon$ many microscopic
cracks are present in the material, they are not equivalent 
in the limit model to a macroscopic crack, due to the fact that 
they are well separated from one another. 
Indeed, in the periodicity cell $\varepsilon\, Q$ the brittle inclusion 
$\varepsilon\, E \cup \varepsilon\, F$ is set at a distance $\varepsilon \delta$ 
from the boundary $\partial(\varepsilon\, Q)$, with  $\delta>0$ 
independent of $\varepsilon$. The size of the separation between different 
inclusions prevents the small cracks contained in the brittle region 
of the material from having the same asymptotic effect of a macroscopic fracture.

A different situation occurs when the parameter $\delta$ depends on $\varepsilon$ 
and converges to zero as $\varepsilon \rightarrow 0$.
This case has been partially solved in \cite{Bar}, assuming $\alpha_\varepsilon = 1$.

In this paper we show that three different limit models can arise, corresponding to the limit $\frac{\alpha_\varepsilon}{\varepsilon}$ 
being zero (subcritical case), finite (critical case) or $+ \infty$ (supercritical case).

In the subcritical case $\alpha_{\varepsilon}<< \varepsilon$, the limit functional turns out to be

\begin{equation*}
\mathcal{F}^0(u) = 
\begin{cases}
\displaystyle \int_{\Omega}f_{0}(D u)\,dx \quad &\hbox{in } H^1(\Omega),\\
+ \infty \quad  &\hbox{otherwise in } L^2(\Omega),
\end{cases}
\end{equation*}
where $f_{0}$ is a coercive quadratic form given by the cell formula
\begin{equation}\label{min}
f_{0}(\xi) = \min \bigg\{
\int_{Q\setminus (E \cup F)}|\,\xi + D w(y)|^2 dy : w\in H^1_{\#} (Q\setminus (E\cup F)) \bigg\},
\end{equation}
and $H^1_{\#}(Q\setminus (E\cup F)) $ denotes the space of $H^1(Q\setminus (E\cup F))$ functions with periodic boundary values on $\partial Q$.
Hence there exists a positive definite matrix 
$A_0 \in \mathbb{R}^{n \times n}$ with constant coefficients such that
$f_{0}(\xi) = A_{0}\xi \cdot \xi$ for every  $\xi \in \mathbb{R}^n$.
Notice that $\mathcal{F}^0$ represents the energy of a linearly elastic 
homogeneous anisotropic material. 
Moreover, since $w \equiv 0$ is a competitor for the minimum in (\ref{min}), the density $f_{0}$ satisfies
\begin{equation*}
A_0\xi \cdot \xi = f_{0}(\xi) \leq \big(1 - \mathcal{L}^n(E)\big)|\xi|^2 \leq |\xi|^2 \quad \hbox{for every } \xi \in \mathbb{R}^n,
\end{equation*}
and the second inequality is strict for $\xi \neq 0$. This means that ``$A_0 \lneqq Id$'' in the usual sense of 
quadratic forms. This is due to the fact that
in this regime, for the problem at fixed $\varepsilon$, displacements presenting discontinuities
are energetically convenient. Hence, although the limit energy $\mathcal{F}^0$ 
describes an unbreakable material, the possibility to create a high number of microfractures in
the approximating problems leads to a damaged limit material, that is, a material whose elastic properties 
are weaker than the original ones.

In the supercritical regime $\alpha_{\varepsilon} >> \varepsilon$ the limit model is described by the functional
\begin{equation*}
\mathcal{F}^{\infty}(u) = 
\begin{cases}
\displaystyle \int_{\Omega}|D u|^2\,dx \quad &\hbox{in } H^1(\Omega),\\
+ \infty \quad  &\hbox{otherwise in } L^2(\Omega).
\end{cases}
\end{equation*}
Hence, the (possible) presence of small cracks in the problems at scale $\varepsilon$ 
does not affect the elastic properties of the original material.
Indeed, in this regime the formation of microfractures is penalized by the energy, that is,
displacements presenting jumps are not energetically convenient.
Therefore the macroscopic result describes an undamaged material.

The critical regime corresponds to the case where $\alpha_\varepsilon$ is of the same
order as $\varepsilon$, so we can assume without loss of generality that $\alpha_\varepsilon = \varepsilon$.
The limit functional is
\begin{equation*}
\mathcal{F}^{hom}(u) = 
\begin{cases}
\displaystyle \int_{\Omega}f_{hom}(D u)\,dx \quad &\hbox{in } H^1(\Omega),\\
+ \infty \quad  &\hbox{otherwise in } L^2(\Omega),
\end{cases}
\end{equation*}
where the density $f_{hom}$ is given by the asymptotic cell formula
\begin{align}\label{density}
f_{hom}(\xi):= \lim_{t\rightarrow +\infty}\frac{1}{t^n}\,\inf\bigg\{\int_{(0,t)^n} 
|\xi + \nabla w|^2 d\,x + \mathcal{H}^{n-1}(S_w): w\in SBV^2_0\big((0,t)^n\big), 
S_w \subset \tilde{E}\cup\tilde{F} \bigg\}, 
\end{align}
and the sets $\tilde{E}$ and $\tilde{F}$ are defined as
\begin{equation*}
\tilde{E}:= E + \mathbb{Z}^n, \quad \tilde{F}:= F + \mathbb{Z}^n.
\end{equation*}
Notice that, since in this case the coefficient $\alpha_\varepsilon$ and the size $\varepsilon$ of the
microstructure have the same order, there is a competition between the bulk energy and the
surface term. Indeed they both contribute to the expression of the limit density. 

Moreover, the limit functional describes an intermediate model with respect to the
subcritical and the supercritical regimes. More precisely, the limit density satisfies
\begin{equation}\label{ineqqq}
f_0(\xi) \lneqq f_{hom}(\xi) \leq \min \big\{|\xi|^2, f_0(\xi) + c(E)\big\},
\end{equation}
for every $\xi \in \mathbb{R}^n \setminus \{0\}$, where $c(E)$ is the $(n-1)$-dimensional 
measure of $\partial E$ (see Lemma \ref{notquad}).

Notice that (\ref{ineqqq}) entails that for $|\xi|$ large enough $f_{hom}(\xi) \lneqq |\xi|^2$.  
Therefore, the limit functional describes a damaged material. 
Using estimate (\ref{ineqqq}) it is also possible to show that the limit density $f_{hom}$ is
\textit{not} two-homogeneous, and hence it is not a quadratic form (see again Lemma \ref{notquad}). 

The plan of the paper is the following. 
In Section 2 we define the energy functional and we describe the mathematical setting 
of the problem. Sections 3-5 are devoted to the asymptotic analysis of the energy in the 
various regimes and to the description of the limit functionals in the subcritical, critical 
and supercritical cases. 
In the last Section we present, in the two-dimensional case, an alternative and direct proof 
of the main result of Section 4, in the regime $\alpha_\varepsilon >> \varepsilon$.

\section{Preliminaries and formulation of the problem}
\noindent
Let us give some definitions and results that will be widely used throughout the paper.

In order to make precise the mathematical setting of this problem, we need to recall some properties
of rectifiable sets and of the space $SBV$ of special functions with bounded variation.
We refer the reader to \cite{AFP} for a complete treatment of these subjects.

A set $\Gamma\subset \mathbb{R}^n$ is rectifiable if there exist $N_0 \subset \Gamma$ with 
$\mathcal{H}^{n-1}(N_0) = 0$, and a sequence $(M_i)_{i \in \mathbb{N}}$ of $C^1$-submanifolds of 
$\mathbb{R}^n$ such that
\begin{equation*}
 \Gamma \setminus N_0 \subset \bigcup_{i\in\mathbb{N}} M_i.
\end{equation*}
For every $x\in\Gamma\setminus N_0$ we define the normal to $\Gamma$ at $x$ as $\nu_{M_i}(x)$. 
It turns out that the normal is well defined (up to the sign) for $\mathcal{H}^{n-1}$-a.e. 
$x\in\Gamma$.

Let $U\subset \mathbb{R}^n$ be an open bounded set with Lipschitz boundary. We define $SBV(U)$ as 
the set of functions $u\in L^1(U)$ such that the distributional derivative $Du$ is a Radon measure which, 
for every open set $A\subset U$, can be represented as
\begin{equation*}
 Du(A) = \int_{A} \nabla u\, dx + \int_{A\cap S_u} [u](x)\,\nu_u(x)\, d\mathcal{H}^{n-1}(x),
\end{equation*}
where $\nabla u$ is the approximate differential of $u$, $S_u$ is the set of jump of $u$ (which is a 
rectifiable set), $\nu_u(x)$ is the normal to $S_u$ at $x$, and $[u](x)$ is the jump of $u$ at $x$.

\noindent
For every $p \in ]1, +\infty[$ we set
\begin{equation*}
 SBV^p(U) = \big\{u\in SBV(U): \nabla u \in L^p (U;\mathbb{R}^n), \mathcal{H}^{n-1}(S_u) < +\infty \big\}.
\end{equation*}
If $u\in SBV(U)$ and $\Gamma\subset U$ is rectifiable and oriented by a normal vector field $\nu$, then we 
can define the traces $u^+$ and $u^-$ of $u\in SBV(U)$ on $\Gamma$ which are characterized by the relations
\begin{equation*}
\lim_{r\rightarrow 0}\, \frac{1}{r^n} \int_{\Omega \cap B^{\pm}_r(x)} |u(y) - u^{\pm}(x)|\,dy = 0 \quad 
\hbox{for } \mathcal{H}^{n-1}- \hbox{a.e. } x\in \Gamma,
\end{equation*}
where $B^{\pm}_r(x) := \{y\in B_r(x): (y - x)\cdot \nu \gtrless 0 \}$.

\noindent
A set $E\subset U$ has finite perimeter in $U$ if the characteristic function $\chi_E$ belongs to $SBV(U)$. 
We denote by $\partial^*E$ the set of jumps of $\chi_E$ and by $P(E,U)$ the total variation of the measure 
$D\chi_E$, that is, the perimeter of $E$ in $U$.

Finally, if $E\subset U$, we denote with $E(\sigma)$ the set of points of density $\sigma \in [0,1]$ for $E$, i.e., 
\begin{equation*}
 E(\sigma):= \big\{ x\in U: \lim_{r\rightarrow 0} \mathcal{L}^n(E \cap B_r(x))/ \mathcal{L}^n(B_r(x)) = \sigma \big\}.
\end{equation*}
\vspace{.2cm}

Let us come to the formulation of the problem.

\noindent
Let $n \geq 2$ and let $\Omega\subset \mathbb{R}^n$ be a bounded open set. 
In the following we will denote by $Q$ the unit cube $(0,1)^n$ and by $Q_{\varrho}$ the inner  
cube $(\varrho, 1- \varrho)^n$, for some $\varrho \in (0,1)$.

\noindent
Let $\delta > 0$ and $E, F\subset Q_\delta$ be defined in the following way:
\begin{itemize}
\item $E$ is a finite union of disjoint sets given by the closure of domains with Lipschitz 
boundary;

\vspace{.1cm}
\item $F$ is a finite union of disjoint closed $(n - 1)$-dimensional smooth manifolds.
\end{itemize}
Assume also that $E$ and $F$ are disjoint.

For every $\varepsilon > 0$, let us consider the periodic structure in $\mathbb{R}^n$ generated 
by an $\varepsilon$-homothetic of the basic cell $Q$.  

The starting point of the problem is the energy associated
to a function $u\in SBV^2(\Omega)$, that is
\begin{equation*}
\mathcal{F}^{\varepsilon}(u) = \int_{\Omega} |\nabla  u|^2 dx + \int_{S_{u}}f_{\alpha}
\Big(\frac{x}{\varepsilon}\Big) \,d \mathcal{H}^{n - 1}(x), 
\end{equation*}
where $f_{\alpha}: \mathbb{R}^{n}
\rightarrow [0, + \infty]$ is a $Q$-periodic function defined as
\begin{equation*}
f_{\alpha}(y) = 
\begin{cases}
\alpha \quad & \hbox{in } E\cup F,\\
+ \infty \quad & \hbox{otherwise in } Q,
\end{cases}
\end{equation*}
and $\alpha$ is a positive parameter.
Clearly, being $f_{\alpha}$ $Q$-periodic, the function
\begin{equation*}
x \mapsto f_{\alpha} \Big(\frac{x}{\varepsilon}\Big)
\end{equation*}
turns out to be $\varepsilon\,Q$-periodic. For notational brevity we will use the superscript $\varepsilon$ to 
denote the $\varepsilon$-homothetic of any domain. 
In particular, $Q^{\varepsilon} := \varepsilon\, Q$.

Let us write the domain $\Omega$ as union of cubes of side $\varepsilon$:
\begin{equation*}
\Omega = \bigg(\bigcup_{h\in \mathbb{Z}^n_{\varepsilon}} (Q + h)^{\varepsilon}\bigg)\cup R(\varepsilon), 
\end{equation*}
where $\mathbb{Z}_{\varepsilon}^n$ is the set of integer vectors $h\in \mathbb{Z}^n$ such that 
$(Q + h)^{\varepsilon}\subset \Omega$ and $R(\varepsilon)$ is the remaining part of $\Omega$. 
Let $N(\varepsilon)$ be the cardinality of the set $\mathbb{Z}^n_{\varepsilon}$; notice that 
$N(\varepsilon)$ is of order $1/\varepsilon^n$.

We denote by $\{Q^\varepsilon_k\}_{k=1,\dots,N(\varepsilon)}$ an enumeration of the family of cubes 
$(Q + h)^{\varepsilon}$ covering $\Omega$, so that we can rewrite $\Omega$ as 
\begin{equation}\label{domom}
\Omega = \Bigg(\bigcup_{k=1}^{N(\varepsilon)} Q^{\varepsilon}_k\Bigg)\cup R(\varepsilon).
\end{equation}

Let $E^{\varepsilon}_k, F^{\varepsilon}_k\subset\subset Q^{\varepsilon}_k$ be defined in the same way. 
Finally, we set  
\begin{equation}\label{subd}
\tilde{E}^{\varepsilon}:= \bigg(\bigcup_{k=1}^{N(\varepsilon)} E^{\varepsilon}_k\bigg)\cup R_{E}(\varepsilon), \quad
\tilde{F}^{\varepsilon}:=  \bigg(\bigcup_{k=1}^{N(\varepsilon)} F^{\varepsilon}_k\bigg)\cup R_{F}(\varepsilon),
\end{equation}
where $R_{E}(\varepsilon)$ and $R_{F}(\varepsilon)$ are the remaining parts of $\Omega\cap (E + \mathbb{Z}^n)^\varepsilon$ and of $\Omega \cap (F + \mathbb{Z}^n)^\varepsilon$, respectively. 

\vspace{.3cm}
We are interested in the case in which $\delta$ is fixed and independent of $\varepsilon$, while $\alpha = \alpha_\varepsilon$ depends on $\varepsilon$ and goes to zero as $\varepsilon \rightarrow 0$.

We will study three different cases, i.e.,
\begin{equation*}
\hspace{-3cm}
\begin{array}{lll}
\vspace{.15 cm}
\hbox{{1. Subcritical regime}}\quad &\dfrac{\alpha_\varepsilon}{\varepsilon} \rightarrow 0 \quad &\hbox{as } \varepsilon \rightarrow 0,\\
\vspace{.15 cm}
\hbox{{2. Supercritical regime}}\quad &\dfrac{\alpha_\varepsilon}{\varepsilon} \rightarrow +\infty \quad &\hbox{as } \varepsilon \rightarrow 0,\\
\hbox{{3. Critical regime}}\quad &\dfrac{\alpha_\varepsilon}{\varepsilon} \rightarrow c \in (0,+\infty) \quad &\hbox{as } \varepsilon \rightarrow 0.
\end{array}
\end{equation*}

\noindent
Before starting the analysis of the different cases we have just described, we state a fundamental result
that will be often used in the following. For the proof we refer to \cite{ACDP}.

\begin{thm}[Existence of an extension operator]\label{extop}
Let $E$ be a periodic, connected, open subset of $\mathbb{R}^n$, with Lipschitz boundary, let 
$\varepsilon > 0$, and set $E^\varepsilon:= \varepsilon\,E$. Given a bounded open set 
$\Omega \subset \mathbb{R}^n$, there exist a linear and continuous extension operator 
$T^\varepsilon : H^1(\Omega\cap E^\varepsilon) \rightarrow H^1_{loc}(\Omega)$ and three
constants $k_0, k_1, k_2 > 0$ depending on $E$ and $n$, but not on $\varepsilon$ and $\Omega$, such that 
\begin{align*}
 T^\varepsilon u =\,& u \,\, \mbox{a.e. in }\Omega\cap E^\varepsilon,\\
\int_{\Omega(\varepsilon k_0)}|T^\varepsilon u|^2 dx \leq& \,k_1 \int_{\Omega\cap E^\varepsilon} |u|^2 dx,\\
\int_{\Omega(\varepsilon k_0)}|D(T^\varepsilon u)|^2 dx \leq&\, k_2 \int_{\Omega\cap E^\varepsilon} |Du|^2 dx,
\end{align*}
for every $u \in H^1(\Omega\cap E^\varepsilon)$. Here we used the notation 
$\Omega(\varepsilon k_0):= \{ x \in \Omega: \mbox{dist}(x,\partial \Omega)>\varepsilon k_0 \}$.
\end{thm}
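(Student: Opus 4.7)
The plan is to first establish the statement at unit scale $\varepsilon = 1$ and then obtain the general case by the change of variables $u_\varepsilon(x) := u(\varepsilon x)$, which transforms $\Omega \cap E^\varepsilon$ into $(\Omega/\varepsilon) \cap E$ and preserves the estimates once the $L^2$ norms of $u$ and $Du$ are scaled simultaneously on both sides. All the work is therefore in constructing, for any bounded open $U \subset \mathbb{R}^n$, a linear continuous operator $T : H^1(U \cap E) \to H^1_{loc}(U)$ with
\begin{equation*}
\int_{U(k_0)} |Tu|^2 \, dx \leq k_1 \int_{U \cap E} |u|^2 \, dx, \qquad \int_{U(k_0)} |D(Tu)|^2 \, dx \leq k_2 \int_{U \cap E} |Du|^2 \, dx,
\end{equation*}
where $k_0, k_1, k_2$ depend only on $E$ and $n$.

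The construction starts from a \emph{local} extension on an enlarged reference cell. Since $E$ is periodic and connected, one can choose an integer $k$ so large that $E \cap \tint{kQ}$ is connected and has uniformly Lipschitz boundary relative to $kQ$. The classical Stein extension theorem applied to this bounded Lipschitz domain yields a linear continuous operator $S_0 : H^1(E \cap kQ) \to H^1(kQ)$ with $S_0 u = u$ on $E \cap kQ$, whose operator norm is a constant $C(E,n)$; by translation invariance of $E$ under $\mathbb{Z}^n$, the same bound holds for $S_h : H^1(E \cap (h+kQ)) \to H^1(h+kQ)$ for every $h \in \mathbb{Z}^n$. I would then patch these local extensions using a smooth $\mathbb{Z}^n$-periodic partition of unity $\varphi_h(x) := \varphi_0(x-h)$ with $\mathrm{supp}\,\varphi_0 \subset \tint{kQ}$, setting $k_0 := k$ and
\begin{equation*}
Tu(x) := \sum_{h \in \mathbb{Z}^n} \varphi_h(x)\, S_h u(x), \qquad x \in U(k_0).
\end{equation*}
For $x \in U(k_0)$ every cube $h + kQ$ meeting $\mathrm{supp}\,\varphi_h \ni x$ lies inside $U$, so $S_h u(x)$ is defined. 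The identity $Tu = u$ on $U(k_0) \cap E$ follows from $S_h u = u$ on $E$ and $\sum_h \varphi_h \equiv 1$. The $L^2$ bound uses finite overlap of the patches and the Stein bound; for the gradient one writes $D(Tu) = \sum_h \varphi_h D(S_h u) + \sum_h D\varphi_h \, S_h u$, controls the first sum directly, and, using that $\sum_h D\varphi_h = 0$, replaces $S_h u$ in the second sum by $S_h u - c_x$ for a suitable local mean $c_x$, then invokes a Poincaré inequality on each patch to bound this term by $\|Du\|_{L^2(E \cap (h+kQ))}$.

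The main obstacle is selecting $k$ and securing uniform constants for both the local extension and the Poincaré inequality on $E \cap kQ$. Connectedness of the \emph{global} periodic set $E$ does not automatically imply connectedness of $E \cap kQ$; one needs a quantitative path-connectedness statement, namely that any two points of $E$ can be joined inside $E$ by a path of diameter at most some $d(E)$ (available from periodicity), and then to choose $k > d(E)$. Once $E \cap kQ$ is connected with Lipschitz boundary, the Stein extension constant and the Poincaré constant depend only on $E$ and $n$, and the remaining estimates and the final rescaling to general $\varepsilon$ are straightforward bookkeeping.
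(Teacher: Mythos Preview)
The paper does not prove this theorem; it is quoted from \cite{ACDP} (Acerbi--Chiad\`o Piat--Dal Maso--Percivale), so there is no in-paper proof to compare against. Your outline is in fact close to the strategy of \cite{ACDP}: reduce to $\varepsilon=1$ by scaling, build a local extension on an enlarged periodicity cell, and patch with a periodic partition of unity, using the cancellation $\sum_h D\varphi_h=0$ together with a Poincar\'e inequality to control the commutator terms.

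One step deserves more care. When you split
\[
D(Tu)=\sum_h\varphi_h\,D(S_hu)+\sum_h D\varphi_h\,S_hu
\]
and say you ``control the first sum directly'', note that a generic $H^1$ extension (including Stein's, as usually stated) only yields $\|D(S_hu)\|_{L^2}\le C\|u\|_{H^1(E\cap(h+kQ))}$, which is \emph{not} bounded by $\|Du\|_{L^2}$ alone. The remedy --- and this is exactly how \cite{ACDP} proceed --- is to arrange that the local extension preserves constants: set $S_hu:=P(u-\bar u_h)+\bar u_h$, where $P$ is any bounded $H^1$ extension and $\bar u_h$ is the mean of $u$ on the connected set $E\cap(h+kQ)$. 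Then $D(S_hu)=D\big(P(u-\bar u_h)\big)$, and the Poincar\'e inequality on $E\cap(h+kQ)$ gives $\|D(S_hu)\|_{L^2}\le C\|Du\|_{L^2(E\cap(h+kQ))}$ as required. With this adjustment (and with your correct identification of the issue of choosing $k$ large enough to make $E\cap kQ$ connected), the rest of your argument goes through and matches the original construction.
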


\begin{rem}
Theorem \ref{extop} applies to a very large class of domains $E$. In particular, it covers the 
case in which $E$ is obtained by removing from the periodicity cell $Q:= (0,1)^n$ a set $B$ 
with Lipschitz boundary such that $\mbox{dist}(B, \partial Q) > 0$, and repeating this structure
by periodicity (see also \cite{Khru}).  
\end{rem}

\section{Subcritical regime: very brittle inclusions} 
\noindent
In this section we assume $\alpha_\varepsilon << \varepsilon$ in the expression of the energy 
$\mathcal{F}^\varepsilon$.

We define the functional $\mathcal{F}^0: L^{2}(\Omega) \rightarrow [0, + \infty]$ as

\begin{equation}\label{defG0}
\mathcal{F}^0(u) =
\begin{cases}
\displaystyle \int_{\Omega}f_{0}(D u)\,dx \quad &\hbox{if } u\in H^1(\Omega),\\
+ \infty \quad  &\hbox{otherwise in } L^2(\Omega),
\end{cases}
\end{equation}
where $f_{0}$ solves the cell problem
\begin{equation}\label{cell0}
f_{0}(\xi) = \min \bigg\{
\int_{Q\setminus (E\cup F)}|\xi + D w(y)|^2 dy : w\in H^1_{\#} (Q\setminus (E\cup F)) \bigg\}.
\end{equation}
The functional $\mathcal{F}^0$ will turn out to be the $\Gamma$-limit of the sequence $(\mathcal{F}^\varepsilon)$ 
in this case, that is for $\alpha_\varepsilon << \varepsilon$.

It is convenient to introduce the auxiliary functionals $\mathcal{G}^{\varepsilon}: L^2(\Omega) \rightarrow [0,+\infty]$ 
defined by 
\begin{equation}\label{defGe}
\mathcal{G}^{\varepsilon}(v) = 
\begin{cases}
\displaystyle \int_{\Omega} a\Big(\frac{x}{\varepsilon}\Big)|\nabla v|^2 dx & 
\hbox{if } v\in H^1(\Omega\setminus\tilde{F}^\varepsilon),\\
+\infty & \hbox{otherwise in } L^2(\Omega),
\end{cases}
\end{equation}
where $a$ is a $Q$-periodic function given by
\begin{equation*}
a(y) =
\begin{cases}
0 \quad & \hbox{in } E,\\
1 \quad & \hbox{in } Q\setminus E.
\end{cases}
\end{equation*}

\noindent
As a preliminary result, we show that $\mathcal{G}^{\varepsilon}$ $\Gamma$-converges to $\mathcal{F}^0$ 
with respect to the strong topology of $L^2_{loc}$.


\begin{thm}\label{GammaG}
The sequence of functionals $(\mathcal{G}^{\varepsilon})$ $\Gamma$-converges to $\mathcal{F}^0$ 
with respect to the strong topology of $L^2_{loc}$.
\end{thm}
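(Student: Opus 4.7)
I would prove the two standard inequalities of $\Gamma$-convergence separately. For the $\Gamma$-liminf, take $v_\varepsilon\to v$ in $L^2_{loc}(\Omega)$ with $\liminf_\varepsilon \mathcal{G}^\varepsilon(v_\varepsilon)<+\infty$; passing to a subsequence, $v_\varepsilon\in H^1(\Omega\setminus\tilde F^\varepsilon)$ and $\int_{\Omega\setminus\tilde E^\varepsilon}|\nabla v_\varepsilon|^2\,dx\le C$. The first task is to obtain $H^1_{loc}(\Omega)$ extensions. Since $E\cup F\subset\subset Q_\delta$, I fix an $n$-dimensional Lipschitz set $B\subset Q_\delta$ containing $E\cup F$ in its interior; on $\Omega\setminus\tilde B^\varepsilon$ the $v_\varepsilon$ are genuine $H^1$ functions on a perforated domain, and the Remark following Theorem \ref{extop} yields extensions $\tilde v_\varepsilon\in H^1_{loc}(\Omega)$ with $H^1$-norm on each compactly contained subset bounded by $C$. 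Weak $H^1_{loc}$-compactness, together with the identity $\tilde v_\varepsilon=v_\varepsilon$ on the positive-density set $\Omega\setminus\tilde B^\varepsilon$ and $v_\varepsilon\to v$ in $L^2_{loc}$, then identifies the weak limit as $v\in H^1(\Omega)$.

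For the liminf inequality $\int_\Omega f_0(\nabla v)\,dx\le \liminf_\varepsilon \mathcal{G}^\varepsilon(v_\varepsilon)$ I would invoke the classical $\Gamma$-convergence theorem for periodic homogenization of quadratic integrands on perforated domains (as in Braides--Defranceschi or via Tartar's oscillating test-function method) applied to the Carath\'eodory density $a(y)|\xi|^2$. The homogenized density emerging from the associated cell formula is exactly $f_0$: the factor $a$ annihilates $E$, and the admissible class $H^1_{\#}(Q\setminus(E\cup F))$ of correctors captures precisely the jumps across the fissures in $F$. For the recovery sequence, given an affine target $v(x)=\xi\cdot x+c$ and a minimizer $w_\xi$ of \eqref{cell0}, extend $w_\xi$ to an $H^1_{\#}(Q\setminus F)$-function and set
\[
v_\varepsilon(x):=\xi\cdot x+c+\varepsilon\, w_\xi(x/\varepsilon).
\]
Then $v_\varepsilon\in H^1(\Omega\setminus\tilde F^\varepsilon)$, $v_\varepsilon\to v$ in $L^2_{loc}$ because $\varepsilon\, w_\xi(x/\varepsilon)\to 0$ in $L^2$, and a change of variables yields $\mathcal{G}^\varepsilon(v_\varepsilon)\to |\Omega|\,f_0(\xi)$. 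Piecewise affine targets are handled cell-by-cell with $O(\varepsilon)$ boundary-layer corrections, and the extension to an arbitrary $v\in H^1(\Omega)$ is obtained by density of piecewise-affine functions combined with Urysohn's diagonal lemma, relying on the continuity (indeed, quadratic form structure $f_0(\xi)=A_0\xi\cdot\xi$) of the limit density.

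The main difficulty lies in the extension step of the liminf: the admissible functions live on the fissured set $\Omega\setminus\tilde F^\varepsilon$, so Theorem \ref{extop} does not apply directly and must be preceded by the Lipschitz thickening $B$ of $E\cup F$. One then has to check that this thickening does not spoil the liminf; this follows from the monotonicity of the $\Gamma$-liminf in the size of $B$, together with the observation that the homogenized density of the thickened problem converges to $f_0$ as $|B\setminus(E\cup F)|\to 0$, so that no information is lost in the double limit.
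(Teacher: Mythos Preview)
Your proposal is correct and follows essentially the same route as the paper. The paper introduces a Lipschitz neighbourhood $F_\eta$ of $F$ (disjoint from $E$), applies the classical non-coercive homogenization result to the density $a_\eta(y)|\xi|^2$ vanishing on $E\cup F_\eta$, uses the extension operator of Theorem~\ref{extop} on $\Omega\setminus(\tilde E^\varepsilon\cup\tilde F_\eta^\varepsilon)$ to compare $\mathcal{G}^\varepsilon$ with $\mathcal{G}_\eta^\varepsilon$, and then lets $\eta\to 0$ so that $f_\eta\uparrow f_0$; the recovery sequence is the same corrector construction you describe. Your thickening $B\supset E\cup F$ plays exactly the role of $E\cup F_\eta$, and your final paragraph (send $|B\setminus(E\cup F)|\to 0$) is the paper's $\eta\to 0$ step. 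One small caution: the sentence about applying classical homogenization directly to $a(y)|\xi|^2$ is only correct once the thickening is in place, since without it the competitors for the cell problem would lie in $H^1_\#(Q)$ rather than $H^1_\#(Q\setminus(E\cup F))$ and you would miss the jumps across $F$; but you address this yourself in the last paragraph.
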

\begin{proof}
Let $\eta > 0$ and let $F_{\eta}$ be a neighbourhood of $F$ with Lipschitz boundary
such that $\hbox{dist}(F_{\eta},F)\leq \eta$ and $\hbox{dist}(F_{\eta},E) > 0$.
Now we define the functionals $\mathcal{G}_{\eta}^{\varepsilon}: L^2(\Omega) 
\rightarrow [0, + \infty]$ as

\begin{equation}\label{defGee}
\mathcal{G}_{\eta}^{\varepsilon}(v) = 
\begin{cases}
\displaystyle \int_{\Omega} a_\eta\Big(\frac{x}{\varepsilon}\Big)|\nabla v|^2 dx & 
\hbox{if } v\in H^1(\Omega),
\\
+\infty & \hbox{otherwise in } L^2(\Omega),
\end{cases}
\end{equation}
where $a_\eta$ is a $Q$-periodic function given by
\begin{equation*}
a_\eta(y) = 
\begin{cases}
0 \quad & \hbox{if } y \in E\cup F_\eta,\\
1 \quad & \hbox{otherwise in } Q.
\end{cases}
\end{equation*}
\noindent
From the standard theory for non-coercive convex homogenization (see e.g. \cite{Att} and \cite{BDf}), we know that 
\begin{equation}\label{Gamconv} 
\Gamma (L^2_{loc}) \mbox{-} \lim_{\varepsilon \rightarrow 0}\mathcal{G}_{\eta}^{\varepsilon} = \mathcal{G}_{\eta},
\end{equation}
where the functional $\mathcal{G}_{\eta} : L^2(\Omega) \rightarrow [0, + \infty]$ is defined as
\begin{equation*}
\mathcal{G}_{\eta}(v) = 
\begin{cases}
\displaystyle \int_{\Omega}f_{\eta}(D v)\,dx \quad &\hbox{if } v\in H^1(\Omega),\\
+ \infty \quad  &\hbox{otherwise in } L^2(\Omega),
\end{cases}
\end{equation*}
and $f_{\eta}$ solves for every $\xi \in \mathbb{R}^n$ the cell problem
\begin{align*}
f_{\eta}(\xi) =& \min \bigg\{
\int_{Q\setminus (E\cup F_{\eta})}|\xi + D w(y)|^2 dy : w\in H^1_{\#} (Q\setminus (E\cup F_\eta))\bigg\}\\
=& \min \bigg\{
\int_{Q\setminus (E\cup F_{\eta})}|\xi + D w(y)|^2 dy : w\in H^1_{\#} (Q)\bigg\}.
\end{align*}
Notice that the last equality is due to classical extension theorems (see, for instance, \cite{Ada}).

\textit{Comparison between $\mathcal{G}^{\varepsilon}$ and $\mathcal{G}_{\eta}^{\varepsilon}$}. 
Let $v^\varepsilon$ be a sequence having equibounded energies $\mathcal{G}^{\varepsilon}$ and such that 
$v^\varepsilon$ converges strongly to some $v$ in $L^2_{loc}$. Then we claim that $v \in H^1(\Omega)$  
and that
\begin{equation}\label{confr2}
\liminf_{\varepsilon \rightarrow 0}\mathcal{G}^{\varepsilon}(v^{\varepsilon}) \geq \mathcal{G}_{\eta}(v). 
\end{equation}
By the fact that $\mathcal{G}^{\varepsilon}(v^{\varepsilon})$ are bounded we deduce in particular that the $H^1(\Omega\setminus (\tilde{E}^{\varepsilon}\cup \tilde{F}_\eta^{\varepsilon}))$ 
norm of $v^\varepsilon$ is equibounded. 

Therefore, Theorem \ref{extop} ensures that for every $\varepsilon > 0$ 
there exists an extension of $v^\varepsilon$, that is a function $\tilde{v}_{\eta}^{\varepsilon} 
\in H^1_{loc}(\Omega)$ such that 
\begin{equation}\label{ext}
\tilde{v}_{\eta}^{\varepsilon} = v^\varepsilon \quad \hbox{in } \Omega\setminus (\tilde{E}^{\varepsilon}\cup \tilde{F}_\eta^{\varepsilon}),
\end{equation}
with the property that for every open Lipschitz set $\Omega'\subset \Omega$ such that 
$\mbox{dist}(\Omega',\partial \Omega) > k_0\varepsilon$, the $H^1(\Omega')$-norm of 
$\tilde{v}_{\eta}^{\varepsilon}$ is equibounded.
Hence there exists a function $v^*\in H^1(\Omega')$ such that
\begin{equation*}
\tilde{v}_{\eta}^{\varepsilon} \rightharpoonup v^* \quad \hbox{weakly in } H^1(\Omega') \quad \mbox{as } \varepsilon \rightarrow 0,
\end{equation*}
and strongly in $L^2(\Omega')$. 
If we now consider an invading sequence of smooth open subsets of $\Omega$, by a diagonal process 
we can extract a subsequence of $(\tilde{v}_{\eta}^{\varepsilon})$ (still denoted by $\tilde{v}_{\eta}^{\varepsilon}$) 
that converges to a function $v^* \in H^1_{loc}(\Omega)$, strongly in $L^2_{loc}(\Omega)$ and 
weakly in $H^1_{loc}(\Omega)$. It is easy to show that 
$v = v^*$ a.e. in $\Omega$. Indeed, using the relation (\ref{ext}) we have that for every open set 
$\Omega'\subset \subset \Omega$
\begin{equation*}
\int_{\Omega'\setminus (\tilde{E}^{\varepsilon}\cup \tilde{F}_\eta^{\varepsilon})}|v - v^*|^2 dx \leq 
\int_{\Omega'\setminus (\tilde{E}^{\varepsilon}\cup \tilde{F}_\eta^{\varepsilon})}|v - v^\varepsilon|^2 dx + 
\int_{\Omega'\setminus (\tilde{E}^{\varepsilon}\cup \tilde{F}_\eta^{\varepsilon})}|\tilde{v}_{\eta}^{\varepsilon} - v^*|^2 dx,
\end{equation*}
from which, by taking the limit as $\varepsilon \rightarrow 0$ we get
\begin{equation*}
\mathcal{L}^n(Q\setminus (E\cup F_{\eta}))\int_{\Omega'}|v - v^*|^2 dx \leq 0.
\end{equation*}
Since this holds for every $\Omega'\subset \subset \Omega$, we obtain $v\in H^1(\Omega)$.

\noindent
Moreover, the extension we have built allows us to write the estimate
\begin{equation}\label{confr}
\mathcal{G}^{\varepsilon}(v^\varepsilon) \geq \mathcal{G}_{\eta}^{\varepsilon}(\tilde{v}_{\eta}^{\varepsilon}),
\end{equation}
and in virtue of the result (\ref{Gamconv}) we get (\ref{confr2}).
It remains to show that on $H^1(\Omega)$ the $\Gamma$-limit of the sequence $\big(\mathcal{G}^{\varepsilon}\big)$ 
is given by $\mathcal{F}^{0}$, where $\mathcal{F}^{0}$ is defined by (\ref{defG0}) and (\ref{cell0}). 

\textit{Liminf inequality}.\\
Let $v\in H^1(\Omega)$ and let $(v^\varepsilon)$ be a sequence having equibounded energy $\mathcal{G}^{\varepsilon}$, such 
that $v^\varepsilon$ converges to $v$ strongly in $L^2$. Then (\ref{confr2}) holds for every $\eta > 0$.

Since $f_{\eta}$ converges increasingly to $f_0$, then $f_0 = \sup_{\eta} f_{\eta} = \lim_{\eta \rightarrow 0} f_{\eta}$. Hence 
$$\sup_{\eta}  \mathcal{G}_{\eta} = \mathcal{F}^{0},$$
and then from (\ref{confr2}) we get the bound
\begin{equation*}
\liminf_{\varepsilon \rightarrow 0}\mathcal{G}^{\varepsilon}(v^{\varepsilon}) \geq \mathcal{F}^{0}(v).
\end{equation*}

\textit{Limsup inequality}.
Let $\xi \in \mathbb{R}^n$ and let us define $v_{\xi}(x):= \xi\cdot x$. Let $w$ be the solution of the 
minimum problem defining $f_0(\xi)$, that is, $w \in H^1_{\#}(Q\setminus(E\cup F))$, and
\begin{equation*}
f_0(\xi) = \int_{Q\setminus (E\cup F)} |\xi + D w|^2 dx.
\end{equation*}
Let $\tilde{w}$ be the periodic extension of $w$ to $\mathbb{R}^n$ and let us define the sequence 
$v^\varepsilon := v_{\xi} + \varepsilon\,\tilde{w} \Big(\dfrac{x}{\varepsilon}\Big)$; clearly it 
converges to $v_{\xi}$ strongly in $L^2$. Moreover
\begin{align*}
\mathcal{G}^{\varepsilon}(v^\varepsilon) =&\, \int_{\Omega} a\Big(\frac{x}{\varepsilon}\Big)|\nabla v^\varepsilon|^2 dx 
= \,\varepsilon^n \int_{\Omega/\varepsilon} a(x)\,|\xi + \nabla \tilde{w}|^2 dx 
=\,\mathcal{L}^n(\Omega) \int_{Q} a(x)\,|\xi + \nabla w|^2 dx + o(\varepsilon)\\
=&\,\, \mathcal{L}^n(\Omega) \int_{Q\setminus E} |\xi + \nabla w|^2 dx + o(\varepsilon) = 
\mathcal{L}^n(\Omega) \,f_0(\xi) + o(\varepsilon)= \mathcal{F}^{0}(v_\xi) + o(\varepsilon),
\end{align*}
where $o(\varepsilon)$ is a small error that disappears when $\varepsilon \rightarrow 0$ and which is 
due to the fact that in general $\Omega/\varepsilon$ is not given by an exact number of unit cubes. 

We have therefore proved the existence of a recovery sequence for affine functions. We can extend the 
result to piecewise affine continuous functions, thanks to the local character of $\mathcal{G}^{\varepsilon}$.  
Then, using the density in $H^1(\Omega)$ of the piecewise affine continuous functions and the continuity
of $\mathcal{F}^{0}$ on $H^1(\Omega)$, we get the claim in the general case.
\end{proof}


\begin{rem}
From the previous result we deduce immediately that $f_0$ is a quadratic form, being $\mathcal{F}^0$ the 
$\Gamma$-limit of the quadratic forms $\mathcal{G}^\varepsilon$. Hence there exists a matrix 
$A_0 \in \mathbb{R}^{n \times n}$ with constant coefficients such that
\begin{equation}\label{defA0}
f_{0}(\xi) = A_{0}\xi \cdot \xi \quad \hbox{for every } \xi \in \mathbb{R}^n.
\end{equation}
\end{rem}

Now we can prove the $\Gamma$-convergence result for the sequence $\mathcal{F}^{\varepsilon}$.


\begin{thm}[Bound from below]\label{liminf1}
Let $u\in L^2(\Omega)$ and let $(u^\varepsilon)$ be a sequence with equibounded energy $\mathcal{F}^\varepsilon$
such that $u^\varepsilon\rightarrow u$ strongly in $L^2$. Then $u\in H^1(\Omega)$ and 
\begin{equation}\label{impbound}
\liminf_{\varepsilon\rightarrow 0}\mathcal{F}^{\varepsilon}(u^{\varepsilon})
\geq \mathcal{F}^{0}(u).
\end{equation}
\end{thm}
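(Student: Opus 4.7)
The plan is to reduce the claim to the $\Gamma$-convergence result already established for the auxiliary functionals $\mathcal G^\varepsilon$ in Theorem \ref{GammaG}. From $\mathcal F^\varepsilon(u^\varepsilon) \leq C$ one reads off that $\int_\Omega |\nabla u^\varepsilon|^2\,dx \leq C$ and that $S_{u^\varepsilon} \subset \tilde E^\varepsilon \cup \tilde F^\varepsilon$, since otherwise the surface term would be $+\infty$. I will construct a modified sequence $\hat u^\varepsilon \in H^1(\Omega\setminus \tilde F^\varepsilon)$ (hence admissible for $\mathcal G^\varepsilon$) enjoying
\begin{equation*}
\mathcal G^\varepsilon(\hat u^\varepsilon) \leq \mathcal F^\varepsilon(u^\varepsilon), \qquad \hat u^\varepsilon \to u \mbox{ strongly in } L^2_{loc}(\Omega).
\end{equation*}
Given such a sequence, the liminf inequality in Theorem \ref{GammaG} yields $u \in H^1(\Omega)$ and $\mathcal F^0(u) \leq \liminf \mathcal G^\varepsilon(\hat u^\varepsilon) \leq \liminf \mathcal F^\varepsilon(u^\varepsilon)$.

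\textbf{Construction of $\hat u^\varepsilon$.} The assumption $E\cap F = \emptyset$, together with $E, F \subset\subset Q_\delta$, forces the uniform separation $\mathrm{dist}(E^\varepsilon_k, F^\varepsilon_k) \geq c\varepsilon$ in each cell. Consequently, one can pick an open neighbourhood $U^\varepsilon_k$ of $\partial E^\varepsilon_k$, contained in $Q^\varepsilon_k \setminus (E^\varepsilon_k \cup F^\varepsilon_k)$, on which $u^\varepsilon$ belongs to $H^1$. Using the local extension construction underlying Theorem \ref{extop} applied to the periodic matrix $V := \mathbb R^n \setminus (\overline E + \mathbb Z^n)$ (which is periodic, connected, open, with Lipschitz boundary), I extend $u^\varepsilon$ across each $E^\varepsilon_k$ to an $H^1$ function on $E^\varepsilon_k$ matching the $H^1$ trace of $u^\varepsilon$ on $\partial E^\varepsilon_k$. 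Setting $\hat u^\varepsilon := u^\varepsilon$ on $\Omega\setminus \tilde E^\varepsilon$ and using this extension inside each $E^\varepsilon_k$ produces a function whose jump set is contained in $\tilde F^\varepsilon$, so $\hat u^\varepsilon \in H^1(\Omega\setminus \tilde F^\varepsilon)$. Because the weight $a(\cdot/\varepsilon)$ vanishes on $\tilde E^\varepsilon$, the modification is invisible to $\mathcal G^\varepsilon$:
\begin{equation*}
\mathcal G^\varepsilon(\hat u^\varepsilon) = \int_{\Omega\setminus \tilde E^\varepsilon}|\nabla u^\varepsilon|^2\,dx \leq \int_\Omega |\nabla u^\varepsilon|^2\,dx \leq \mathcal F^\varepsilon(u^\varepsilon).
\end{equation*}

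\textbf{$L^2$-convergence and conclusion.} The $L^2$- and gradient-estimates in Theorem \ref{extop} (applied to the matrix $V$) give, on every $\Omega'\subset\subset\Omega$ with $\mathrm{dist}(\Omega',\partial\Omega)>k_0\varepsilon$, the uniform bound $\|\hat u^\varepsilon\|_{H^1(\Omega')} \leq C$. A diagonal argument over an invading sequence of smooth subdomains then extracts a subsequence such that $\hat u^\varepsilon \rightharpoonup u^*$ weakly in $H^1_{loc}(\Omega)$ and strongly in $L^2_{loc}(\Omega)$ for some $u^* \in H^1_{loc}(\Omega)$. To identify $u^*=u$, I repeat the trick from Theorem \ref{GammaG}: on any $\Omega'\subset\subset\Omega$,
\begin{equation*}
\int_{\Omega'\setminus\tilde E^\varepsilon} |u-u^*|^2\,dx \leq 2\int_{\Omega'}|u-u^\varepsilon|^2\,dx + 2\int_{\Omega'}|\hat u^\varepsilon - u^*|^2\,dx \to 0,
\end{equation*}
while the left-hand side converges to $\mathcal L^n(Q\setminus E)\int_{\Omega'} |u-u^*|^2\,dx$, forcing $u=u^*$ a.e. in $\Omega$. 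Hence $u\in H^1_{loc}(\Omega)$; the uniform bound on $\|\nabla \hat u^\varepsilon\|_{L^2(\Omega')}$ combined with weak lower semicontinuity yields $\nabla u \in L^2(\Omega)$, so $u\in H^1(\Omega)$. Applying the liminf inequality of Theorem \ref{GammaG} to $\hat u^\varepsilon$ delivers the desired bound (\ref{impbound}).

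\textbf{Main obstacle.} The delicate point is the construction of $\hat u^\varepsilon$: one must ensure that the extension across each $E^\varepsilon_k$ genuinely leaves $u^\varepsilon$ untouched near $F^\varepsilon_k$, so that the jumps on $\tilde F^\varepsilon$ are preserved, and that Theorem \ref{extop} produces $L^2$-control tight enough to upgrade the given $L^2$-convergence of $u^\varepsilon$ into $L^2_{loc}$-convergence of $\hat u^\varepsilon$ across the set $\tilde E^\varepsilon$, which has positive limiting Lebesgue measure.
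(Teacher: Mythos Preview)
Your strategy is exactly the paper's: extend $u^\varepsilon$ across $\tilde E^\varepsilon$ to obtain a sequence $\hat u^\varepsilon$ admissible for $\mathcal G^\varepsilon$, verify $L^2_{loc}$-convergence to $u$, and invoke Theorem~\ref{GammaG}. The energy comparison $\mathcal G^\varepsilon(\hat u^\varepsilon)\le \mathcal F^\varepsilon(u^\varepsilon)$ is fine. The gap is in the $L^2_{loc}$-convergence step.

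You assert the uniform bound $\|\hat u^\varepsilon\|_{H^1(\Omega')}\le C$ and then extract a weak $H^1_{loc}(\Omega)$-limit $u^\ast$. This is not correct: your extension fills in $\tilde E^\varepsilon$ but, by design, leaves the jumps on $\tilde F^\varepsilon$ untouched. Hence $\hat u^\varepsilon\in H^1(\Omega'\setminus\tilde F^\varepsilon)$, not $H^1(\Omega')$. Correspondingly, Theorem~\ref{extop} with the matrix $V=\mathbb R^n\setminus(\overline E+\mathbb Z^n)$ cannot be invoked as stated, since its hypothesis requires $u^\varepsilon\in H^1(\Omega\cap V^\varepsilon)=H^1(\Omega\setminus\tilde E^\varepsilon)$, which fails precisely because of the jumps along $\tilde F^\varepsilon$. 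Without a genuine $H^1_{loc}(\Omega)$ bound you have no compactness and no $u^\ast$, so the identification $u=u^\ast$ and the conclusion $u\in H^1(\Omega)$ are left hanging.

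The paper closes exactly this gap. After constructing $\tilde u^\varepsilon:=\hat u^\varepsilon\in H^1_{loc}(\Omega\setminus\tilde F^\varepsilon)$, it fixes $\eta>0$, takes an $\eta$-neighbourhood $\tilde F^\varepsilon_\eta$ of $\tilde F^\varepsilon$ (disjoint from $\tilde E^\varepsilon$ by the separation of $E$ and $F$), and applies Theorem~\ref{extop} a \emph{second} time to $(\tilde u^\varepsilon)_{|\Omega\setminus\tilde F^\varepsilon_\eta}\in H^1(\Omega\setminus\tilde F^\varepsilon_\eta)$, producing $\tilde u^\varepsilon_\eta\in H^1_{loc}(\Omega)$ with uniform bounds. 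For this doubly extended sequence one does get $\tilde u^\varepsilon_\eta\rightharpoonup u$ in $H^1_{loc}(\Omega)$, hence $u\in H^1(\Omega)$. Since $\tilde u^\varepsilon=\tilde u^\varepsilon_\eta$ on $\tilde E^\varepsilon$, the $L^2$-splitting
\[
\int_{\Omega'}|\tilde u^\varepsilon-u|^2\,dx
=\int_{\Omega'\setminus\tilde E^\varepsilon}|u^\varepsilon-u|^2\,dx
+\int_{\tilde E^\varepsilon\cap\Omega'}|\tilde u^\varepsilon_\eta-u|^2\,dx
\]
then gives $\tilde u^\varepsilon\to u$ in $L^2_{loc}$, and Theorem~\ref{GammaG} finishes the proof. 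You correctly flagged the passage across $\tilde F^\varepsilon$ as the ``main obstacle''; the second extension is the missing ingredient.
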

\begin{proof}
Let $u\in L^{2}(\Omega)$ and  let $(u^{\varepsilon})$ be a sequence converging to $u$
strongly in $L^2(\Omega)$ and such that $\mathcal{F}^{\varepsilon}(u^{\varepsilon}) \leq c < + \infty$.
From the definition of the functional this implies in particular that the $H^{1}(\Omega \setminus (\tilde{E}^{\varepsilon}\cup \tilde{F}^{\varepsilon}))$ norm of $(u^{\varepsilon})$ is equibounded.
By Theorem \ref{extop} it is possible to extend every $u^{\varepsilon}$ to a new function 
$\tilde{u}^{\varepsilon} \in H^1_{loc}(\Omega \setminus \tilde{F}^{\varepsilon})$ 
in such a way that for every open Lipschitz set $\Omega'\subset \Omega$ the $H^1$-norm of 
$\tilde{u}^{\varepsilon}$ in $\Omega'\setminus \tilde{F}^{\varepsilon}$ is equibounded.

We claim that $\tilde{u}^{\varepsilon} \rightarrow u$ strongly in $L^2(\Omega')$.
As first step, fix $\eta > 0$ and define for every $\varepsilon > 0$ an extension 
$\tilde{u}^{\varepsilon}_{\eta} \in H^1_{loc}(\Omega')$ of $(\tilde{u}^{\varepsilon})_{|\Omega' \setminus\tilde{F}^{\varepsilon}_{\eta}}$, where $\tilde{F}^{\varepsilon}_{\eta}$ denotes an 
$\eta$-neighborhood of $\tilde{F}^{\varepsilon}$ defined in the usual way. 
As in Theorem \ref{GammaG} it turns out that $\tilde{u}^{\varepsilon}_{\eta} \rightharpoonup u$ 
weakly in $H^1_{loc}(\Omega')$ and that $u \in H^1(\Omega')$. Moreover, 
\begin{align}\label{convconv}
\int_{\Omega'}|\tilde{u}^{\varepsilon} - u|^2 dx =&\, \int_{\Omega'\setminus \tilde{E}^\varepsilon}  |\tilde{u}^{\varepsilon} - u|^2 dx + \int_{\tilde{E}^\varepsilon}  |\tilde{u}^{\varepsilon} - u|^2 dx \nonumber\\
=&\, \int_{\Omega'\setminus \tilde{E}^\varepsilon}  |u^{\varepsilon} - u|^2 dx + \int_{\tilde{E}^\varepsilon}  |\tilde{u}^{\varepsilon}_\eta - u|^2 dx \nonumber\\
\leq&\, \int_{\Omega'}  |u^{\varepsilon} - u|^2 dx + \int_{\Omega'}  |\tilde{u}^{\varepsilon}_\eta - u|^2 dx, 
\end{align}
and since the right-hand side in (\ref{convconv}) converges to zero as $\varepsilon \rightarrow 0$, we can conclude that 
\begin{equation*}
\tilde{u}^{\varepsilon} \rightarrow u \quad \hbox{strongly in } L^2(\Omega'). 
\end{equation*}
Since this holds for every $\Omega'\subset \Omega$, we have that the convergence is indeed 
strong in $L^2_{loc}(\Omega)$ and that $u \in H^1(\Omega)$.
Using the sequence $\tilde{u}^{\varepsilon}$ we can write 
\begin{equation}\label{boundfb}
\mathcal{F}^{\varepsilon}(u^{\varepsilon}) \geq \mathcal{G}^{\varepsilon}(\tilde{u}^{\varepsilon}),
\end{equation}
where the functional $\mathcal{G}^{\varepsilon}$ is defined as in (\ref{defGe}).
Hence by Theorem \ref{GammaG} we obtain (\ref{impbound}).
\end{proof}

\begin{rem}\label{remark}
We underline that the bound (\ref{impbound}) holds true independently of the rate of convergence of  
$\alpha_\varepsilon$ and implies in particular that the $\Gamma$-limit of $\mathcal{F}^{\varepsilon}$ is
finite only in $H^1(\Omega)$.
\end{rem}


\begin{thm}[Bound from above]\label{bfab}
For every  $u\in H^1(\Omega)$ there exists a sequence $(u^\varepsilon)\subset
SBV^2(\Omega)$, with $S_{u}\subset \tilde{E}^\varepsilon \cup \tilde{F}^\varepsilon$, such that 
\begin{align*}
&(i) \quad u^\varepsilon \rightarrow u \quad \hbox{strongly in } L^2_{loc}(\Omega), \\
&(ii) \quad \lim_{\varepsilon\rightarrow 0}\mathcal{F}^{\varepsilon}(u^{\varepsilon}) = \mathcal{F}^{0}(u).
\end{align*}
\end{thm}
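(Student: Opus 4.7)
The plan is to mirror the three-step construction of the limsup inequality in Theorem \ref{GammaG}: first produce an explicit recovery sequence for the linear functions $v_\xi(x):=\xi\cdot x$, then extend to piecewise affine continuous functions via the local character of $\mathcal{F}^\varepsilon$, and finally reach a general $u\in H^1(\Omega)$ by density and by the continuity of $\mathcal{F}^0$ on $H^1(\Omega)$. The essential new feature, compared with the proof of Theorem \ref{GammaG}, is that in order to hit the value $\mathcal{F}^0(v_\xi)=\mathcal{L}^n(\Omega)\,f_0(\xi)$ one must suppress the bulk energy inside the inclusions $\tilde{E}^\varepsilon$; this is achieved by taking $u^\varepsilon$ to be piecewise constant on $\tilde{E}^\varepsilon$, and the resulting surface cost is negligible precisely because we are in the subcritical regime $\alpha_\varepsilon\ll\varepsilon$.

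Fix $\xi\in\mathbb{R}^n$ and let $w\in H^1_\#(Q\setminus(E\cup F))$ be a minimizer of the cell problem (\ref{cell0}); denote by $\tilde{w}$ its $\mathbb{Z}^n$-periodic extension to $\mathbb{R}^n\setminus(\tilde{E}\cup\tilde{F})$. I would define
\begin{equation*}
u^\varepsilon(x):=
\begin{cases}
v_\xi(x)+\varepsilon\,\tilde{w}(x/\varepsilon) & \text{if } x\in\Omega\setminus(\tilde{E}^\varepsilon\cup\tilde{F}^\varepsilon),\\
v_\xi(x^\varepsilon_k) & \text{if } x\in E^\varepsilon_k,\ k=1,\dots,N(\varepsilon),
\end{cases}
\end{equation*}
where $x^\varepsilon_k$ is the center of $Q^\varepsilon_k$ (on the remainders $R_E(\varepsilon)$ and $R(\varepsilon)$ one can proceed in the obvious analogous way without affecting the limits). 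By construction $u^\varepsilon\in SBV^2(\Omega)$ and $S_{u^\varepsilon}\subset\partial\tilde{E}^\varepsilon\cup\tilde{F}^\varepsilon\subset\tilde{E}^\varepsilon\cup\tilde{F}^\varepsilon$. The convergence $u^\varepsilon\to v_\xi$ in $L^2_{loc}(\Omega)$ is immediate: on every $E^\varepsilon_k$ one has the pointwise bound $|u^\varepsilon-v_\xi|\le|\xi|\,\varepsilon$, while $\varepsilon\,\tilde{w}(\cdot/\varepsilon)$ is $O(\varepsilon)$ in $L^2(\Omega)$ by the usual scaling.

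For the energy, the bulk term simplifies to
\begin{equation*}
\int_\Omega|\nabla u^\varepsilon|^2\,dx=\int_{\Omega\setminus(\tilde{E}^\varepsilon\cup\tilde{F}^\varepsilon)}\big|\xi+\nabla\tilde{w}(x/\varepsilon)\big|^2\,dx,
\end{equation*}
because $\nabla u^\varepsilon\equiv 0$ on each $E^\varepsilon_k$. The standard Riemann--Lebesgue argument for the $Q$-periodic integrand $y\mapsto a(y)\,|\xi+\nabla\tilde{w}(y)|^2$ (the same already used in the limsup of Theorem \ref{GammaG}) yields
\begin{equation*}
\int_{\Omega\setminus(\tilde{E}^\varepsilon\cup\tilde{F}^\varepsilon)}\big|\xi+\nabla\tilde{w}(x/\varepsilon)\big|^2\,dx \longrightarrow \mathcal{L}^n(\Omega)\!\int_{Q\setminus(E\cup F)}\!|\xi+\nabla w|^2\,dy=\mathcal{L}^n(\Omega)\,f_0(\xi).
\end{equation*}
On the other hand, both $\mathcal{H}^{n-1}(\partial\tilde{E}^\varepsilon\cap\Omega)$ and $\mathcal{H}^{n-1}(\tilde{F}^\varepsilon\cap\Omega)$ are of order $N(\varepsilon)\,\varepsilon^{n-1}\sim 1/\varepsilon$, so the surface term is bounded by $C\,\alpha_\varepsilon/\varepsilon$, which vanishes by the subcritical assumption. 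Therefore $\mathcal{F}^\varepsilon(u^\varepsilon)\to\mathcal{F}^0(v_\xi)$, so both (i) and (ii) hold whenever $u$ is affine.

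For a piecewise affine continuous $u$ one can repeat the construction on each affine piece, cutting off the oscillating corrector in an $\varepsilon$-neighbourhood of the macroscopic interfaces in order to preserve continuity across them and to confine $S_{u^\varepsilon}$ to $\tilde{E}^\varepsilon\cup\tilde{F}^\varepsilon$; the energy lost in the cut-off region is infinitesimal because this region has vanishing measure. Density of piecewise affine continuous functions in $H^1(\Omega)$, together with the continuity of $\mathcal{F}^0$ in the $H^1$-topology and a standard diagonal argument, then upgrades the recovery sequence to an arbitrary $u\in H^1(\Omega)$. The main technical obstacle I foresee is precisely the gluing step across the macroscopic interfaces between affine pieces: one has to verify that the cut-off procedure neither creates jumps outside $\tilde{E}^\varepsilon\cup\tilde{F}^\varepsilon$ nor spoils the sharp asymptotic energy computation above, the key input being once again that $\alpha_\varepsilon/\varepsilon\to 0$.
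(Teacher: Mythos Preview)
Your argument is correct, but it takes a different route from the paper. The paper does \emph{not} redo the affine $\to$ piecewise affine $\to$ density chain; instead it leverages Theorem~\ref{GammaG} as a black box. Given an arbitrary $u\in H^1(\Omega)$, the paper takes a recovery sequence $(v^\varepsilon)$ for $\mathcal{G}^\varepsilon$ (which already exists for general $u$), applies the extension operator of Theorem~\ref{extop} to obtain $\tilde v^\varepsilon\in H^1_{loc}(\Omega\setminus\tilde F^\varepsilon)$, and then simply replaces $\tilde v^\varepsilon$ on each $E_k^\varepsilon$ by its \emph{mean value} there. Poincar\'e's inequality on $E_k^\varepsilon$ gives the $L^2_{loc}$ convergence, the bulk term is unchanged (it equals $\mathcal{G}^\varepsilon(\tilde v^\varepsilon)\to\mathcal{F}^0(u)$), and the surface term is controlled exactly as you do by $C\alpha_\varepsilon/\varepsilon\to 0$.

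The advantage of the paper's route is that it completely bypasses the gluing step you correctly flag as the main technical obstacle: no cut-off near macroscopic interfaces, no diagonal argument, and no separate treatment of affine pieces are needed, because the density argument has already been absorbed into Theorem~\ref{GammaG}. Your explicit construction (constant value $v_\xi(x_k^\varepsilon)$ on $E_k^\varepsilon$) is essentially the paper's mean-value replacement specialized to the affine case; the paper's version with means works verbatim for any $u$, which is what makes the shortcut possible.
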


\begin{proof}
Let $u \in H^1(\Omega)$. The $\Gamma$-convergence result in Theorem \ref{GammaG} guarantees the 
existence of a sequence $(v^\varepsilon)\subset L^2(\Omega)$ such that
\begin{equation*}
\begin{cases}
v^\varepsilon \rightarrow u \quad \hbox{strongly in } L^2_{loc}(\Omega),\\
\mathcal{G}^{\varepsilon}(v^\varepsilon) \rightarrow \mathcal{F}^0(u).
\end{cases}
\end{equation*}
A recovery sequence for $\mathcal{F}^{\varepsilon}$ will be constructed by modifying 
properly $(v^\varepsilon)$.

\noindent
Notice that, by the definition of $\mathcal{G}^{\varepsilon}$, it turns out that
the $H^1(\Omega\setminus (\tilde{E}^{\varepsilon} \cup \tilde{F}^{\varepsilon}))$ norm of 
$v^\varepsilon$ is equibounded. We split the proof into three steps.

\textit{First step}. There exists a sequence $(\tilde{v}^\varepsilon)\subset H^1_{loc}(\Omega\setminus \tilde{F}^{\varepsilon})$ 
such that  
\begin{align*}
&(1)\quad \tilde{v}^\varepsilon = v^\varepsilon \quad \hbox{a.e. in } \Omega\setminus (\tilde{E}^{\varepsilon}\cup\tilde{F}^{\varepsilon}) ,\\
&(2)\quad ||\,\tilde{v}^\varepsilon ||_{H^1(\Omega'\setminus \tilde{F}^{\varepsilon})} \leq 
c\,||\,v^\varepsilon||_{H^1(\Omega\setminus (\tilde{E}^{\varepsilon} \cup \tilde{F}^{\varepsilon}))},
\end{align*}
for every open Lipschitz set $\Omega'\subset \Omega$ such that $\mbox{dist}(\Omega',\partial \Omega) > k_0\varepsilon$,
where the constant $c$ is independent of $\varepsilon$. This can be done exactly as in Theorem \ref{liminf1}. 

\textit{Second step}. The sequence $(\tilde{v}^\varepsilon)\subset H^1_{loc}(\Omega\setminus \tilde{F}^{\varepsilon})$ 
of the previous step is still a recovery sequence for $\mathcal{G}^{\varepsilon}$, i.e.,
\begin{align*}
&(3)\quad \tilde{v}^\varepsilon \rightarrow u \quad \hbox{strongly in } L^2_{loc}(\Omega),\\
&(4)\quad \mathcal{G}^{\varepsilon}(\tilde{v}^\varepsilon) \rightarrow \mathcal{F}^0(u).
\end{align*} 
Property (3) can be proved as in Theorem \ref{liminf1} while condition (4) follows immediately, 
since $\mathcal{G}^{\varepsilon}$ depends only on the behaviour of its argument in 
$\Omega\setminus \tilde{E}^\varepsilon$ and $v^\varepsilon$ and $\tilde{v}^\varepsilon$ agree on that set.

\textit{Third step}. There exists a sequence $(u^\varepsilon)\subset
SBV^2(\Omega)$ with $S_{u^\varepsilon}\subset \tilde{E}^\varepsilon\cup \tilde{F}^\varepsilon$ 
such that $(i)$ and $(ii)$ are satisfied.
Define 
\begin{equation*}
u^{\varepsilon}(x) := 
\begin{cases}
\tilde{v}^\varepsilon(x)\quad & \hbox{if } x\in \Omega\setminus \tilde{E}^{\varepsilon},\\
\tilde{v}^\varepsilon_k \quad & \hbox{if } x\in E_k^{\varepsilon},
\end{cases}
\end{equation*}
where $\tilde{v}^\varepsilon_k$ is the mean value of $\tilde{v}^\varepsilon$ over 
$E_k^{\varepsilon}$, for $k = 1,\dots,N(\varepsilon)$.
Then, for every $\Omega'\subset \Omega$ 
\begin{equation*}
||u^\varepsilon - \tilde{v}^\varepsilon||^2_{L^2(\Omega')} = 
\sum_{k=1}^{N(\varepsilon)}\int_{E_k^{\varepsilon}}|\tilde{v}^\varepsilon(x) - \tilde{v}_k^\varepsilon|^2 dx.
\end{equation*}
By Poincar\'e inequality, for every $k$ we have 
\begin{equation*}
\int_{E_k^{\varepsilon}}|\tilde{v}^\varepsilon(x) - \tilde{v}_k^\varepsilon|^2 dx 
\leq c \,(\mathcal{L}^n (E_k^{\varepsilon}))^{2/n} \int_{E_k^{\varepsilon}}|D \tilde{v}^\varepsilon(x)|^2 dx,
\end{equation*}
and $\mathcal{L}^n (E_k^{\varepsilon})$ is of order $\varepsilon^n$, hence 
\begin{equation*}
||\,u^\varepsilon - \tilde{v}^\varepsilon||^2_{L^2(\Omega')}  \leq c \,\varepsilon^2 
\sum_{k=1}^{N(\varepsilon)}\int_{E_k^{\varepsilon}}|D \tilde{v}^\varepsilon(x)|^2 dx 
\leq c \,\varepsilon^2 \int_{\Omega'}|D \tilde{v}^\varepsilon(x)|^2 dx \leq c \,\varepsilon^2. 
\end{equation*}
This entails that $u^\varepsilon \rightarrow u$ strongly in $L^2(\Omega')$ and hence 
strongly in $L^2_{loc}(\Omega)$. Therefore, $(i)$ is proved. 

Now, we prove $(ii)$.
Let us write explicitly
the expression of $\mathcal{F}^{\varepsilon}(u^{\varepsilon})$, 
\begin{align*}
\mathcal{F}^{\varepsilon}(u^{\varepsilon}) =&\,\int_{\Omega} |\nabla u^{\varepsilon}|^2 dx + \int_{S_{u^{\varepsilon}}}f_{\alpha_\varepsilon}
\Big(\frac{x}{\varepsilon}\Big) \,d \mathcal{H}^{n - 1}(x) = 
\int_{\Omega\setminus\tilde{E}^{\varepsilon}} |\nabla u^{\varepsilon}|^2 dx + 
\alpha_\varepsilon \mathcal{H}^{n - 1}(S_{u^{\varepsilon}}) \\
= & \, \int_{\Omega \setminus \tilde{E}^{\varepsilon}} |D \tilde{v}^{\varepsilon}|^2 dx + 
\alpha_\varepsilon \mathcal{H}^{n - 1}(S_{u^{\varepsilon}})
= \mathcal{G}^{\varepsilon}(\tilde{v}^{\varepsilon}) + 
\alpha_\varepsilon \mathcal{H}^{n - 1}(S_{u^{\varepsilon}}\cap \tilde{E}^{\varepsilon}).
\end{align*}
Notice that if we show that $\alpha_\varepsilon 
\mathcal{H}^{n - 1}(S_{u^{\varepsilon}}\cap \tilde{E}^{\varepsilon}) = o(\varepsilon)$ 
as $\varepsilon \rightarrow 0$, then $(ii)$ follows directly. Actually, we have
\begin{equation*}
\alpha_\varepsilon \mathcal{H}^{n - 1}(S_{u^{\varepsilon}}\cap \tilde{E}^{\varepsilon}) \leq 
\alpha_\varepsilon N(\varepsilon) \,P(E^\varepsilon,Q^\varepsilon) = 
C\,\alpha_\varepsilon \frac{1}{\varepsilon^n}\,\varepsilon^{n - 1}  
=  C\,\frac{\alpha_\varepsilon}{\varepsilon},
\end{equation*}
and $\frac{\alpha_\varepsilon}{\varepsilon}= o(\varepsilon)$ as $\varepsilon \rightarrow 0$ by assumption.
\end{proof}


\section{Supercritical regime: stiffer inclusions}

In this Section we consider the case $\alpha_\varepsilon >> \varepsilon$. We have 
previously shown that for $\alpha_\varepsilon << \varepsilon$ configurations exhibiting a high 
number of discontinuities are favoured by the energy. We will prove that on the contrary in this regime the 
energy penalizes the presence of jumps in the displacements.

Before studying this case, we state and prove some technical lemmas
which will be used in the following.

\begin{lem}\label{tech}
Let us consider a sequence of measurable functions $a_k:\Omega \rightarrow \mathbb{R}_+$ such that
\begin{equation*}
a_k \rightarrow a \quad \hbox{in measure}.
\end{equation*}
Then, for every $u\in L^2(\Omega)$ and for every sequence $(u_{k})\subset L^2(\Omega)$ such that 
$$u_k \rightharpoonup u \quad \hbox{weakly in } L^{2}(\Omega),$$ 
it turns out that 
\begin{equation*}
\int_{\Omega}a u^2 dx \leq \liminf_{k \rightarrow + \infty} \int_{\Omega}a_k u_k^2 dx.
\end{equation*}
\end{lem}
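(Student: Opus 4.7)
The plan is to combine the weak lower semicontinuity of the $L^2$-norm with a truncation argument to handle the fact that the weights $a_k$ and the weak-convergence in $u_k$ interact badly in general (unbounded weights might produce no useful estimate, and weak $\times$ weak passage to the limit in a product fails).

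First I would reduce to the case in which the liminf is actually a finite limit, by extracting a realizing subsequence; along this subsequence, convergence in measure allows me to extract a further subsequence along which $a_k \to a$ almost everywhere in $\Omega$. Then, for each parameter $M > 0$, I would introduce the truncated weights $a_k^M := a_k \wedge M$ and $a^M := a \wedge M$. Since $a_k \ge a_k^M \ge 0$ and $u_k^2 \ge 0$, we have
\begin{equation*}
\int_\Omega a_k u_k^2 \, dx \;\ge\; \int_\Omega a_k^M u_k^2 \, dx \;=\; \bigl\|\sqrt{a_k^M}\, u_k\bigr\|_{L^2(\Omega)}^2 .
\end{equation*}

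The key step is then to show that $\sqrt{a_k^M}\, u_k \rightharpoonup \sqrt{a^M}\, u$ weakly in $L^2(\Omega)$. For any test function $\varphi \in L^2(\Omega)$, I would check that $\varphi \sqrt{a_k^M} \to \varphi \sqrt{a^M}$ strongly in $L^2(\Omega)$: the integrand converges a.e.\ and is dominated by $4M|\varphi|^2 \in L^1(\Omega)$, so dominated convergence applies. Pairing this strong convergence with the weak convergence $u_k \rightharpoonup u$ gives
\begin{equation*}
\int_\Omega \varphi\, \sqrt{a_k^M}\, u_k \, dx \;\longrightarrow\; \int_\Omega \varphi\, \sqrt{a^M}\, u \, dx ,
\end{equation*}
which is the desired weak convergence. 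By the lower semicontinuity of the squared $L^2$-norm with respect to weak convergence,
\begin{equation*}
\liminf_{k\to\infty} \int_\Omega a_k u_k^2 \, dx \;\ge\; \liminf_{k\to\infty} \bigl\|\sqrt{a_k^M}\, u_k\bigr\|_{L^2}^2 \;\ge\; \int_\Omega a^M u^2 \, dx .
\end{equation*}

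Finally, letting $M \to +\infty$, monotone convergence yields $\int_\Omega a^M u^2 \, dx \to \int_\Omega a u^2 \, dx$, which completes the proof. The main obstacle I anticipate is precisely the one that the truncation circumvents: without a uniform bound on the $a_k$, one cannot turn the mere a.e.\ (equivalently, in measure) convergence of the weights into a strong $L^2$ convergence of the coefficients multiplying the test function, so the weak passage to the limit would be unavailable; once truncation is introduced, the rest is a standard weak-times-strong and monotone convergence argument.
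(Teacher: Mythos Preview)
Your proof is correct. The approach differs from the paper's in the way you cope with the possible unboundedness of the weights. The paper proceeds by an Egorov-type localization: it restricts to a set $\Omega\setminus D_\eta$ (of nearly full measure) on which the convergence $a_k\to a$ is uniform, writes $a_k\ge a-1/i$ there, and then appeals directly to the weak lower semicontinuity of the convex functional $u\mapsto\int_{\Omega\setminus D_\eta} a\,u^2\,dx$, finishing by letting $\eta\to 0$. You instead truncate the weights at level $M$, use the square-root trick to upgrade the pair $(a_k^M,u_k)$ to a single weakly convergent sequence $\sqrt{a_k^M}\,u_k$, and invoke only the weak lower semicontinuity of the $L^2$-norm, sending $M\to\infty$ at the end by monotone convergence. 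Both routes are standard; yours is slightly more self-contained (it avoids checking weak lower semicontinuity of $\int a\,u^2$ for an unbounded, merely measurable weight $a$), while the paper's is a bit shorter because it never introduces the auxiliary sequence $\sqrt{a_k^M}\,u_k$.
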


\begin{proof}
Let $u\in L^2(\Omega)$ and $u_k \rightharpoonup u$ weakly in $L^{2}(\Omega)$.

We can extract a subsequence $(k_j)$ such that
\begin{equation}\label{liminflim}
\liminf_{k \rightarrow + \infty} \int_{\Omega}a_k u_k^2 dx = 
\lim_{j \rightarrow + \infty} \int_{\Omega}a_{k_j} u^2_{k_j} dx.
\end{equation}
From the convergence in measure of $a_k$ to $a$ we deduce that for every $\eta > 0$ 
there exists a measurable set $D_{\eta}\subset \Omega$ such that 
$\mathcal{L}^n(D_{\eta}) < \eta$ and 
\begin{equation*}
|a_{k_{j_{i}}} - a| \leq \frac{1}{i} \quad \hbox{a.e. on } \Omega \setminus D_{\eta}
\end{equation*}
for a suitable subsequence $(a_{k_{j_{i}}})$ of $(a_{k_j})$. By (\ref{liminflim}) we get
\begin{align*}
\liminf_{k \rightarrow + \infty} \int_{\Omega}a_k u_k^2 dx = &\,
\lim_{i \rightarrow + \infty}\int_{\Omega}a_{k_{j_i}} u_{k_{j_i}}^2 dx \geq 
\lim_{i \rightarrow + \infty}\int_{\Omega\setminus D_{\eta}}a_{k_{j_i}} u_{k_{j_i}}^2 dx\\ 
&\,\geq \liminf_{i \rightarrow + \infty} \bigg\{\int_{\Omega\setminus D_{\eta}}a\,u_{k_{j_i}}^2 dx - 
\frac{1}{i}\,\int_{\Omega} u_{k_{j_i}}^2 dx \bigg\}.
\end{align*}
Using the lower semicontinuity of the functional $L^2(\Omega)\ni u \rightarrow \int_{\Omega\setminus D_\eta} a\,u^2 dx$ 
with respect to the weak topology of $L^2$, we have
\begin{equation*}
\liminf_{k \rightarrow + \infty} \int_{\Omega}a_k u_k^2 dx \geq \int_{\Omega\setminus D_{\eta}}a u^2 dx
\end{equation*}
for every $\eta > 0$. Letting $\eta \rightarrow 0$ the claim follows.
\end{proof}

In the next lemma we state and prove a $\Gamma$-convergence result for an auxiliary functional 
that will appear in the proof of the main theorem of this section.

\begin{lem}\label{GconGh}
Let us fix \, $0< \bar{\delta} <\delta < \frac{1}{2}$ such that $Q_{\delta} \subset\subset Q_{\bar{\delta}}$.
For every $h\in \mathbb{N}$, let \,$\mathcal{I}^h: L^2(Q_{\bar{\delta}})
\rightarrow [0,+\infty]$ be the functional defined as
\begin{equation*}
\mathcal{I}^h (w):= 
\begin{cases}
\displaystyle \int_{Q_{\bar{\delta}}} |\nabla w|^2 dx + \mathcal{H}^{n-1}(S_w) \quad &\hbox{if } w\in 
SBV^2(Q_{\bar{\delta}}), S_{w}\subset Q_\delta, \mathcal{H}^{n-1}(S_w) \leq \frac{1}{h},\\
+\infty & \hbox{otherwise in } L^2(Q_{\bar{\delta}}).
\end{cases}
\end{equation*}
Then the sequence $\mathcal{I}^h$ $\Gamma$-converges with respect to the strong 
topology of $L^2$ to the functional $\mathcal{I}: L^2(Q_{\bar{\delta}})
\rightarrow [0,+\infty]$ given by
\begin{equation*}
\mathcal{I}(w):= 
\begin{cases}
\displaystyle \int_{Q_{\bar{\delta}}} |D w|^2 dx \quad &\hbox{if } w\in 
H^1(Q_{\bar{\delta}}),\\
+\infty & \hbox{otherwise in } L^2(Q_{\bar{\delta}}).
\end{cases}
\end{equation*}
\end{lem}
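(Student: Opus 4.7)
The plan is to split into the standard $\liminf$ and $\limsup$ inequalities, using Ambrosio's $SBV$ lower semicontinuity theorem for the lower bound and a trivial constant sequence for the upper bound.

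For the $\limsup$ inequality, given $w \in H^1(Q_{\bar\delta})$ I would simply take the constant recovery sequence $w_h := w$. Since every $H^1$ function belongs to $SBV^2$ with empty jump set, the admissibility conditions $S_{w_h}\subset Q_\delta$ and $\mathcal{H}^{n-1}(S_{w_h})=0\leq 1/h$ are trivially satisfied, and the approximate gradient coincides with the distributional gradient. Therefore $\mathcal{I}^h(w_h) = \int_{Q_{\bar\delta}}|Dw|^2\,dx = \mathcal{I}(w)$ for every $h$, and convergence in $L^2$ is obvious.

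For the $\liminf$ inequality, let $w_h \to w$ strongly in $L^2(Q_{\bar\delta})$ with $\liminf_h \mathcal{I}^h(w_h) < +\infty$. Passing to a subsequence that realises the liminf and is energy-bounded, we have $w_h \in SBV^2(Q_{\bar\delta})$ with $\int_{Q_{\bar\delta}}|\nabla w_h|^2\,dx$ equibounded and $\mathcal{H}^{n-1}(S_{w_h})\leq 1/h \to 0$. Ambrosio's $SBV$ lower semicontinuity theorem (together with the strong $L^2$ convergence, which in particular gives $L^1$ convergence on the bounded domain $Q_{\bar\delta}$) then yields $w \in SBV^2(Q_{\bar\delta})$ with
\begin{equation*}
\mathcal{H}^{n-1}(S_w) \leq \liminf_{h\to\infty}\mathcal{H}^{n-1}(S_{w_h}) = 0,
\end{equation*}
so $w \in H^1(Q_{\bar\delta})$, and $\nabla w_h \rightharpoonup Dw$ weakly in $L^2(Q_{\bar\delta};\mathbb{R}^n)$.

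Finally, weak lower semicontinuity of the $L^2$ norm of the gradient gives
\begin{equation*}
\mathcal{I}(w) = \int_{Q_{\bar\delta}} |Dw|^2\,dx \leq \liminf_{h\to\infty} \int_{Q_{\bar\delta}} |\nabla w_h|^2\,dx \leq \liminf_{h\to\infty}\mathcal{I}^h(w_h),
\end{equation*}
which concludes the argument. No step here presents a real obstacle: the only thing to be careful about is verifying that the hypotheses of Ambrosio's compactness and lower semicontinuity theorem are met, which follows immediately from the constraint $\mathcal{H}^{n-1}(S_{w_h}) \leq 1/h$ built into the definition of $\mathcal{I}^h$.
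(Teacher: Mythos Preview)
Your approach is the same as the paper's---constant recovery sequence for the $\limsup$, Ambrosio's $SBV$ lower semicontinuity for the $\liminf$---but you gloss over one technical point that the paper handles explicitly.

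The classical Ambrosio compactness and lower semicontinuity theorems in $SBV$ (the references [Amb] and [AM] in the paper) require an a priori $L^\infty$ bound on the sequence, or equivalently weak-$*$ convergence in $BV$, which in turn needs control of the total variation $|Dw_h|(Q_{\bar\delta})$. You only have $w_h\to w$ in $L^2$, a bound on $\int|\nabla w_h|^2$, and $\mathcal{H}^{n-1}(S_{w_h})\to 0$; nothing rules out large jump amplitudes $|[w_h]|$, so the $BV$ norms need not be bounded and the theorem does not apply as stated. Your sentence ``the only thing to be careful about is verifying that the hypotheses of Ambrosio's \ldots\ theorem are met, which follows immediately'' is precisely where the argument is incomplete.

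The paper fills this gap by a truncation argument: it first proves the $\liminf$ inequality assuming $\|w_h\|_{L^\infty}\leq c$, then for general $w_h$ applies the bounded case to $T_l(w_h):=(w_h\wedge l)\vee(-l)$, obtaining $T_l w\in H^1(Q_{\bar\delta})$ with a uniform $H^1$ bound independent of $l$; letting $l\to\infty$ along a weakly convergent subsequence gives $w\in H^1$ and the desired inequality. This is routine but not vacuous, and should be included in your proof.
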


\begin{proof}
Let $w\in L^2(Q_{\bar{\delta}})$ and let $(w_h)$ be a sequence converging to 
$w$ strongly in $L^2$ and having equibounded energy $\mathcal{I}^h$. 
We claim that $w\in H^{1}(Q_{\bar{\delta}})$ and that  
\begin{equation}\label{bfbg}
\liminf_{h\rightarrow +\infty}\,\mathcal{I}^h(w_h) \geq \mathcal{I}(w).
\end{equation}
Without loss of generality we can assume that $||w_h||_{L^{\infty}} \leq c< + \infty$.
Indeed, if the claim (\ref{bfbg}) is proved in this case, then we can recover the general result 
in the following way. 
Let $w\in L^2(Q_{\bar{\delta}})$ and $(w_h)\subset L^2(Q_{\bar{\delta}})$ 
converging to $w$ strongly in $L^2$ and having equibounded energy. For every 
$l\in \mathbb{N}$ let us define $T_{l}(w_h):= \big(w_h \wedge l \big)\vee (- l)$.
Since $T_{l}(w_h)$ converges to $T_{l}w$ strongly in $L^2$ as $h \rightarrow +\infty$ 
and $||T_{l}(w_h)||_{L^{\infty}} \leq l$, we have by (\ref{bfbg}) that 
$T_{l}w \in H^1(Q_{\bar{\delta}})$ and
\begin{equation*}
\liminf_{h\rightarrow +\infty}\mathcal{I}^h \big(T_{l}(w_h)\big)
\geq \mathcal{I}(T_{l}w).
\end{equation*}
Now, by
\begin{equation*}
\mathcal{I}^h\big(T_{l}(w_h)\big) \leq \mathcal{I}^h(w_h),
\end{equation*}
we have that for every $l\in \mathbb{N}$
\begin{equation}\label{gsbv}
\liminf_{h \rightarrow +\infty}\mathcal{I}^h(w_h) \geq \mathcal{I}(T_{l}w).
\end{equation}
Since $(w_h)$ has equibounded energy, this inequality implies that 
$(T_{l}w)$ is equibounded in $H^1(Q_{\bar{\delta}})$. 
Hence, there exists a subsequence $(l_k)$ and a function $v\in H^1(Q_{\bar{\delta}})$
such that $T_{l_k}w$ converges to $v$ weakly in $H^1(Q_{\bar{\delta}})$, hence strongly 
in $L^2(Q_{\bar{\delta}})$, as $k\rightarrow + \infty$. 
From the uniqueness of the limit, since $w$ is the pointwise limit of $T_{l}w$, it follows that $v = w$, 
which entails that $w\in H^1(Q_{\bar{\delta}})$.

In view of these remarks and of the lower semicontinuity of the Dirichlet functional, 
in (\ref{gsbv}) we obtain the chain of inequalities
\begin{align*}
\liminf_{h\rightarrow +\infty}\mathcal{I}^h(w_h)
\geq \limsup_{l\rightarrow + \infty} \mathcal{I}(T_{l}w) 
\geq \limsup_{k\rightarrow + \infty} \mathcal{I}(T_{l_k}w)
\geq \liminf_{k\rightarrow + \infty} \mathcal{I}(T_{l_k}w) \geq \mathcal{I}(w),
\end{align*}
which is exactly (\ref{bfbg}). 

So, from now on we will assume that $||w_h||_{L^{\infty}} \leq c< + \infty$.
Under this further assumption we can apply directly Ambrosio's compactness and 
lower semicontinuity theorems (see for instance \cite{Amb} and \cite{AM}) in 
order to deduce the compactness for the sequence 
$(w_h)$ having equibounded energy and the liminf inequality. The fact that 
$\mathcal{H}^{n-1}(S_{w_h}) \leq \frac{1}{h}$ ensures in particular that the 
limit function belongs to the Sobolev space $H^1$.

Finally, the existence of a recovery sequence for a function $w\in H^1(Q_{\bar{\delta}})$ 
follows immediately by taking $w_h = w$ for every $h\in \mathbb{N}$.
\end{proof}


Next lemma contains a $\Gamma$-convergence result for the same functionals as in Lemma \ref{GconGh}, 
but taking into account Dirichlet boundary conditions.

\begin{lem}\label{bcs}
Let $(\varphi_h), \varphi \in H^{1/2}(\partial Q_{\bar{\delta}})$ be such that 
$\varphi_h \rightarrow \varphi$ strongly in $H^{1/2}(\partial Q_{\bar{\delta}})$.
For every $h\in \mathbb{N}$, let $\mathcal{I}_{\varphi_h}^h: L^2(Q_{\bar{\delta}})
\rightarrow [0,+\infty]$ be the functional defined by
\begin{equation}\label{defnbc}
\mathcal{I}_{\varphi_h}^h (w):= \left\{
\begin{array}{lll}
\vspace{-.2cm}
\displaystyle \int_{Q_{\bar{\delta}}} |\nabla w|^2 dx + \mathcal{H}^{n-1}(S_w)\quad &\hbox{if } w\in 
SBV^2(Q_{\bar{\delta}}), S_{w}\subset Q_\delta,\mathcal{H}^{n-1}(S_w) \leq \frac{1}{h},\\
\vspace{.15cm}
& \,\,\,\, w=\varphi_h \, \mbox{on } \partial Q_{\bar{\delta}},\\
+\infty & \hbox{otherwise in } L^2(Q_{\bar{\delta}}).
\end{array}
\right.
\end{equation}
Then the sequence $(\mathcal{I}_{\varphi_h}^h)$ $\Gamma$-converges with respect to the strong 
topology of $L^2$ to the functional $\mathcal{I}_{\varphi}: L^2(Q_{\bar{\delta}})
\rightarrow [0,+\infty]$ given by
\begin{equation*}
\mathcal{I}_{\varphi}(w):= 
\begin{cases}
\displaystyle \int_{Q_{\bar{\delta}}} |D w|^2 dx \quad &\hbox{if } w\in 
H^1(Q_{\bar{\delta}}),\, w = \varphi \, \mbox{on } \partial Q_{\bar{\delta}},\\
+\infty & \hbox{otherwise in } L^2(Q_{\bar{\delta}}).
\end{cases}
\end{equation*} 
\end{lem}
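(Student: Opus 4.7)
The plan is to prove both $\Gamma$-convergence inequalities by reducing to Lemma \ref{GconGh} and exploiting the fact that admissible competitors have jump sets confined to $Q_\delta$, which is compactly contained in $Q_{\bar{\delta}}$. This leaves a buffer annulus $A := Q_{\bar{\delta}}\setminus \overline{Q_\delta}$ on which all competitors are automatically Sobolev, so that traces behave continuously with respect to weak $H^1$-convergence.

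\textbf{Liminf inequality.} I would take $(w_h)$ converging strongly to $w$ in $L^2(Q_{\bar{\delta}})$ with $\mathcal{I}^h_{\varphi_h}(w_h)$ equibounded. Since $S_{w_h}\subset Q_\delta$, each $w_h$ lies in $H^1(A)$ with uniformly bounded Dirichlet integral, and $w_h=\varphi_h$ on $\partial Q_{\bar{\delta}}$ with $\varphi_h$ bounded in $H^{1/2}$. A trace-Poincar\'e inequality on the Lipschitz annulus $A$ will then give a uniform $H^1(A)$-bound on $(w_h)$, so up to a subsequence $w_h\rightharpoonup w$ weakly in $H^1(A)$. Continuity of the trace operator together with $\varphi_h\to\varphi$ in $H^{1/2}$ will identify $w=\varphi$ on $\partial Q_{\bar{\delta}}$. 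Since $\mathcal{I}^h_{\varphi_h}(w_h)\geq \mathcal{I}^h(w_h)$, Lemma \ref{GconGh} will then yield $w\in H^1(Q_{\bar{\delta}})$ together with $\mathcal{I}_\varphi(w)\leq \liminf_h \mathcal{I}^h_{\varphi_h}(w_h)$.

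\textbf{Limsup inequality.} For $w\in H^1(Q_{\bar{\delta}})$ with $w=\varphi$ on $\partial Q_{\bar{\delta}}$, I would correct the boundary mismatch by an $H^1$-extension of the trace defect. Using the continuity of a right inverse for the trace operator $H^1(Q_{\bar{\delta}})\to H^{1/2}(\partial Q_{\bar{\delta}})$, one picks $z_h\in H^1(Q_{\bar{\delta}})$ with $z_h=\varphi_h-\varphi$ on $\partial Q_{\bar{\delta}}$ and $\|z_h\|_{H^1(Q_{\bar{\delta}})}\leq C\|\varphi_h-\varphi\|_{H^{1/2}(\partial Q_{\bar{\delta}})}\to 0$. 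Setting $w_h:=w+z_h \in H^1(Q_{\bar{\delta}})\subset SBV^2(Q_{\bar{\delta}})$, one has $S_{w_h}=\emptyset$, so all jump constraints in (\ref{defnbc}) are vacuously satisfied; moreover $w_h=\varphi_h$ on $\partial Q_{\bar{\delta}}$, $w_h\to w$ strongly in $L^2$, and $\mathcal{I}^h_{\varphi_h}(w_h)=\int_{Q_{\bar{\delta}}}|\nabla w_h|^2\,dx \to \int_{Q_{\bar{\delta}}}|Dw|^2\,dx = \mathcal{I}_\varphi(w)$.

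\textbf{Main obstacle.} The only genuinely non-routine step is the identification of the trace of the weak limit in the liminf inequality. This relies crucially on the compact containment $Q_\delta\subset\subset Q_{\bar{\delta}}$: the resulting buffer annulus $A$ is where traces are governed by standard Sobolev theory, making the passage to the limit in the Dirichlet datum automatic. Without this separation between the jump region and the boundary, the $SBV$ behaviour near $\partial Q_{\bar{\delta}}$ would obstruct the identification $w=\varphi$, and one would instead have to add a relaxation term $\int_{\partial Q_{\bar\delta}}|w-\varphi|\,d\mathcal{H}^{n-1}$ to the limit functional.
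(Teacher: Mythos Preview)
Your proposal is correct and follows essentially the same route as the paper: the liminf is obtained from Lemma \ref{GconGh} together with weak $H^1$-compactness on the buffer annulus $Q_{\bar\delta}\setminus Q_\delta$ to identify the trace, and the limsup is obtained by correcting the boundary datum via a bounded right inverse of the trace operator. The only cosmetic difference is that the paper gets the $H^1$-bound on the annulus directly from the $L^2$-convergence of $(w_h)$ rather than via a trace--Poincar\'e inequality, and it writes $\mathcal{I}^h_{\varphi_h}(w_h)=\mathcal{I}^h(w_h)$ (equality, since both sides are finite) where you write $\geq$.
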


\begin{proof}
\textit{First step: proof of compactness and liminf.}
Let $(w_h), w \in L^2(Q_{\bar{\delta}})$ be such that $w_h \rightarrow w$ strongly in $L^2$ and 
$\mathcal{I}_{\varphi_h}^h (w_h) \leq c < + \infty$.
From the equality $\mathcal{I}_{\varphi_h}^h (w_h) = \mathcal{I}^h (w_h)$ and the previous 
lemma, we get that $w \in H^1(Q_{\bar{\delta}})$; moreover,
\begin{equation*}
 \liminf_{h \rightarrow \infty} \mathcal{I}_{\varphi_h}^h (w_h) = 
\liminf_{h \rightarrow \infty}\mathcal{I}^h (w_h) \geq \mathcal{I}(w). 
\end{equation*}
It remains to show that $w = \varphi$ on $\partial Q_{\bar{\delta}}$. First of all
we can notice that the bound $\mathcal{I}_{\varphi_h}^h (w_h) \leq c < + \infty$ 
implies that $w_h = \varphi_h$ on $\partial Q_{\bar{\delta}}$. Moreover we have
$||w_h||_{H^1(Q_{\bar{\delta}} \setminus Q_\delta)} \leq c$, hence 
$w_h \rightharpoonup w$ weakly in $H^1(Q_{\bar{\delta}} \setminus Q_\delta)$. 
This convergence entails in particular the convergence of the traces on $\partial Q_{\bar{\delta}}$,
that is,
\begin{equation}\label{tracce}
 \varphi_h = (w_h)_{|\partial Q_{\bar{\delta}}} \rightarrow w_{|\partial Q_{\bar{\delta}}} 
\quad \hbox{strongly in } L^2(\partial Q_{\bar{\delta}}).
\end{equation}
Since $\varphi_h \rightarrow \varphi$ strongly in $H^{1/2}(\partial Q_{\bar{\delta}})$, from (\ref{tracce}) 
we get the equality $w = \varphi$ on $\partial Q_{\bar{\delta}}$. 

\textit{Second step: limsup.} Let $w \in H^1(Q_{\bar{\delta}})$ be such that $w = \varphi$ on $\partial Q_{\bar{\delta}}$. 
The surjectivity of the trace operator onto $H^{1/2}$ and the continuity of the inverse ensure that for every $h\in \mathbb{N}$ there exists 
$v_h \in H^1(Q_{\bar{\delta}})$ verifying the equality $v_h = \varphi_h - \varphi$ 
on $\partial Q_{\bar{\delta}}$ and the bound
\begin{equation*}
 ||v_h||_{H^1(Q_{\bar{\delta}})} \leq c\, ||\varphi_h - \varphi||_{H^{1/2}(\partial Q_{\bar{\delta}})}.
\end{equation*}
From the assumption we have $v_h \rightarrow 0$ strongly in $H^1$. Let us define the sequence $w_h = w + v_h$. 
It turns out that $w_h = \varphi_h$ on $\partial Q_{\bar{\delta}}$ and that $w_h \rightarrow w$ strongly in $H^1$.
Therefore $w_h$ is a recovery sequence for $\mathcal{I}^h_{\varphi_h}$.
\end{proof}


\noindent
Now we are ready to state and prove the main result of this Section.

Define the functional $\mathcal{F}^{\infty}: L^{2}(\Omega) \rightarrow [0, + \infty]$ as
\begin{equation*}
\mathcal{F}^{\infty}(u) = 
\begin{cases}
\displaystyle \int_{\Omega}|D u|^2\,dx \quad &\hbox{in } H^1(\Omega),\\
+ \infty \quad  &\hbox{otherwise in } L^2(\Omega).
\end{cases}
\end{equation*}
We will show that $\mathcal{F}^{\infty}$ is the $\Gamma$-limit of the sequence 
$(\mathcal{F}^{\varepsilon})$ in this case, that is, when $\alpha_\varepsilon >> \varepsilon$.


\begin{thm}[Bound from below]\label{bfbw}
Let $u\in L^2(\Omega)$ and let $(u^\varepsilon)$ be a sequence converging to $u$ strongly
in $L^2$ and having equibounded energy $\mathcal{F}^{\varepsilon}$. Then $u\in H^{1}(\Omega)$ and 
\begin{equation}\label{bfb}
\liminf_{\varepsilon\rightarrow 0}\mathcal{F}^{\varepsilon}(u^{\varepsilon})
\geq \mathcal{F}^{\infty}(u).
\end{equation}
\end{thm}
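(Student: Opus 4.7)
By Theorem \ref{liminf1} combined with Remark \ref{remark}, the bound $\liminf_{\varepsilon \to 0}\mathcal{F}^\varepsilon(u^\varepsilon) \geq \mathcal{F}^0(u)$ holds regardless of the rate of $\alpha_\varepsilon$; in particular $u \in H^1(\Omega)$, so only the inequality remains. Since the surface term is nonnegative, the task reduces to showing
\begin{equation*}
\liminf_{\varepsilon \to 0} \int_\Omega |\nabla u^\varepsilon|^2\, dx \geq \int_\Omega |Du|^2\, dx. \tag{$\ast$}
\end{equation*}
The strict inequality $A_0 \lneqq \mathrm{Id}$ noted after (\ref{defA0}) shows that $(\ast)$ cannot be deduced from the subcritical machinery alone: the assumption $\alpha_\varepsilon \gg \varepsilon$ must be exploited to forbid microscopic discontinuities in the limit.

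The approach I would take is a cell-by-cell blow-up. For each periodicity cell $Q_k^\varepsilon$, denote its corner by $x_k^\varepsilon$ and introduce the rescaled function $w_k^\varepsilon(y) := u^\varepsilon(x_k^\varepsilon + \varepsilon y)$ for $y \in Q$, whose jump set lies in $E \cup F \subset Q_\delta$ and satisfies
\begin{equation*}
\int_{Q_k^\varepsilon} |\nabla u^\varepsilon|^2\, dx + \alpha_\varepsilon \mathcal{H}^{n-1}(S_{u^\varepsilon} \cap Q_k^\varepsilon) = \varepsilon^{n-2}\Bigl(\int_Q |\nabla w_k^\varepsilon|^2\, dy + \tfrac{\alpha_\varepsilon}{\varepsilon}\,\mathcal{H}^{n-1}(S_{w_k^\varepsilon})\Bigr).
\end{equation*}
Partition the cells into \emph{good} ones with $\mathcal{H}^{n-1}(S_{w_k^\varepsilon}) \leq \sqrt{\varepsilon/\alpha_\varepsilon}$ and \emph{bad} ones exceeding this threshold. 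A direct counting using the global bound $\alpha_\varepsilon \mathcal{H}^{n-1}(S_{u^\varepsilon}) \leq c$ shows that the total Lebesgue measure of bad cells is $O(\sqrt{\varepsilon/\alpha_\varepsilon}) = o(1)$ and hence contributes nothing to the liminf in $(\ast)$. On each good cell, fix $0 < \bar\delta < \delta$ and apply Lemma \ref{bcs} on $Q_{\bar\delta}$ with $1/h_\varepsilon = \sqrt{\varepsilon/\alpha_\varepsilon}$ and Dirichlet data read off from the trace of $w_k^\varepsilon$ on $\partial Q_{\bar\delta}$, which is well defined since the outer annulus $Q_{\bar\delta}\setminus Q_\delta$ is crack-free. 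Any $L^2$-limit of $w_k^\varepsilon$ is then Sobolev on $Q_{\bar\delta}$, and the Dirichlet integral of the limit is bounded above by $\liminf_{\varepsilon\to 0}\int_{Q_{\bar\delta}}|\nabla w_k^\varepsilon|^2\,dy$.

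To assemble these cellwise lower bounds into $(\ast)$, I would introduce the extension $\tilde u^\varepsilon \in H^1_{loc}(\Omega \setminus \tilde F^\varepsilon)$ constructed via Theorem \ref{extop} together with the $\eta$-thickening argument already used in the proofs of Theorem \ref{GammaG} and Theorem \ref{liminf1}; this yields $\tilde u^\varepsilon \to u$ strongly in $L^2_{loc}(\Omega)$ and, up to a subsequence, $\nabla \tilde u^\varepsilon \rightharpoonup Du$ weakly in $L^2_{loc}(\Omega)$. The main technical obstacle I anticipate is the synchronization step: identifying the cell-by-cell weak limits of the rescaled gradients, weighted by cell volume $\varepsilon^n$, with a Riemann-sum reconstruction of $\int_\Omega |Du|^2\,dx$. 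I would handle this by testing against smooth $\varphi \in C_c^\infty(\Omega)$, using a Poincar\'e-type inequality on the annulus $Q_{\bar\delta}\setminus Q_\delta$ to transfer the Dirichlet data, and invoking Lemma \ref{tech} applied to the indicator of the union of good cells---which converges in measure to $1$ on $\Omega$---to promote the cellwise semicontinuity to the global bound $(\ast)$.
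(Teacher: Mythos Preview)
Your decomposition into good and bad cells, the counting argument showing the bad cells have vanishing total measure, and the plan to invoke Lemma~\ref{tech} with the indicator of the good region converging to $1$ in measure are all sound and parallel the paper's strategy. The genuine gap is in your treatment of the good cells. You propose to apply Lemma~\ref{bcs} on each good cell with $1/h_\varepsilon=\sqrt{\varepsilon/\alpha_\varepsilon}$ and then speak of ``any $L^2$-limit of $w_k^\varepsilon$''. But there is no such limit: for each fixed $\varepsilon$ the index $k$ ranges over $\sim 1/\varepsilon^n$ different cells, and the rescaled functions $w_k^\varepsilon$ do not form a single sequence in $\varepsilon$. Lemma~\ref{bcs} is an asymptotic $\Gamma$-convergence statement about one sequence as $h\to\infty$; it gives no \emph{uniform} quantitative bound of the type ``if $\mathcal{H}^{n-1}(S_w)\leq\beta$ then $\int|\nabla w|^2$ is within a factor $1+\eta$ of the Dirichlet energy of its harmonic replacement'', which is precisely what you need to apply simultaneously in every good cell at fixed $\varepsilon$. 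Your proposed synchronization via Riemann sums and the extension operator of Theorem~\ref{extop} cannot fill this: that extension only controls the gradient outside $\tilde E^\varepsilon\cup\tilde F^\varepsilon$, which at best recovers $\mathcal{F}^0$, not $\mathcal{F}^\infty$, as you yourself note.

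The paper closes this gap by inserting an intermediate step. It first proves, by contradiction, the uniform replacement estimate (\ref{veroclaim}): for every $\eta>0$ there exists a \emph{fixed} $\beta>0$ (independent of the boundary datum) such that any Mumford--Shah minimizer $\hat v$ on $Q_{\hat\delta}$ with $\mathcal{H}^{n-1}(S_{\hat v})\leq\beta$ admits a harmonic replacement $\tilde v$ with $\int|\nabla\tilde v|^2\leq(1+\eta)\int|\nabla\hat v|^2$. The contradiction argument produces a \emph{single} sequence $\hat v_h\in\mathcal{M}_{1/h}$ of normalized minimizers, and it is to this sequence that Lemma~\ref{bcs} is legitimately applied (convergence of minimizers and of minimum values). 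With this uniform $\beta$ fixed as the good/bad threshold, on each good cell one compares $u^\varepsilon$ first with the rescaled minimizer $\hat v^\varepsilon$ (same boundary trace, lower energy) and then with its harmonic replacement $\tilde v^\varepsilon$, yielding a global $H^1$ modification $w^\varepsilon$ with $\mathcal{F}^\varepsilon(u^\varepsilon)\geq(1-\frac{\eta}{1+\eta})\int_\Omega a^\varepsilon|\nabla w^\varepsilon|^2$. Weak $H^1$-convergence of $w^\varepsilon$ to $u$ together with Lemma~\ref{tech} then finishes the proof, and one lets $\eta\to 0$ at the end.
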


\begin{proof}
We remark that, as $\mathcal{F}^{\varepsilon}(u^\varepsilon)$ is bounded, the 
functions $u^\varepsilon$ can have jumps only in the set $\tilde{E}^{\varepsilon}\cup 
\tilde{F}^{\varepsilon}$ defined in (\ref{subd}). 

We now classify the cubes $Q^{\varepsilon}_k$ according to the measure 
of the jump set that they contain.
More precisely, let us introduce a positive parameter $\beta > 0$ that will be chosen later in a suitable way.
We say that a cube $Q^{\varepsilon}_k$ is \textit{good} whenever 
$\mathcal{H}^{n-1}\big(S_{u^{\varepsilon}}\cap Q^{\varepsilon}_k \big) 
\leq \beta\,\varepsilon^{n-1}$, and 
\textit{bad} otherwise and we denote with $N_1(\varepsilon)$ and $N_2(\varepsilon)$ the number of 
\textit{good} and \textit{bad} cubes, respectively. 
First of all we can notice that, by the fact that the sequence $(u^\varepsilon)$ has equibounded energy, 
we have in particular that there exists a constant $c>0$ such that 
$\alpha_\varepsilon \mathcal{H}^{n-1}(S_{u^\varepsilon}) \leq c$. 
From this we deduce an important bound for the number of bad cubes, that is 
$N_2(\varepsilon) \leq \dfrac{c}{\alpha_\varepsilon \varepsilon^{n-1}}$. 
We can write, from (\ref{domom}),

\begin{equation}\label{domomg}
\Omega = \Bigg(\bigcup_{k=1}^{N_1(\varepsilon)} Q^{\varepsilon}_k\Bigg)\cup 
\Bigg(\bigcup_{k=1}^{N_2(\varepsilon)} Q^{\varepsilon}_k\Bigg)\cup R(\varepsilon)
=: Q_g ^{\varepsilon}\cup Q_b^{\varepsilon}\cup R(\varepsilon).
\end{equation}

\textit{First step: energy estimate on good cubes.} Let $Q^{\varepsilon}_k$ be a good cube and consider
\begin{equation}\label{fc}
\mathcal{F}^{\varepsilon}\big(u^\varepsilon,Q^{\varepsilon}_k\big) := \int_{Q^{\varepsilon}_k} |\nabla u^\varepsilon|^2 dx + \alpha_\varepsilon
\mathcal{H}^{n-1}\big(S_{u^{\varepsilon}}\cap Q^{\varepsilon}_k \big).
\end{equation}
Define the function $v^\varepsilon$ in the unit cube $Q_k$ as
$u^\varepsilon(\varepsilon\,y)=:\sqrt{\alpha_\varepsilon \varepsilon}\,v^\varepsilon(y)$.
In terms of $v^\varepsilon$, (\ref{fc}) becomes
\begin{equation}\label{step}
\mathcal{F}^{\varepsilon}\big(u^\varepsilon,Q^{\varepsilon}_k\big) = \alpha_\varepsilon \varepsilon^{n-1} \bigg\{\int_{Q_k} |\nabla v^\varepsilon|^2 dx + 
\mathcal{H}^{n-1}(S_{v^{\varepsilon}}\cap Q_k)\bigg\},
\end{equation}
with $\mathcal{H}^{n-1} (S_{v^{\varepsilon}}\cap Q_k) \leq \beta$.
In other words, by means of a change of variables we have reduced the problem to the study of 
the Mumford-Shah functional over a fixed domain, with some constraints on the jump set. 
From now on we will omit the subscript $k$.
Let $\bar{\delta},\hat{\delta}$ be such that $Q_{\delta}\subset\subset Q_{\bar{\delta}}\subset\subset Q_{\hat{\delta}}\subset\subset Q$. 

Let us consider the problem of finding local minimizers for the Mumford-Shah functional under the 
required conditions, that is
\begin{equation*}
\begin{array}{ll}
&\hbox{(LMS)} \,\, \displaystyle \hbox { loc}\min\bigg\{\int_{Q_{\hat{\delta}}} |\nabla w|^2 dx + 
\mathcal{H}^{n-1}(S_{w}) : w \in SBV^2(Q_{\hat{\delta}}), S_{w}\subset E\cup F,\,\mathcal{H}^{n-1}(S_{w}) \leq \beta\bigg\}.
\end{array}
\end{equation*}
According to the definition given in \cite{DMM}, we recall that a local minimizer is a function which minimizes 
the given functional with respect to all perturbations with compact support.
Let us denote by $\mathcal{M}_{\beta}$ the class of solutions of (LMS).

For a given $\hat{v}\in \mathcal{M}_{\beta}$, let us consider the function $\tilde{v}$ 
solving
\begin{equation*}
\hbox{(Dir)} \quad 
\begin{cases}
\Delta w = 0  & \hbox{in } Q_{\bar{\delta}}\\
w = \hat{v} & \hbox{in } Q_{\hat{\delta}} \setminus Q_{\bar{\delta}}.
\end{cases}
\end{equation*}

We want to prove that for every $\eta > 0$ there exists $\beta > 0$ such that for every 
$\hat{v}\in \mathcal{M}_{\beta}$ and for the corresponding $\tilde{v}$ we have
\begin{equation}\label{veroclaim}
\int_{Q_{\hat{\delta}}} |\nabla \tilde{v}|^2 dx \leq 
(1 + \eta)\int_{Q_{\hat{\delta}}}|\,\nabla \hat{v}|^2 dx. 
\end{equation}
Hence we will take such a $\beta$ in the definition of good and bad cubes.

Let us prove (\ref{veroclaim}) by contradiction. Suppose (\ref{veroclaim}) is false. Then there exists
$\eta>0$ such that for every $\beta>0$ there exists $\hat{v} \in \mathcal{M}_{\beta}$ and 
a corresponding $\tilde{v}$ for which
\begin{equation}\label{negoclaim}
\int_{Q_{\hat{\delta}}} |\nabla \tilde{v}|^2 dx >  
(1 + \eta)\int_{Q_{\hat{\delta}}}|\nabla \hat{v}|^2 dx. 
\end{equation}

\noindent
In particular (\ref{negoclaim}) implies that for every $h > 0$ there exists $\hat{v}_h 
\in \mathcal{M}_{\frac{1}{h}}$ and $\tilde{v}_h$ solution of (Dir) with $\hat{v}$ replaced by 
$\hat{v}_h$ for which
\begin{equation}\label{negoclaimh}
\int_{Q_{\hat{\delta}}} |\nabla \tilde{v}_h|^2 dx >  
(1 + \eta)\int_{Q_{\hat{\delta}}}|\nabla \hat{v}_h|^2 dx.
\end{equation}
Since $Q_{\hat{\delta}}= \big(Q_{\hat{\delta}}\setminus Q_{\bar{\delta}}\big) \cup Q_{\bar{\delta}}$, 
we can split the previous integrals and, using the fact that $\tilde{v}_h = \hat{v}_h$ in 
$Q_{\hat{\delta}}\setminus Q_{\bar{\delta}}$ we obtain from (\ref{negoclaimh})
\begin{equation}\label{mah}
\int_{Q_{\bar{\delta}}} |\nabla \tilde{v}_h|^2 dx >  (1 + \eta)\int_{Q_{\bar{\delta}}}|\nabla \hat{v}_h|^2 dx + 
\eta \int_{Q_{\hat{\delta}}\setminus Q_{\bar{\delta}}}|\nabla \hat{v}_h|^2 dx.
\end{equation}
Since the problem defining $\tilde{v}_h$ is linear, we can normalize the left-hand side of (\ref{mah}), 
so that we can assume 
\begin{equation}\label{normalize}
1 = \int_{Q_{\bar{\delta}}} |\nabla \tilde{v}_h|^2 dx >  (1 + \eta)\int_{Q_{\bar{\delta}}}|\nabla \hat{v}_h|^2 dx + 
\eta \int_{Q_{\hat{\delta}}\setminus Q_{\bar{\delta}}}|\nabla \hat{v}_h|^2 dx.
\end{equation}
This means in particular that
\begin{equation}\label{bouen}
 \int_{Q_{\hat{\delta}}}|\nabla \hat{v}_h|^2 dx \leq  \frac{1}{\eta} < + \infty. 
\end{equation}
Without loss of generality we can assume that $\int_{Q_{\hat{\delta}}\setminus Q_{\delta}}\hat{v}_h dx =0$; 
therefore, since $S_{\hat{v}_h} \subset Q_{\delta}$, (\ref{bouen}) implies that
$||\hat{v}_h||_{H^1(Q_{\hat{\delta}}\setminus Q_{\delta})} \leq c$.
Using the fact that $\hat{v}_h$ is harmonic in $Q_{\hat{\delta}}\setminus Q_{\delta}$ we get the convergence of
the traces of $\hat{v}_h$ on $\partial Q_{\bar{\delta}}$, that is
\begin{equation}\label{convtra}
 \varphi_{h}:= (\hat{v}_h)_{|\partial Q_{\bar{\delta}}} \rightarrow \varphi 
\quad \hbox{strongly in } H^{1/2}(\partial Q_{\bar{\delta}}).
\end{equation}
At this point, let us consider the following problems:
\begin{equation*}
(\hbox{Dir})_{\varphi_h} \quad 
\begin{cases}
\Delta w = 0  & \hbox{in } Q_{\bar{\delta}}\\
\vspace{.1cm}
w = \varphi_h & \hbox{on } \partial Q_{\bar{\delta}},
\end{cases}
\quad \quad
(\hbox{Dir})_{\varphi} \quad 
\begin{cases}
\Delta w = 0  & \hbox{in } Q_{\bar{\delta}}\\
\vspace{.1cm}
w = \varphi & \hbox{on } \partial Q_{\bar{\delta}}.
\end{cases}
\end{equation*}
Clearly, $\tilde{v}_h$ is the only solution to $\hbox{(Dir)}_{\varphi_h}$ for every $h$.
Let us call $\tilde{v}$ the solution to $\hbox{(Dir)}_{\varphi}$. From (\ref{convtra}) it 
turns out that $\tilde{v}_h \rightarrow \tilde{v}$ strongly in $H^1(Q_{\bar{\delta}})$, hence,
\begin{equation}\label{gradarm}
1 =  \int_{Q_{\bar{\delta}}} |\nabla \tilde{v}_h|^2\,dx \rightarrow \int_{Q_{\bar{\delta}}} |\nabla \tilde{v}|^2\,dx = 1.
\end{equation}
Notice that the functions $\hat{v}_h$ defined by the minimum problem (LMS) are absolute 
minimizers of the same functional over the same class once we fix the boundary data $\varphi_h$. Therefore they are 
absolute minimizers for the functional $\mathcal{I}^h_{\varphi_h}$ defined in (\ref{defnbc}).
The $\Gamma$-convergence result proved in Lemma \ref{bcs} gives the $L^2$ 
convergence of the sequence $\hat{v}_h$ to the only minimizer of the functional $\mathcal{I}_\varphi$, 
that is exactly $\tilde{v}$, and the convergence of the energies.

Now, if we let $h \rightarrow +\infty$ in (\ref{normalize}) we obtain that
\begin{equation*}
 1 = \int_{Q_{\bar{\delta}}} |\nabla \tilde{v}|^2 dx \geq  (1 + \eta)\int_{Q_{\bar{\delta}}}|\nabla \tilde{v}|^2 dx,
\end{equation*}
which gives the contradiction, therefore (\ref{veroclaim}) is proved. 

Let $\eta>0$ be fixed; we choose $\beta > 0$ such that the property (\ref{veroclaim}) is satisfied and 
for every $\varepsilon>0$ we consider the problem 
\begin{equation*}
\begin{array}{ll}
\vspace{-.2cm}
\hbox{(MS)} \,\, \displaystyle \min\bigg\{\int_{Q_k^{\hat{\delta}}} |\nabla w|^2 dx + 
\mathcal{H}^{n-1}(S_{w}) :& w \in SBV^2(Q_k^{\hat{\delta}}), S_{w}\subset E\cup F, \\
&\,\mathcal{H}^{n-1}(S_{w}) \leq \beta, w = v^\varepsilon \, \hbox{on } \partial Q_k^{\hat{\delta}}\bigg\}.
\end{array}
\end{equation*}
For a minimizer $\hat{v}^\varepsilon$ of (MS), let $\tilde{v}^\varepsilon$ be the corresponding function
defined by (Dir), with $\hat{v}$ replaced by $\hat{v}^\varepsilon$. We have that, as before, 
\begin{equation}\label{veroclaimeps}
\int_{Q_k^{\hat{\delta}}} |\nabla \tilde{v}^\varepsilon|^2 dx \leq 
(1 + \eta)\int_{Q_k^{\hat{\delta}}}|\nabla \hat{v}^\varepsilon|^2 dx. 
\end{equation}
Hence, in particular, 
\begin{align}\label{stepp}
\int_{Q_{\hat{\delta},k}} |\nabla v^\varepsilon|^2 dx + 
\mathcal{H}^{n-1}(S_{v^{\varepsilon}}\cap Q_{\hat{\delta},k})
&\geq\int_{Q_{\hat{\delta},k}} |\nabla \hat{v}^\varepsilon|^2 dx + 
\mathcal{H}^{n-1}(S_{\hat{v}^{\varepsilon}}\cap Q_{\hat{\delta},k}) \nonumber\\
&\geq \bigg(1 - \frac{\eta}{1 + \eta}\bigg) \int_{Q_{\hat{\delta},k}} |\nabla \tilde{v}^\varepsilon|^2 dx,
\end{align}
where $v^\varepsilon$ is the function in (\ref{step}). 
Now define $\tilde{u}^\varepsilon$ as $\tilde{u}^\varepsilon(\varepsilon\,y):=\sqrt{\alpha_\varepsilon \varepsilon}\,\,\tilde{v}^\varepsilon(y)$. By (\ref{step}) and (\ref{stepp}) we obtain 
\begin{equation}\label{fc2}
\int_{Q_{\hat{\delta},k}^{\varepsilon}} |\nabla u^\varepsilon|^2 dx + \alpha_\varepsilon
\mathcal{H}^{n-1}\big(S_{u^{\varepsilon}}\cap\big(Q_{\hat{\delta},k}^{\varepsilon}\big)\big) \geq 
\bigg(1 - \frac{\eta}{1 + \eta}\bigg)\int_{Q_{\hat{\delta},k}^{\varepsilon}}|\nabla \,\tilde{u}^\varepsilon|^2 dx.
\end{equation}

\textit{Second step: energy estimate on bad cubes.} Let $Q^{\varepsilon}_k$ be a bad cube. This means that 
$\mathcal{H}^{n-1}\big(S_{u^{\varepsilon}}\cap Q^{\varepsilon}_k\big) > 
\beta\,\varepsilon^{n-1}$.
First of all, recall that we have a control on the number of bad cubes, that is, 
$N_2(\varepsilon) \leq \dfrac{c}{\alpha_\varepsilon \varepsilon^{n-1}}$.
The idea is to use the obvious inequality
\begin{equation*}
\int_{Q^{\varepsilon}_k} |\nabla  u^\varepsilon|^2 dx + \alpha_\varepsilon
\mathcal{H}^{n-1}\big(S_{u^{\varepsilon}}\cap Q^{\varepsilon}_k\big) \geq 
\int_{Q^{\varepsilon}_k} \chi_{\delta}^{\varepsilon}\,|\nabla \check{u}^\varepsilon|^2 dx,
\end{equation*}
where $\chi_{\delta}^{\varepsilon}$ is the characteristic function of the set $Q^{\varepsilon}_k\setminus 
Q_{\delta,k}^{\varepsilon}$ and the function $\check{u}^\varepsilon$ coincides with $u^\varepsilon$ in $Q^{\varepsilon}_k\setminus Q_{\delta,k}^{\varepsilon}$ and is extended to $Q_{\delta,k}^{\varepsilon}$ in a 
way that keeps its $H^1$ norm bounded.

\textit{Third step: final estimate. }
Let us define a new sequence $w^\varepsilon \in SBV^2(\Omega)$ as
\begin{equation*}
w^\varepsilon := \left\{
\begin{array}{ll}
\vspace{.2cm}
\tilde{u}^\varepsilon    \quad &\hbox{in } Q_{g}^{\hat{\delta},\varepsilon},\\
\vspace{.2cm}
u^\varepsilon    \quad &\hbox{in } \big(Q_{g}^{\varepsilon} \setminus Q_{\hat{\delta},g}^{\varepsilon}\big)\cup R(\varepsilon),\\
\check{u}^\varepsilon \quad &\hbox{in } Q_{b}^{\varepsilon},
\end{array}
\right.
\end{equation*}
where $Q_{g}^{\varepsilon}, Q_{b}^{\varepsilon}$ and $R(\varepsilon)$ are 
given in (\ref{domomg}) and $Q_{\hat{\delta},g}^{\varepsilon}$ denotes the set
\begin{equation*}
 Q_{\hat{\delta},g}^{\varepsilon}:= \bigcup_{k=1}^{N_1(\varepsilon)} Q_{\hat{\delta},k}^{\varepsilon}.
\end{equation*}

Define also the function $a^\varepsilon: \Omega \rightarrow \mathbb{R}$ as
\begin{equation*}
a^\varepsilon(x) := 
\begin{cases}
0 \quad & \hbox{in } Q_{\delta,b}^{\varepsilon},\\
1 \quad & \hbox{otherwise in } \Omega.
\end{cases}
\end{equation*}
From what we proved in the previous steps we can write
\begin{equation}\label{uff}
\mathcal{F}^{\varepsilon}(u^{\varepsilon}) \geq \bigg(1 - \frac{\eta}{1 + \eta}\bigg)
\int_{\Omega} a^\varepsilon(x)\,|\nabla w^\varepsilon|^2 dx.
\end{equation}
It remains to apply Lemma \ref{tech} to (\ref{uff}). 
First of all we show the convergence of $a^\varepsilon$. We have
\begin{equation*}
\int_{\Omega}|\,a^\varepsilon - 1|\,dx = \mathcal{L}^n 
\big(Q_{\delta,b}^{\varepsilon}\big) = N_2(\varepsilon)\,\varepsilon^n \mathcal{L}^n(Q_\delta) \leq c\, \frac{\varepsilon}{\alpha_\varepsilon},
\end{equation*}
hence $a^\varepsilon \rightarrow 1$ strongly in $L^1(\Omega)$. Once we prove that 
$w^\varepsilon \rightharpoonup u$ weakly in $H^1(\Omega)$, it turns out that
\begin{equation*}
\liminf_{\varepsilon\rightarrow 0}\mathcal{F}^{\varepsilon}(u^{\varepsilon}) \geq \bigg(1 - \frac{\eta}{1 + \eta}\bigg)
\int_{\Omega} |D u|^2 dx,
\end{equation*}
and the thesis follows letting $\eta$ converge to zero.

\textit{Fourth step: convergence of $w^\varepsilon$.} 
First of all it is clear from (\ref{uff}) and the choice of $\check{u}^{\varepsilon}$ that $||\nabla w^\varepsilon||_{L^2(\Omega)} \leq c$. 
Then, as in the proof of Theorem \ref{GammaG}, the fact that $w^\varepsilon$ and
$u^\varepsilon$ coincide in a set with positive measure ensures the convergence.

\end{proof}


\begin{thm}[Bound from above]
For every $u\in H^1(\Omega)$ there exists a sequence $(u^\varepsilon)$ such that 
\begin{align*}
&(i) \quad u^\varepsilon \rightarrow u \quad \hbox{strongly in } L^2(\Omega), \\
&(ii) \quad \lim_{\varepsilon\rightarrow 0}\mathcal{F}^{\varepsilon}(u^{\varepsilon}) = \mathcal{F}^{\infty}(u).
\end{align*}
\end{thm}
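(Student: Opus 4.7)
The plan is simply to take the constant recovery sequence $u^\varepsilon := u$ for every $\varepsilon > 0$. The key observation is that the limit energy $\mathcal{F}^\infty$ is just the classical Dirichlet energy on $H^1(\Omega)$, with no softening coming from microfractures — precisely the content of Theorem \ref{bfbw}. It is therefore natural to expect that no fracturing construction is needed on the approximating side: a candidate $u \in H^1(\Omega)$ already realizes its own limit energy exactly.

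More precisely, since $u \in H^1(\Omega)$ one has $u \in SBV^2(\Omega)$ with $\mathcal{H}^{n-1}(S_u) = 0$, so the jump integral in the definition of $\mathcal{F}^\varepsilon$ vanishes identically, regardless of the position of $\tilde{E}^\varepsilon \cup \tilde{F}^\varepsilon$ and regardless of the value of $\alpha_\varepsilon$. Consequently, for every $\varepsilon > 0$,
\begin{equation*}
\mathcal{F}^\varepsilon(u^\varepsilon) \,=\, \int_\Omega |\nabla u|^2\,dx + \int_{S_u} f_{\alpha_\varepsilon}\!\Big(\frac{x}{\varepsilon}\Big)\,d\mathcal{H}^{n-1}(x)\,=\,\int_\Omega |D u|^2\,dx \,=\, \mathcal{F}^\infty(u),
\end{equation*}
and $u^\varepsilon \to u$ strongly in $L^2(\Omega)$ is immediate since the sequence is constant. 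Both $(i)$ and $(ii)$ follow at once.

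There is no genuine obstacle here: all the analytic difficulty of the supercritical regime is already absorbed in the liminf inequality of Theorem \ref{bfbw}, where one has to show that creating microfractures never produces an effective softening when $\alpha_\varepsilon \gg \varepsilon$. Once that is known, the limsup part is, in contrast with the subcritical and critical cases, completely trivial and requires neither homogenization correctors nor any modification of $u$ inside the brittle inclusions.
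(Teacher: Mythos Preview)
Your proposal is correct and takes exactly the same approach as the paper: the paper's proof is the single sentence ``The thesis follows trivially by choosing $u^\varepsilon = u$ for every $\varepsilon > 0$.'' Your additional explanation of why this works (empty jump set, hence vanishing surface term) is accurate and merely spells out what the paper leaves implicit.
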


\begin{proof}
The thesis follows trivially by choosing $u^\varepsilon = u$ for every $\varepsilon > 0$. 
\end{proof}


\section{Critical regime: intermediate case}

In this section we will analyze the case in which the fragility coefficient of the inclusions in the material 
and the size $\varepsilon$ of the periodic structure are of the same order. We can assume, without loss of generality, that $\alpha_\varepsilon = \varepsilon$.
So, the functional we are interested in is given by
\begin{equation*}
\mathcal{F}^\varepsilon(u) =  
\begin{cases}
\displaystyle\int_{\Omega}|\nabla u|^2 dx + \varepsilon\,\mathcal{H}^{n - 1}(S_{u}) \quad &\hbox{if } u \in SBV^2(\Omega), S_{u}\subset \tilde{E}^{\varepsilon}\cup \tilde{F}^{\varepsilon},\\
+ \infty  \quad &\hbox{otherwise in } L^2(\Omega).
\end{cases}
\end{equation*}

\noindent
As first step, we localize the sequence $(\mathcal{F}^\varepsilon)$, introducing an explicit 
dependence on the set of integration. More explicitly, for every $u\in L^2(\Omega)$ and for every open set $A\in \mathcal{A}(\Omega)$ we define
\begin{equation*}
\mathcal{F}^\varepsilon(u,A):= 
\left\{
\begin{array}{ll}
\vspace{.2cm}
\displaystyle \int_{A}|\nabla u|^2 dx + \varepsilon\,\mathcal{H}^{n - 1}(S_{u}\cap A) \quad 
&\hbox{if } u\in SBV^2(A),\, S_u \subset \big(\tilde{E}^\varepsilon \cup \tilde{F}^{\varepsilon}\big)\cap A,\\
+ \infty \quad &\hbox{otherwise in } L^2(\Omega). 
\end{array}
\right.
\end{equation*}
For a fixed $u\in L^2(\Omega)$ we can extend the localized functional we have just defined to a measure $(\mathcal{F}^\varepsilon)^*(u,\cdot)$ 
on the class of Borel sets $\mathcal{B}(\Omega)$ in the usual way:
\begin{equation*}
(\mathcal{F}^\varepsilon)^*(u,B):= \inf \big\{\mathcal{F}^\varepsilon(u,A): A \in \mathcal{A}(\Omega), B \subseteq A\big\}. 
\end{equation*}

\subsection{Integral representation of the $\Gamma$-limit}

In this subsection we are going to prove that the sequence $(\mathcal{F}^\varepsilon)$ $\Gamma$-converges to a functional $\mathcal{F}^{hom}$, and that this limit functional admits an integral representation.
A preliminary result is given by next theorem, in which we prove the $\Gamma$-convergence of a suitable subsequence 
of $(\mathcal{F}^\varepsilon)$. 

\begin{thm}\label{Gc}
Let $\varepsilon$ be a sequence converging to zero. Then there exist a subsequence 
$(\sigma(\varepsilon))$ and a functional $\mathcal{F}^{hom}_\sigma: L^2(\Omega)\times \mathcal{A}(\Omega) 
\rightarrow [0,+ \infty]$ such that, for every $A\in \mathcal{A}(\Omega)$,
\begin{equation*}
\mathcal{F}^{hom}_\sigma(\cdot,A) = \Gamma\mbox{-}\lim_{\varepsilon \rightarrow 0} \mathcal{F}^{\sigma(\varepsilon)}(\cdot,A)
\end{equation*}
in the strong $L^2$-topology. Moreover, for every $u\in L^2(\Omega)$, the set function 
$\mathcal{F}^{hom}_\sigma(u,\cdot)$ is the restriction to $\mathcal{A}(\Omega)$ of a Borel measure on $\Omega$.
\end{thm}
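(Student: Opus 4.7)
The proof combines the standard compactness theorem for $\Gamma$-convergence with the De Giorgi--Letta criterion for extending a set function on open sets to a Borel measure. The plan is as follows.

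First I would extract a subsequence along which the $\Gamma$-limit exists for every $A \in \mathcal{A}(\Omega)$. Since $L^2(\Omega)$ is separable and metrizable, the abstract $\Gamma$-compactness theorem (see, e.g., Dal Maso's book) gives, for each fixed open set $A$, a subsequence of $\mathcal{F}^\varepsilon(\cdot, A)$ which $\Gamma$-converges in the strong $L^2$-topology. A diagonal argument applied to a countable dense subfamily $\mathcal{R} \subset \mathcal{A}(\Omega)$ (for instance, finite unions of open cubes with rational vertices whose closure is contained in $\Omega$) then produces a single subsequence $\sigma(\varepsilon)$ along which $\Gamma$-$\lim_{\varepsilon \to 0}\mathcal{F}^{\sigma(\varepsilon)}(\cdot,R)$ exists for every $R \in \mathcal{R}$. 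For a general $A \in \mathcal{A}(\Omega)$, I define
\begin{equation*}
\mathcal{F}^{hom}_\sigma(u,A) := \sup\bigl\{\Gamma\text{-}\lim_{\varepsilon \to 0}\mathcal{F}^{\sigma(\varepsilon)}(u,R) : R \in \mathcal{R},\ R \subset\subset A\bigr\},
\end{equation*}
and check that this inner-regular extension coincides with both the $\Gamma$-$\liminf$ and the $\Gamma$-$\limsup$ of the subsequence on every $A$.

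Next I fix $u$ with $\mathcal{F}^{hom}_\sigma(u,\Omega) < +\infty$ and verify the hypotheses of the De Giorgi--Letta criterion for the set function $\lambda(A) := \mathcal{F}^{hom}_\sigma(u,A)$. Monotonicity, inner regularity (built into the definition), and superadditivity on disjoint open sets follow directly from the fact that $\mathcal{F}^\varepsilon(u^\varepsilon,\cdot)$ is itself an additive set function on the Borel $\sigma$-algebra. The only nontrivial property is subadditivity, which I would reduce to the standard fundamental estimate: for every $A' \subset\subset A''$ and every open $B \subset \Omega$, given sequences $u^\varepsilon, v^\varepsilon$ that are admissible on $A''$ and $B$ respectively and converge in $L^2$ to the same limit $u$, one can construct an admissible competitor $w^\varepsilon$ on $A' \cup B$ with $w^\varepsilon = u^\varepsilon$ on $A'$ and $w^\varepsilon = v^\varepsilon$ outside $A''$, and controlled energy.

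The main obstacle will be this fundamental estimate. The gluing is carried out by cut-off functions $\varphi$ supported in $A'' \setminus A'$ and setting $w^\varepsilon = \varphi u^\varepsilon + (1-\varphi) v^\varepsilon$. The key observation is that, since $\varphi$ is smooth, the jump set of $w^\varepsilon$ satisfies $S_{w^\varepsilon} \subset S_{u^\varepsilon} \cup S_{v^\varepsilon} \subset \tilde E^{\varepsilon} \cup \tilde F^{\varepsilon}$, so no new jumps are created in the transition region and $w^\varepsilon$ is admissible for $\mathcal{F}^\varepsilon$. To handle the extra bulk contribution coming from $|\nabla \varphi|^2 |u^\varepsilon - v^\varepsilon|^2$, I would use the classical averaging trick of Braides--Defranceschi: choose $M$ disjoint nested annuli $A_1,\dots,A_M \subset A'' \setminus A'$ with corresponding cut-offs $\varphi_1,\dots,\varphi_M$, and by pigeonhole select one index $i$ for which
\begin{equation*}
\mathcal{F}^\varepsilon(w_i^\varepsilon, A' \cup B) \leq \bigl(1 + \tfrac{1}{M}\bigr)\bigl[\mathcal{F}^\varepsilon(u^\varepsilon,A'') + \mathcal{F}^\varepsilon(v^\varepsilon,B)\bigr] + C_M \,\|u^\varepsilon - v^\varepsilon\|^2_{L^2((A''\setminus A') \cap B)}.
\end{equation*}
Letting $\varepsilon \to 0$ (using the strong $L^2$-convergence to make the last term vanish) and then $M \to \infty$ yields the desired subadditivity. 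Once that is in hand, De Giorgi--Letta gives that $\mathcal{F}^{hom}_\sigma(u,\cdot)$ is the restriction to $\mathcal{A}(\Omega)$ of a Borel measure on $\Omega$, completing the proof.
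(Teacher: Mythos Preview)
Your proposal is correct and follows essentially the same route as the paper: the paper packages the diagonal/inner-regularization argument into the $\overline{\Gamma}$-convergence compactness theorem and then cites Dal Maso's Theorems~18.5 and~18.6 (which encode exactly the De Giorgi--Letta verification you outline), while the fundamental estimate you sketch via nested annuli and pigeonhole is proved verbatim as a separate proposition (Theorem~\ref{fe}). The only ingredient you leave implicit is the a priori upper bound $\mathcal{F}^\varepsilon(u,A)\le\int_A|Du|^2\,dx$ for $u\in H^1(A)$, which the paper uses (together with the lower bound from the subcritical case) to identify the domain of finiteness and to make the fundamental estimate yield inner regularity of the $\Gamma$-$\limsup$; you should state this explicitly when you ``check that the inner-regular extension coincides with both the $\Gamma$-$\liminf$ and the $\Gamma$-$\limsup$.''
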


Before giving the proof of this theorem, let us introduce some definitions and results that will be used in the following. 
For further references see \cite{DM93}.

\begin{defn}
Let $(G^\varepsilon)$ be a sequence of functionals on $L^2(\Omega)$. Define the functionals $G', G'': L^2(\Omega) \rightarrow \mathbb{R}$ as follows:
\begin{align*}
G':= \Gamma\mbox{-}\liminf_{\varepsilon \rightarrow 0} G^\varepsilon \quad \hbox{and}\quad
G'':= \Gamma\mbox{-}\limsup_{\varepsilon \rightarrow 0} G^\varepsilon.
\end{align*}
\end{defn}

\begin{defn}
We say that a functional $G: L^2(\Omega)\times \mathcal{A}(\Omega) \rightarrow [0, + \infty]$ is increasing 
(on $\mathcal{A}(\Omega)$) if for every $u \in L^2(\Omega)$ the set function $G(u,\cdot)$ is increasing on $\mathcal{A}(\Omega)$.
\end{defn}

\begin{defn}
Given a functional $G: L^2(\Omega)\times \mathcal{A}(\Omega) \rightarrow [0, + \infty]$, we define its inner regularization as
\begin{equation*}
G_{-}(u,A):= \sup \big\{G(u,B): B\in \mathcal{A}(\Omega), B\subset\subset A \big\}.
\end{equation*}
Observe that if $G$ is increasing, then also $G_{-}$ is increasing.
\end{defn}
 
\begin{defn}
We say that a sequence $G^\varepsilon$ is $\overline{\Gamma}$-convergent to a functional $G$ 
whenever 
$$G = (G')_{-} = (G'')_{-}.$$
\end{defn}

We have the following compactness theorem.
\begin{thm}\label{cpt}
Every sequence of increasing functionals has a $\overline{\Gamma}$-convergent subsequence.
\end{thm}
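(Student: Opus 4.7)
The plan is to use the classical diagonal-selection argument combined with the separability of the underlying space and the monotonicity assumption. Since $L^2(\Omega)$ is a separable metric space, the general compactness theorem for $\Gamma$-convergence (in a separable metric setting) guarantees that, for any fixed open set $A$, every sequence of functionals on $L^2(\Omega)$ admits a $\Gamma$-convergent subsequence. The difficulty is to obtain a single subsequence that works \emph{simultaneously} for all open sets $A \in \mathcal{A}(\Omega)$ and then to identify the $\Gamma$-liminf and $\Gamma$-limsup after inner regularization.

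First I would fix a countable family $\mathcal{D} \subset \mathcal{A}(\Omega)$ with the property that for every $A \in \mathcal{A}(\Omega)$ and every $B \subset\subset A$ there exists $D \in \mathcal{D}$ with $B \subset\subset D \subset\subset A$; for instance, finite unions of open balls with rational centres and rational radii whose closure is contained in $\Omega$. Enumerate $\mathcal{D} = \{D_j\}_{j \in \mathbb{N}}$ and apply the scalar $\Gamma$-compactness theorem to each $D_j$ in turn, extracting nested subsequences; the Cantor diagonal subsequence $\sigma(\varepsilon)$ has the property that $\Gamma\text{-}\lim_{\varepsilon \to 0} G^{\sigma(\varepsilon)}(\cdot, D_j)$ exists in the strong $L^2$-topology for every $j$. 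Call this limit $\widehat{G}(\cdot, D_j)$ and define, for every $u \in L^2(\Omega)$ and every $A \in \mathcal{A}(\Omega)$,
\begin{equation*}
\mathcal{F}^{hom}_\sigma(u, A) := \sup\bigl\{\widehat{G}(u, D) : D \in \mathcal{D},\ D \subset\subset A\bigr\}.
\end{equation*}

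Next I would verify that $\mathcal{F}^{hom}_\sigma$ is the $\overline{\Gamma}$-limit of $(\mathcal{F}^{\sigma(\varepsilon)})$, that is, $(G')_- = (G'')_- = \mathcal{F}^{hom}_\sigma$. The monotonicity hypothesis is used here crucially: for any $D \subset\subset A$ and any $\varepsilon$, $\mathcal{F}^{\sigma(\varepsilon)}(u, D) \leq \mathcal{F}^{\sigma(\varepsilon)}(u, A)$, so taking $\Gamma\text{-}\liminf$ and $\Gamma\text{-}\limsup$ in the first argument preserves the inequality. On elements of $\mathcal{D}$ the two bounds coincide by construction, and then the supremum over $D \subset\subset A$ — which is by definition the inner regularization — forces $G'_-(u, A) = G''_-(u, A) = \mathcal{F}^{hom}_\sigma(u, A)$, by the approximation property of $\mathcal{D}$. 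This is the step I expect to be the main technical point, since it requires careful bookkeeping of the monotonicity in $A$ to promote $\Gamma$-convergence from the countable family $\mathcal{D}$ to its inner-regular extension to all of $\mathcal{A}(\Omega)$.

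Finally, to establish that $\mathcal{F}^{hom}_\sigma(u, \cdot)$ is the trace on $\mathcal{A}(\Omega)$ of a Borel measure on $\Omega$, I would invoke the De Giorgi--Letta criterion: an increasing, inner-regular set function on open sets that is subadditive and superadditive on disjoint open sets extends uniquely to a Borel measure. Inner regularity is built into the definition; monotonicity is inherited from the $\mathcal{F}^{\sigma(\varepsilon)}$; superadditivity on disjoint open sets follows from the corresponding property of the $\mathcal{F}^{\sigma(\varepsilon)}$ (separate integrals over disjoint sets) after passing to the $\Gamma$-liminf; subadditivity is the delicate classical \emph{fundamental estimate} / De Giorgi slicing argument, which in the present $SBV^2$ setting is available because the bulk term is quadratic and the surface term is $(n-1)$-dimensional Hausdorff measure, allowing cut-off constructions that do not create extra energy in the overlap region. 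Once the De Giorgi--Letta criterion applies, the extension is the desired Borel measure.
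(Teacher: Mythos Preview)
The paper does not supply its own proof of this theorem; it is stated as a known result, with the implicit reference being \cite{DM93} (this is Theorem~16.9 there). Your outline---choosing a countable rich family $\mathcal{D}\subset\mathcal{A}(\Omega)$, diagonalizing so that $\Gamma\text{-}\lim \mathcal{F}^{\sigma(\varepsilon)}(\cdot,D)$ exists for every $D\in\mathcal{D}$, and then using monotonicity in the set variable to force $(G')_-=(G'')_-$ on all of $\mathcal{A}(\Omega)$---is precisely the standard argument and is correct.

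One correction of scope: your final paragraph, on the De~Giorgi--Letta criterion and the extension of $\mathcal{F}^{hom}_\sigma(u,\cdot)$ to a Borel measure, is \emph{not} part of Theorem~\ref{cpt}. That conclusion belongs to Theorem~\ref{Gc}, and the paper obtains it separately by combining Theorem~\ref{cpt} with the fundamental estimate (Theorem~\ref{fe}) and then invoking \cite[Theorem~18.5]{DM93}. In particular, the subadditivity step genuinely requires the fundamental estimate and cannot be deduced from the increasing-functional hypothesis alone, so it is right that you flagged it as delicate---but it should be excised from the proof of the present statement.
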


Next Theorem provides an extension of the fundamental estimate to $SBV^2$. The proof follows 
easily from \cite[Proposition 3.1]{BDfV}, but we will include the details for the convenience of the reader.

\begin{thm}[Fundamental estimate in $SBV^2$]\label{fe}
For every $\eta>0$ and for every $A', A''$ and $B$ $\in \mathcal{A}(\Omega)$, with $A'\subset\subset A''$, 
there exists a constant $M>0$ with the following property: for every $\varepsilon>0$ and for
every $u\in SBV^2(A'')$ such that $S_u \subset \big(\tilde{E}^\varepsilon \cup \tilde{F}^{\varepsilon}\big) \cap A''$, and for every $v\in SBV^2(B)$ such that $S_v \subset \big(\tilde{E}^\varepsilon \cup \tilde{F}^{\varepsilon}\big) \cap B$ there exists a function $\varphi \in C_0^\infty (\Omega)$ with $\varphi = 1$ in a neighbourhood of $\bar{A'}$, 
$spt\, \varphi \subset A''$ and $0\leq \varphi \leq 1$ such that
\begin{equation*}
\mathcal{F}^\varepsilon(\varphi\,u + (1 - \varphi)\,v, A'\cup B) \leq (1 + \eta)\,\mathcal{F}^\varepsilon(u,A'') + 
(1 + \eta)\,\mathcal{F}^\varepsilon(v,B) + M \int_{T} |u - v|^2 dx,
\end{equation*}
where $T:= (A''\setminus A')\cap B$.
\end{thm}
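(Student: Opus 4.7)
The plan is to define $w := \varphi u + (1-\varphi) v$ for a single smooth cut-off $\varphi \in C_0^\infty(\Omega)$ satisfying $\varphi \equiv 1$ in a neighborhood of $\bar{A'}$, $\mathrm{spt}\,\varphi \subset A''$, and $0 \le \varphi \le 1$; such a $\varphi$ exists because $A' \subset\subset A''$, and its $C^1$-norm depends only on $A'$ and $A''$. The smoothness of $\varphi$ is essential: it gives $S_w \subset S_u \cup S_v$, so the admissibility constraint $S_w \subset \tilde{E}^\varepsilon \cup \tilde{F}^\varepsilon$ is inherited for free and $w$ is a valid competitor with $\mathcal{F}^\varepsilon(w, A' \cup B) < +\infty$.

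For the bulk term, I would split $A' \cup B$ into the pairwise disjoint pieces $A'$, $T := (A''\setminus A')\cap B$, and $B\setminus A''$, on which $w$ equals $u$, a nontrivial convex combination of $u$ and $v$, and $v$ respectively. On $T$ the $SBV^2$ chain rule gives $\nabla w = \varphi \nabla u + (1-\varphi)\nabla v + (u-v)\nabla \varphi$, and combining Young's inequality in the form $|a+b|^2 \le (1+\eta)|a|^2 + (1+\eta^{-1})|b|^2$ with the convexity bound $|\varphi \nabla u + (1-\varphi)\nabla v|^2 \le \varphi|\nabla u|^2 + (1-\varphi)|\nabla v|^2$ yields pointwise on $T$
\begin{equation*}
|\nabla w|^2 \le (1+\eta)\bigl(\varphi|\nabla u|^2 + (1-\varphi)|\nabla v|^2\bigr) + (1+\eta^{-1}) \|\nabla\varphi\|_\infty^2 \,|u-v|^2.
\end{equation*}
Integrating this inequality on $T$ and adding the trivial identities for $|\nabla w|^2$ on $A'$ and on $B\setminus A''$, together with the inclusions $A'\cup(T\cap\{\varphi>0\})\subset A''$ and $T\cup(B\setminus A'')\subset B$, delivers the bulk part of the desired estimate.

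The surface part is then immediate: from $S_w \cap A' = S_u \cap A'$, $S_w \cap (B\setminus A'') = S_v \cap (B\setminus A'')$, and $S_w \cap T \subset (S_u \cup S_v) \cap T$, one obtains $\mathcal{H}^{n-1}(S_w \cap (A'\cup B)) \le \mathcal{H}^{n-1}(S_u \cap A'') + \mathcal{H}^{n-1}(S_v \cap B)$, so multiplying by $\varepsilon$ gives a surface contribution already dominated by the $(1+\eta)$-factors in front of $\mathcal{F}^\varepsilon(u,A'')$ and $\mathcal{F}^\varepsilon(v,B)$. Taking $M := (1+\eta^{-1})\|\nabla\varphi\|_\infty^2$ then produces the claimed inequality. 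I do not expect a genuine obstacle; the only delicate point is the preservation of the admissibility condition on $S_w$, which is precisely what forces the use of a \emph{smooth} $\varphi$ rather than a sharp indicator of $A'$ (the latter would introduce spurious jumps on $\partial A'$ outside $\tilde{E}^\varepsilon \cup \tilde{F}^\varepsilon$). No pigeonhole over a family of cut-offs is needed, since the surface term on the right already carries the factor $1 \le (1+\eta)$ and so nothing has to be absorbed into a vanishing parameter.
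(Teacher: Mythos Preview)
Your argument is correct; in fact it proves a slightly stronger statement than the paper's, since your single cut-off $\varphi$ is chosen once and for all from $A'\subset\subset A''$ and does not depend on $u$, $v$, or $\varepsilon$.

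The paper's own proof follows the standard De Giorgi slicing technique (as in \cite{BDfV}): one fixes $k$ nested open sets $A'\subset\subset A_1\subset\subset\cdots\subset\subset A_{k+1}\subset\subset A''$ with associated cut-offs $\varphi_i$, uses the crude bound $|a+b|^2\le 2|a|^2+2|b|^2$ on each transition layer $T_i=(A_{i+1}\setminus\bar A_i)\cap B$, and then selects by pigeonhole the index $i_0$ minimizing the error; since the $T_i$ are disjoint, this yields a factor $2/k$ in front of $\mathcal{F}^\varepsilon(u,A'')+\mathcal{F}^\varepsilon(v,B)$, and one concludes by taking $k$ with $2/k<\eta$. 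Your route bypasses the slicing entirely by exploiting the quadratic structure twice: Young's inequality $|a+b|^2\le(1+\eta)|a|^2+(1+\eta^{-1})|b|^2$ produces the factor $(1+\eta)$ directly, and the convexity bound $|\varphi\nabla u+(1-\varphi)\nabla v|^2\le\varphi|\nabla u|^2+(1-\varphi)|\nabla v|^2$ lets you split the mixed gradient cleanly between $A''$ and $B$. This is shorter and gives a $\varphi$ independent of the data. The slicing argument, on the other hand, is the robust one: it works verbatim for integrands with mere $p$-growth and no convexity, which is why it is the default in the homogenization literature; for the specific functional $\mathcal{F}^\varepsilon$ at hand, however, your simplification is entirely legitimate.
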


\begin{proof}
Let $\eta > 0$, $A'$, $A''$ and $B$ be as in the statement. Let $A_1,\dots A_{k + 1}$ be open 
subsets of $\mathbb{R}^n$ such that $A'\subset\subset A_1 \subset\subset A_2 
\subset\subset \dots \subset\subset A_{k + 1} \subset\subset A''$. 
For every $i = 1,\dots,k$ let $\varphi_i$ be a function in $C_0^{\infty}(\Omega)$ with $\varphi_i = 1$
on a neighborhood of $\bar{A}_i$ and $\hbox{spt}\,\varphi \subset A_{i+1}$.

Now, let $u$ and $v$ be as in the statement and define 
the function $w_i$ on $A'\cup B$ as $w_i:= \varphi_i u + (1 - \varphi_i)\,v$ (where $u$ and $v$ are arbitrarily 
extended outside $A''$ and $B$, respectively). 
For $i = 1,\dots,k$ set $T_i:= (A_{i+1}\setminus \bar{A}_i)\cap B$. We can write, 
for fixed $\varepsilon>0$,
\begin{align}\label{prima}
\mathcal{F}^\varepsilon(w_i, A'\cup B) &=\, \int_{A'\cup B}|\,\nabla w_i|^2 dx + 
\varepsilon\,\mathcal{H}^{n - 1}\big(S_{w_i}\cap (A'\cup B)\big)\nonumber\\
&=\,(\mathcal{F}^\varepsilon)^*(u, (A'\cup B)\cap\bar{A_i}) + (\mathcal{F}^\varepsilon)^*(v, B\setminus A_{i+1}) + \mathcal{F}^\varepsilon(w_i,T_i)\nonumber\\
&\leq \,\mathcal{F}^\varepsilon(u, A'') + \mathcal{F}^\varepsilon(v, B) + \mathcal{F}^\varepsilon(w_i,T_i).
\end{align}
We can write more explicitly the last term in the previous expression as
\begin{align}\label{seconda}
\mathcal{F}^\varepsilon(w_i,T_i) = &\, \int_{T_i} |\,\varphi_i \nabla u + (1 - \varphi_i)\,\nabla v 
+ \nabla\varphi_i(u - v)|^2 dx + \varepsilon\,\mathcal{H}^{n - 1}\big(S_{w_i}\cap T_i\big)\nonumber\\
\leq &\,\int_{T_i} |\,\varphi_i \nabla u + (1 - \varphi_i)\,\nabla v 
+ \nabla\varphi_i(u - v)|^2 dx + \varepsilon\,\mathcal{H}^{n - 1}\big(S_{u}\cap T_i\big) + 
\varepsilon\,\mathcal{H}^{n - 1}\big(S_{v}\cap T_i\big)\nonumber\\
=:&\, I^\varepsilon_i(T_i). 
\end{align}
We would like to control $I^\varepsilon_i(T_i)$ by means of $\mathcal{L}^n(T_i)$. Let us define 
$M_k:= \max_{1\leq i\leq k}||\nabla \varphi_i||^2_{L^{\infty}}$. Hence
\begin{align}\label{terza}
I^\varepsilon_i(T_i)\leq &\, 2\int_{T_i} |\,\varphi_i \nabla u + (1 - \varphi_i)\,\nabla v 
|^2 dx + 2\int_{T_i} |\,\nabla\varphi_i(u - v)|^2 dx \,+ \nonumber\\ 
&+\,\varepsilon\,\mathcal{H}^{n - 1}\big(S_{u}\cap T_i\big) + 
\varepsilon\,\mathcal{H}^{n - 1}\big(S_{v}\cap T_i\big)\nonumber\\
\leq &\, 2\int_{T_i} |\,\nabla u|^2 dx + 2\int_{T_i} |\,\nabla v|^2 dx + 
2\int_{T_i} |\,\nabla\varphi_i\,|^2 |u - v|^2 dx \,+ \nonumber\\ 
&+\,\varepsilon\,\mathcal{H}^{n - 1}\big(S_{u}\cap T_i\big) + 
\varepsilon\,\mathcal{H}^{n - 1}\big(S_{v}\cap T_i\big)\nonumber\\
\leq&\, 2\,\mathcal{F}^\varepsilon(u,T_i) + 2\,\mathcal{F}^\varepsilon(v,T_i) + 
2\,M_k \int_{T_i} |u - v|^2 dx = : J^\varepsilon(T_i).
\end{align}
Now, let $i_0 \in \{1,\dots,k\}$ be such that $T_{i_0}$ realizes $\min_{1\leq i \leq k}J^\varepsilon(T_i)$. 
Then, being $J^\varepsilon$ a measure, we have
\begin{equation}\label{quarta}
J^\varepsilon(T_{i_0}) \leq \frac{1}{k}\,\sum_{i = 1}^{k}J^\varepsilon(T_{i}) \leq \frac{1}{k}\,J^\varepsilon(T).
\end{equation}
Notice that $i_0 = i_0 (\varepsilon)$, it depends on $\varepsilon$. 

Combining together (\ref{prima})-(\ref{quarta}), we get
\begin{align}\label{bene}
\mathcal{F}^\varepsilon(w_{i_0}, A'\cup B) &\leq\, \mathcal{F}^\varepsilon(u,A'') + 
\mathcal{F}^\varepsilon(v, B) + \frac{1}{k}\,J^\varepsilon(T) \nonumber\\
=&\, \mathcal{F}^\varepsilon(u, A'') + 
\mathcal{F}^\varepsilon(v, B) + 
\frac{2}{k}\,\mathcal{F}^\varepsilon(u,T) + \frac{2}{k}\,\mathcal{F}^\varepsilon(v,T) + 
\frac{2}{k}\,M_k \int_{T} |u - v|^2 dx\nonumber\\
\leq&\, \mathcal{F}^\varepsilon(u,A'') + \mathcal{F}^\varepsilon(v, B) + 
\frac{2}{k}\,\mathcal{F}^\varepsilon(u,A'') + \frac{2}{k}\,\mathcal{F}^\varepsilon(v,B) + 
\frac{2}{k}\,M_k \int_{T} |u - v|^2 dx.
\end{align}
Now, since the choice of the number $k$ of the stripes between $A'$ and $A''$ is 
completely free, we can assume that $k$ is such that $\frac{2}{k} < \eta$. Hence
$k = k(\eta)$. Let us define $\overline{M}_\eta:= \frac{2}{k} M_k$; then in (\ref{bene})
we have
\begin{equation*}
\mathcal{F}^\varepsilon(w_{i_0}, A'\cup B) \leq (1 + \eta)\,\mathcal{F}^\varepsilon(u,A'') + 
(1 + \eta)\,\mathcal{F}^\varepsilon(v,B) + \overline{M}_\eta \int_{T} |u - v|^2 dx,
\end{equation*}
which is exactly the claim.
\end{proof}

Now we are ready to give the proof of Theorem \ref{Gc}.

\begin{proof} [Proof of Theorem \ref{Gc}]
Since for every $\varepsilon>0$ the functional $\mathcal{F}^\varepsilon$ is increasing, we deduce by Theorem \ref{cpt} 
that there exist a subsequence $(\sigma(\varepsilon))$ and a functional $\mathcal{F}_\sigma^{hom}: L^2(\Omega)
\times \mathcal{A}(\Omega) \rightarrow [0,+ \infty]$ such that $\mathcal{F}_\sigma^{hom} = \overline{\Gamma}(L^2)\mbox{-}\lim_{\varepsilon\rightarrow 0}\mathcal{F}^{\sigma(\varepsilon)}$. 
We put a subscript $\sigma$ in order to underline that the limit functional may depend on the subsequence. 
Now define the nonnegative increasing functional $J: L^2(\Omega)\times \mathcal{A}(\Omega) \rightarrow [0,+ \infty]$ as
\begin{equation*}
J(u,A):= 
\begin{cases}
\displaystyle \int_{A}|\nabla u|^2 dx  \quad &\hbox{if } u_{|A}\in H^1(A),\\
+ \infty \quad &\hbox{otherwise}.
\end{cases}
\end{equation*}
Clearly, $J$ is a measure with respect to $A$. Moreover $0\leq \mathcal{F}^{\sigma(\varepsilon)} \leq J$ for every $\varepsilon > 0$ and 
the fundamental estimate holds uniformly for the subsequence $(\mathcal{F}^{\sigma(\varepsilon)})$ by Theorem \ref{fe}. Then we can proceed as in \cite[Proposition 18.6]{DM93} and we obtain that  
\begin{equation*}
\mathcal{F}_\sigma^{hom}(u,A) = (\mathcal{F}_\sigma^{hom})'(u,A) = (\mathcal{F}^{hom}_\sigma)''(u,A)
\end{equation*}
for every $u\in L^2(\Omega)$ and for every $A\in \mathcal{A}(\Omega)$ such that 
$J(u,A) < +\infty$.

Fix $A \in \mathcal{A}(\Omega)$. As we noticed in Theorem \ref{liminf1}, we have the bound 
$\mathcal{F}^{\sigma(\varepsilon)}(\cdot,A) \geq \mathcal{G}^{\sigma(\varepsilon)}(\cdot,A)$, 
with $\mathcal{G}^{\sigma(\varepsilon)}$ defined in (\ref{defGe}). Hence by Theorem \ref{GammaG} the $\Gamma$-limit of 
$\mathcal{F}^{\sigma(\varepsilon)}(\cdot,A)$ is finite only on $H^1(A)$, which is the same domain 
where $J(\cdot,A)$ is finite, and is given by $\mathcal{F}^{hom}_\sigma(\cdot,A)$. This proves the 
stated convergence of a subsequence $\big(\mathcal{F}^{\sigma(\varepsilon)}\big)$.

Finally, $\mathcal{F}^\varepsilon(u,\cdot)$ is the restriction to $\mathcal{A}(\Omega)$ of a Borel measure 
on $\Omega$. Then, by Theorem \ref{fe} and \cite[Theorem 18.5]{DM93} we have that for every 
$u\in L^2(\Omega)$ the set function $\mathcal{F}^{hom}_\sigma(u,\cdot)$ is the restriction to $\mathcal{A}(\Omega)$ of 
a Borel measure on $\Omega$.
\end{proof}

Now we show some general properties for the $\Gamma$-limit of $\mathcal{F}^\varepsilon$, even if, up to
now, we have proved the convergence only for a subsequence. The fact that the whole sequence 
converges will follow from the characterization of the $\Gamma$-limit, which will depend only on the 
gradient of the displacement and not on the subsequence $\sigma(\varepsilon)$.
From now on let us assume that we have already proved it and postpone the proof to the end of the section. 
Hence we can omit the subscript $\sigma$ and call $\mathcal{F}^{hom}$ the $\Gamma$-limit of the whole sequence 
$(\mathcal{F}^\varepsilon)$.


\begin{lem}\label{citare}
The restriction of the functional $\mathcal{F}^{hom}: L^2(\Omega)\times \mathcal{A}(\Omega) \rightarrow [0, + \infty]$ 
to $H^1(\Omega)\times \mathcal{A}(\Omega)$ satisfies the following properties: 
for every $u,v \in H^1(\Omega)$ and for every $A\in \mathcal{A}(\Omega)$
\begin{itemize}
\item[(a)] $\mathcal{F}^{hom}$ is local, i.e., $\mathcal{F}^{hom}(u,A) = \mathcal{F}^{hom}(v,A)$ whenever $u_{|A} = v_{|A}$;
\vspace{.15cm}
\item[(b)] the set function $\mathcal{F}^{hom}(u,\cdot)$ is the restriction to $\mathcal{A}(\Omega)$ of a Borel measure on $\Omega$; 
\vspace{.15cm}
\item[(c)] $\mathcal{F}^{hom}(\cdot,A)$ is sequentially weakly lower semicontinuous on $H^1(\Omega)$;
\vspace{.15cm}
\item[(d)] for every $a\in \mathbb{R}$ we have $\mathcal{F}^{hom}(u,A) = 
\mathcal{F}^{hom}(u + a, A)$;
\vspace{.15cm}
\item[(e)] $\mathcal{F}^{hom}$ satisfies the bound 
\begin{equation*}
0 \leq \mathcal{F}^{hom}(u,A) \leq \int_{A}|D u|^2 dx.
\end{equation*}
\end{itemize}
\end{lem}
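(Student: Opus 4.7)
All five properties are structural consequences of the fact that $\mathcal{F}^{hom}$ is realized, on the subsequence $\sigma(\varepsilon)$ from Theorem \ref{Gc}, as the $\Gamma$-limit in the strong $L^2$-topology of the family $(\mathcal{F}^\varepsilon)$, combined with elementary features of the approximating functionals. Property (b) is literally the last assertion of Theorem \ref{Gc}, so there is nothing new to prove. Property (e) I would obtain by applying the $\Gamma$-$\limsup$ inequality to the constant-in-$\varepsilon$ trial sequence $u^\varepsilon \equiv u$: for any $u \in H^1(\Omega)$ the jump set $S_u$ is empty, so the constraint $S_u \subset \tilde E^\varepsilon \cup \tilde F^\varepsilon$ is vacuous and $\mathcal{F}^\varepsilon(u, A) = \int_A |Du|^2\,dx$; the lower bound $\mathcal{F}^{hom} \geq 0$ is trivial.

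Property (d) is immediate: adding a constant $a$ changes neither the approximate gradient $\nabla u$ nor the jump set $S_u$, so $\mathcal{F}^\varepsilon(\cdot + a, A) = \mathcal{F}^\varepsilon(\cdot, A)$ for every $\varepsilon$, and translating a recovery sequence for $u$ by $a$ produces a recovery sequence for $u + a$ with identical energies (and vice versa). For property (c), I would invoke the general fact that a $\Gamma$-limit with respect to a topology $\tau$ is automatically sequentially $\tau$-lower semicontinuous; combined with Rellich's compactness theorem (weak $H^1$-convergence implies strong $L^2$-convergence on a bounded domain), this gives the sequential weak $H^1$-lower semicontinuity of $\mathcal{F}^{hom}(\cdot, A)$.

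The only item that requires a small construction is locality (a), which I expect to be the most delicate step. Given $u, v \in H^1(\Omega)$ with $u = v$ a.e. on $A$, let $u^\varepsilon \to u$ in $L^2(\Omega)$ be a recovery sequence attaining $\mathcal{F}^{hom}(u, A)$, and define
\begin{equation*}
v^\varepsilon(x) := \begin{cases} u^\varepsilon(x) & \text{if } x \in A, \\ v(x) & \text{if } x \in \Omega \setminus A. \end{cases}
\end{equation*}
Because $u = v$ on $A$, we still have $v^\varepsilon \to v$ strongly in $L^2(\Omega)$, while $\mathcal{F}^\varepsilon(v^\varepsilon, A) = \mathcal{F}^\varepsilon(u^\varepsilon, A)$ since the localized functional depends only on the restriction to $A$ (in particular, the jump sets inside $A$ coincide and no jump on $\partial A$ contributes). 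The $\Gamma$-$\limsup$ inequality then yields $\mathcal{F}^{hom}(v, A) \leq \mathcal{F}^{hom}(u, A)$, and reversing the roles of $u$ and $v$ gives equality.
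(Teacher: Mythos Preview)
Your proof is correct and follows essentially the same approach as the paper's. The paper disposes of (a) and (c) in a single sentence by invoking the general fact that $\Gamma$-limits inherit locality and lower semicontinuity from the approximating functionals, whereas you spell out the standard gluing construction for (a) and the Rellich argument for (c); your treatments of (b), (d), and (e) match the paper's almost verbatim.
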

\begin{proof}
Properties (a) and (c) follow from the fact that $\mathcal{F}^{hom}(\cdot,A)$ is the $\Gamma$-limit of 
the sequence $\mathcal{F}^\varepsilon(\cdot,A)$, while (b) comes from Theorem \ref{Gc}. For property 
(d) we can proceed as follows. Let $u\in H^1(\Omega)$, $A\in\mathcal{A}(\Omega)$ and consider 
a recovery sequence $(u^\varepsilon) \subset L^2(\Omega) \cap SBV^2(A)$ satisfying the usual constraints for 
the jump set, converging to $u$ strongly in $L^2(\Omega)$ and such that $\big(\mathcal{F}^\varepsilon(u^\varepsilon,A)\big)$
converges to $\mathcal{F}^{hom}(u,A)$. Then $(u^\varepsilon + a)$ converges to $u + a$ in $L^2(\Omega)$ and 
\begin{equation*}
\mathcal{F}^{hom}(u + a,A)\leq \liminf_{\varepsilon\rightarrow 0}\mathcal{F}^\varepsilon(u^\varepsilon + a,A) = 
\liminf_{\varepsilon\rightarrow 0}\mathcal{F}^\varepsilon(u^\varepsilon,A) = \mathcal{F}^{hom}(u,A).
\end{equation*}
On the other hand, $\mathcal{F}^{hom}(u,A) = \mathcal{F}^{hom}((u+a) + (-a),A)\leq \mathcal{F}^{hom}(u + a,A)$, hence 
(d) is proved. For property (e), we just recall that the $\Gamma$-limit of the sequence $(\mathcal{F}^\varepsilon)$ is bounded from above by the Dirichlet functional, since that value is reached by a special sequence. 
\end{proof}

Next theorem shows that the functional $\mathcal{F}^{hom}$ admits an integral representation.

\begin{thm}
There exists a unique convex function $f : \mathbb{R}^n \rightarrow [0, + \infty[$ with the following properties: 
\begin{itemize}
\item[(i)] $0 \leq f(\xi) \leq |\xi|^2$ for every $\xi \in \mathbb{R}^n$;
\vspace{.15cm}
\item[(ii)] $ \displaystyle \mathcal{F}^{hom}(u,A) = \int_{A} f (D u)\,dx$ \, for every $A \in \mathcal{A}(\Omega)$ and 
for every $u\in H^1(A)$.
\end{itemize}
\end{thm}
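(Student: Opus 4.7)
The plan is to apply a Buttazzo--Dal Maso type integral representation theorem: the properties collected in Lemma \ref{citare}, namely locality (a), the measure property (b), weak lower semicontinuity on $H^1$ (c), invariance under addition of constants (d), and the growth bound (e), are precisely the hypotheses needed. They yield the existence of a Carath\'eodory function $\tilde{f}:\Omega\times\mathbb{R}^n\to[0,+\infty[$ such that
\begin{equation*}
\mathcal{F}^{hom}(u,A) = \int_A \tilde{f}(x, D u)\,dx
\end{equation*}
for every $A\in\mathcal{A}(\Omega)$ and every $u\in H^1(A)$. Moreover, since $n\geq 2$ is immaterial here (displacements are scalar), the weak lower semicontinuity in (c) forces $\tilde f(x,\cdot)$ to be convex, and the bound in (e) gives $0\leq \tilde f(x,\xi)\leq |\xi|^2$ for a.e.\ $x$ and every $\xi\in\mathbb{R}^n$.

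To remove the $x$-dependence I would exploit the translation invariance inherited from the periodic structure. For every $\varepsilon>0$ and every $z\in\varepsilon\mathbb{Z}^n$ such that $A$ and $A+z$ are both contained in $\Omega$, the functional $\mathcal{F}^\varepsilon$ satisfies $\mathcal{F}^\varepsilon(u(\cdot-z),A+z)=\mathcal{F}^\varepsilon(u,A)$, because the $Q$-periodic density $f_{\alpha_\varepsilon}(\cdot/\varepsilon)$ is invariant under translations by $\varepsilon\mathbb{Z}^n$. Passing to the $\Gamma$-limit, and approximating an arbitrary $z\in\mathbb{R}^n$ by vectors of $\varepsilon\mathbb{Z}^n$ as $\varepsilon\to 0$, one obtains
\begin{equation*}
\mathcal{F}^{hom}(u(\cdot-z),A+z) = \mathcal{F}^{hom}(u,A)
\end{equation*}
for every open set $A\subset\subset\Omega$ with $A+z\subset\subset\Omega$. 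Applied to affine test functions $u_\xi(x)=\xi\cdot x+c$, which add a constant harmless under (d), this translation invariance yields $\tilde f(x+z,\xi)=\tilde f(x,\xi)$ for a.e.\ $x$ and every admissible $z$, so $\tilde f$ does not depend on the space variable. We therefore define $f(\xi):=\tilde f(\cdot,\xi)$, a convex function on $\mathbb{R}^n$ satisfying (i).

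Uniqueness of $f$ is immediate from the integral representation together with the Lebesgue differentiation theorem: for any other convex $\bar f$ with the same property one would have, by testing on $u_\xi(x)=\xi\cdot x$ on cubes $Q_r(x_0)$ and dividing by $r^n$,
\begin{equation*}
f(\xi) = \lim_{r\to 0^+}\frac{1}{r^n}\,\mathcal{F}^{hom}(u_\xi,Q_r(x_0)) = \bar f(\xi)
\end{equation*}
for every $\xi\in\mathbb{R}^n$. The main obstacle I anticipate is not the application of the integral representation theorem itself, whose hypotheses have already been verified in Lemma \ref{citare}, but rather justifying rigorously the passage from the discrete $\varepsilon\mathbb{Z}^n$-invariance of $\mathcal{F}^\varepsilon$ to the continuous translation invariance of $\mathcal{F}^{hom}$; this requires carefully selecting, for each $z\in\mathbb{R}^n$, points $z_\varepsilon\in\varepsilon\mathbb{Z}^n$ with $z_\varepsilon\to z$ and combining a recovery sequence for $u$ on $A$ with the translated functionals, using the fundamental estimate of Theorem \ref{fe} to absorb the error on a thin interface.
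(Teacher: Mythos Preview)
Your proposal is correct and follows essentially the same route as the paper. The paper also invokes the integral representation theorem from \cite{DM93} using exactly the properties of Lemma \ref{citare}, and then removes the $x$-dependence by the very translation argument you outline: it takes a recovery sequence for $\xi\cdot x$ on $B_\varrho(y)$, reduces it to $SBV^2_0$ by a cut-off (your ``fundamental estimate on a thin interface''), translates by $\tau^\varepsilon=\varepsilon\big[\tfrac{z-y}{\varepsilon}\big]\in\varepsilon\mathbb{Z}^n$ (your $z_\varepsilon$), and compares on a slightly larger ball $B_{r\varrho}(z)$ before letting $r\to 1$---so the ``main obstacle'' you anticipate is handled in precisely the way you suggest.
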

\begin{proof}
Notice that the functional $\mathcal{F}^{hom}$ satisfies all the assumptions of \cite[Theorem 20.1]{DM93}, 
so thanks to Lemma \ref{citare} the Carath\'eodory function $f:\Omega\times \mathbb{R}^n \rightarrow \mathbb{R}$ defined as
\begin{equation}\label{integ}
f(y,\xi):= \limsup_{\varrho\rightarrow 0} \frac{\mathcal{F}^{hom}(\xi\cdot x, B_\varrho(y))}{\mathcal{L}^n(B_\varrho(y))}
\end{equation}
provides the integral representation 
\begin{equation*}
\mathcal{F}^{hom}(u,A) = \int_A f(x,D u)\,dx
\end{equation*}
for every $A\in \mathcal{A}(\Omega)$ and for every $u\in L^2(\Omega)$ such that $u_{|A} \in H^1(A)$. 
Moreover the same theorem ensures that for a.e. $x\in \Omega$ the function $f(x,\cdot)$ is convex 
on $\mathbb{R}^n$ and that 
\begin{equation*}
0 \leq f(x,\xi) \leq |\xi|^2 \quad \hbox{for a.e. } x\in \Omega \,\hbox{and for every } \xi \in \mathbb{R}^n.
\end{equation*}
It remains to show that $f$ is independent of the first variable. Using the definition (\ref{integ}), 
it is sufficient to prove that for every $y,z \in \Omega$ and $\xi\in \mathbb{R}^n$ and for every $\varrho > 0$, we have
\begin{equation}\label{ug}
\mathcal{F}^{hom}(\xi\cdot x, B_\varrho(y)) =  \mathcal{F}^{hom}(\xi\cdot x, B_\varrho(z)).
\end{equation}
Hence, let us fix $y,z \in \Omega$ and $\xi\in \mathbb{R}^n$ and $\varrho > 0$; being $\mathcal{F}^{hom}(\cdot,B_\varrho(y))$ a $\Gamma$-limit, there exists a recovery sequence $(u^\varepsilon) \subset SBV^2(B_\varrho(y))$ satisfying the usual constraint on the jump set, such that $u^\varepsilon \rightarrow 0$ strongly in $L^2(\Omega)$ and
\begin{equation*}
\lim_{\varepsilon\rightarrow 0} \mathcal{F}^\varepsilon(\xi\cdot x + u^\varepsilon,B_\varrho(y)) = 
\mathcal{F}^{hom}(\xi\cdot x, B_\varrho(y)).
\end{equation*}
Without loss of generality we can assume $(u^\varepsilon)\subset SBV^2_0(B_\varrho(y))$, where the 
subscript $0$ denotes the functions vanishing on the boundary. Indeed we can always reduce to this case by 
means of a cut-off function. Now let us define the vector $\tau^\varepsilon \in \mathbb{R}^n$ as 
\begin{equation*}
\tau^\varepsilon:= \varepsilon\,\bigg[\frac{z - y}{\varepsilon}\bigg],
\end{equation*}
where the symbol $[\cdot]$ denotes the integer part componentwise. Extend $u^\varepsilon$ by zero 
out of $B_\varrho(y)$ and define the new sequence $v^\varepsilon(x):= u^\varepsilon(x - \tau^\varepsilon)$. 
It turns out that $S_{v^\varepsilon}\subset \tilde{E}^\varepsilon \cup \tilde{F}^\varepsilon$; 
moreover $v^\varepsilon$ is identically zero out of $B_\varrho(y) + \tau^\varepsilon$ and it
converges to zero strongly in $L^2(\Omega)$. Observe that for small enough $\varepsilon$ and for every $r > 1$ 
we have that $B_\varrho(y) + \tau^\varepsilon \subset B_{r\varrho}(z)$. 
Hence the sequence 
$\xi\cdot x + v^\varepsilon$ gives a bound for $\mathcal{F}^{hom}(\xi\cdot x,B_{\varrho}(z))$, that is
\begin{align}\label{bfv}
\mathcal{F}^{hom}(\xi\cdot x,B_{\varrho}(z)) &\,\leq \mathcal{F}^{hom}(\xi\cdot x,B_{r\varrho}(z)) \leq 
\liminf_{\varepsilon\rightarrow 0} \mathcal{F}^\varepsilon(\xi\cdot x + v^\varepsilon, B_{r\varrho}(z))\nonumber\\
&=\, \liminf_{\varepsilon\rightarrow 0} \bigg\{\int_{B_{r\varrho}(z)}|\xi + \nabla v^\varepsilon|^2 dx + 
\varepsilon\,\mathcal{H}^{n - 1}(S_{v^\varepsilon} \cap B_{r\varrho}(z))\bigg\}.
\end{align}
We can rewrite the last line of (\ref{bfv}) in terms of $u^\varepsilon$, and so we get 
\begin{align*}
\mathcal{F}^{hom}(\xi\cdot x,B_{\varrho}(z)) &\leq\, 
\liminf_{\varepsilon\rightarrow 0} \bigg\{\int_{B_{\varrho}(y)}|\xi + \nabla u^\varepsilon|^2 dx + 
|\xi|^2 \mathcal{L}^n(B_{r\varrho}\setminus B_\varrho) + 
\varepsilon\,\mathcal{H}^{n - 1}(S_{u^\varepsilon} \cap B_{\varrho}(y))\bigg\}\\
&=\, \mathcal{F}^{hom}(\xi\cdot x,B_{\varrho}(y)) + |\xi|^2 \mathcal{L}^n(B_{r\varrho}\setminus B_\varrho).
\end{align*}
Now, if we let $r \rightarrow 1$ we have that $\mathcal{F}^{hom}(\xi\cdot x,B_{\varrho}(z))\leq \mathcal{F}^{hom}(\xi\cdot x,B_{\varrho}(y))$. 
The reverse inequality can be deduced in the same way, hence the claim follows.
\end{proof}

\subsection{Homogenization formula}
Once we have shown that the $\Gamma$-limit of the sequence $(\mathcal{F}^\varepsilon)$ admits 
an integral representation, it remains to characterize the limit density. We will prove that 
it solves an asymptotic cell problem.

We define the function $f_{hom}: \mathbb{R}^n \rightarrow [0,+\infty)$ as
\begin{equation}\label{deff0}
f_{hom}(\xi):= \lim_{t\rightarrow +\infty}\frac{1}{t^n}\,\inf\bigg\{
\int_{(0,t)^n} |\xi + \nabla w|^2 d\,x + \mathcal{H}^{n - 1}(S_w) : w\in SBV^2_0\big((0,t)^n\big), S_w \subset \tilde{E}\cup\tilde{F} \bigg\} 
\end{equation}
where, according to the notation used so far, we have
\begin{equation*}
\tilde{E}:= (E + \mathbb{Z}^n), \quad \tilde{F}:= (F + \mathbb{Z}^n).
\end{equation*}

\begin{thm}
The function $f_{hom}$ in (\ref{deff0}) is well defined, that is the function
\begin{equation}\label{defg}
g(t):= \frac{1}{t^n}\,\inf\bigg\{
\int_{(0,t)^n} |\xi + \nabla w|^2 dx + \mathcal{H}^{n - 1}(S_w) : w\in SBV^2_0\big((0,t)^n\big), S_w \subset \tilde{E}\cup \tilde{F}\bigg\}
\end{equation}
admits a limit as $t \rightarrow + \infty$.
\end{thm}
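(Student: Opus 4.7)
The plan is to prove that $g(t)$ admits a limit by a standard (almost) subadditivity argument of Fekete type, exploiting the $\mathbb{Z}^n$-periodicity of $\tilde{E}\cup\tilde{F}$ and the zero boundary condition built into $SBV^2_0$. Let us write $m(t):=t^n g(t)$ for the infimum in \eqref{defg}, so that $m(t)$ is the actual energy and $g(t)=m(t)/t^n$.

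First, I would record two easy facts. Taking $w\equiv 0$ as a competitor gives the uniform bound $0\le g(t)\le|\xi|^2$ for every $t>0$, so both $\liminf_{t\to\infty} g(t)$ and $\limsup_{t\to\infty} g(t)$ are finite. Next, I restrict to integer values $s\in\mathbb{N}$, since only integer translations preserve the periodic sets $\tilde{E}$ and $\tilde{F}$.

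The main step is the following near-subadditivity estimate: for every $\eta>0$, every $s\in\mathbb{N}$, and every $t>s$, writing $N=\lfloor t/s\rfloor$,
\begin{equation*}
m(t)\;\le\; N^n\bigl(m(s)+\eta\bigr)\;+\;|\xi|^2\bigl(t^n-N^n s^n\bigr).
\end{equation*}
To prove this, let $w_s\in SBV^2_0((0,s)^n)$ with $S_{w_s}\subset\tilde{E}\cup\tilde{F}$ be an $\eta$-almost minimizer for $m(s)$. For each $k\in\{0,\dots,N-1\}^n$ set $W(x):=w_s(x-ks)$ on the translated cube $ks+(0,s)^n$, and $W:=0$ on $(0,t)^n\setminus(0,Ns)^n$. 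Since $sk\in\mathbb{Z}^n$ and $\tilde{E},\tilde{F}$ are $\mathbb{Z}^n$-periodic, the jump set constraint $S_W\subset\tilde{E}\cup\tilde{F}$ is preserved; since $w_s$ has zero trace on $\partial(0,s)^n$, gluing the translated copies creates no new jumps on the internal interfaces, and extending by $0$ outside $(0,Ns)^n$ gives $W\in SBV^2_0((0,t)^n)$. A direct computation yields the claimed estimate, with the term $|\xi|^2(t^n-N^n s^n)$ coming from the contribution of $|\xi+\nabla W|^2=|\xi|^2$ on the leftover strip $(0,t)^n\setminus(0,Ns)^n$.

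Dividing by $t^n$ and letting $t\to\infty$ with $s$ fixed, the ratio $N^n s^n/t^n$ tends to $1$ and the leftover volume $t^n-N^n s^n$ is $o(t^n)$, so
\begin{equation*}
\limsup_{t\to\infty} g(t)\;\le\;\frac{m(s)+\eta}{s^n}\;=\;g(s)+\frac{\eta}{s^n}.
\end{equation*}
Letting $\eta\to 0$ and then taking the infimum over $s\in\mathbb{N}$ gives $\limsup_{t\to\infty} g(t)\le\inf_{s\in\mathbb{N}} g(s)\le\liminf_{t\to\infty} g(t)$, so the two coincide and the limit $f_{hom}(\xi)$ is well defined (and equals $\inf_{s\in\mathbb{N}} g(s)$). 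The only delicate point in the proof is the gluing construction: I need $s$ to be an integer so that the translations $ks$ are integer vectors (preserving $\tilde{E}\cup\tilde{F}$), and I need the zero boundary condition of $SBV^2_0$ to avoid introducing spurious jumps on the cube interfaces; everything else is just Fekete's lemma dressed up with a negligible boundary error.
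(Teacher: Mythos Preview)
Your approach is essentially the same Fekete-type argument as the paper's, and the tiling construction is correct. There is, however, a small but genuine gap in the last line: the inequality $\inf_{s\in\mathbb{N}} g(s)\le\liminf_{t\to\infty} g(t)$ is not automatic, because the $\liminf$ on the right is taken over \emph{real} $t$, while your subadditivity estimate only controls $g$ at integer $s$. A priori the liminf could be attained along a non-integer sequence $t_j\to\infty$, and nothing you have written rules out $g(\lceil t_j\rceil)>g(t_j)$ uniformly.

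The fix is cheap: for real $t$, extend an $\eta$-almost minimizer $w_t\in SBV^2_0((0,t)^n)$ by zero to $(0,\lceil t\rceil)^n$; this gives $m(\lceil t\rceil)\le m(t)+\eta+|\xi|^2(\lceil t\rceil^n-t^n)$, hence $g(\lceil t\rceil)\le g(t)+o(1)$ as $t\to\infty$, and then $\inf_{s\in\mathbb{N}}g(s)\le g(\lceil t_j\rceil)\le g(t_j)+o(1)$ yields the missing inequality. The paper sidesteps the issue altogether by tiling with period $[t]+1$ rather than $t$: this keeps the translations integer-valued while allowing $t$ to be real, so that both the $\limsup_s$ and the $\liminf_t$ are taken over all of $\mathbb{R}_+$ and the conclusion $\limsup_s g(s)\le\liminf_t g(t)$ follows directly.
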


\begin{proof}
Let $\xi \in\mathbb{R}^n$ and let $t>0$; by definition of $g$, there exists a function 
$u_t \in SBV^2_0\big((0,t)^n\big)$ with $S_{u_t} \subset \tilde{E}\cup\tilde{F}$ such that
\begin{equation*}
\frac{1}{t^n}\, \bigg\{\int_{(0,t)^n} |\xi + \nabla u_t|^2 d\,x + \mathcal{H}^{n - 1}(S_{u_t})\bigg\}
\leq g(t) + \frac{1}{t}.
\end{equation*}
Fix $s>t$ and define a subset of $\mathbb{N}^n$ as
\begin{equation*}
K:= \big\{ \mathbf{k} = (k_1,\dots,k_n) \in \mathbb{N}^n : 0< ([t] + 1)\,k_j < s,\,\hbox{for } j=1,\dots,n\big\}.
\end{equation*}
Then, we define the set $I:= ([t] + 1)K$.
Now, consider the function $u_s: \mathbb{R}^n \rightarrow \mathbb{R}$ defined in the following way:
\begin{equation*}
u_s(x):= 
\begin{cases}
u_t(x - \mathbf{i}) & \hbox{if } x \in \mathbf{i} + (0,t)^n, \mathbf{i} \in I,\\
0 & \hbox{otherwise.}
\end{cases}
\end{equation*}
The fact that we performed a translation by integers and the $Q$-periodicity of the jumps for the 
function $u_t$ entail
$S_{u_s}\subset \tilde{E}\cup\tilde{F}$. Moreover, $u_s$ vanishes on the boundary of 
$(0,s)^n$. Hence, $u_s$ is a competitor for $g(s)$, and so
\begin{equation*}
g(s)\leq  \frac{1}{s^n}\,\bigg\{
\int_{(0,s)^n} |\xi + \nabla u_s|^2 d\,x + \mathcal{H}^{n - 1}(S_{u_s}) 
\bigg\}.
\end{equation*}
Define the set $R^s_t \subset (0,s)^n$ as
\begin{equation*}
R^s_t:= (0,s)^n \setminus \bigcup_{\mathbf{i}\in I} \big(\mathbf{i} + (0,t)^n\big). 
\end{equation*}
Since for the cardinality of the set $I$ we have
\begin{equation}\label{cardI}
\frac{s^n}{([t] + 1)^n} - 1 < |I| = \Big(\Big[\frac{s}{[t] + 1}\Big]\Big)^n \leq \frac{s^n}{([t] + 1)^n},
\end{equation}
then it turns out that 
\begin{equation}\label{resto}
\mathcal{L}^n(R^s_t) = s^n - \Big(\Big[\frac{s}{[t] + 1}\Big]\Big)^n t^n
\leq s^n - \Big(\frac{s - ([t] + 1)}{[t] + 1}\Big)^n t^n.
\end{equation}
Notice that $u_s = 0$ on $R^s_t$ and that $S_{u_s} \cap R^s_t = \emptyset$; therefore
\begin{align*}
g(s)&\,\leq \frac{1}{s^n}\,\bigg\{ \mathcal{L}^n(R^s_t)\,|\xi|^2 + \sum_{\mathbf{i}\in I} 
\int_{\mathbf{i} + (0,t)^n} |\xi + \nabla u_s|^2 d\,x  + \sum_{\mathbf{i}\in I} 
\mathcal{H}^{n - 1}\big(S_{u_s}\cap(\mathbf{i} + (0,t)^n)\big)\bigg\}\\
&=\,\frac{1}{s^n}\,\bigg\{ \mathcal{L}^n(R^s_t)\,|\xi|^2 + \sum_{\mathbf{i}\in I} 
\int_{(0,t)^n} |\xi + \nabla u_t|^2 d\,x  + \sum_{\mathbf{i}\in I} 
\mathcal{H}^{n - 1}\big(S_{u_t}\cap(0,t)^n\big)\bigg\}.
\end{align*}
Using (\ref{cardI}) and (\ref{resto}) we obtain, finally,
\begin{equation*}
g(s) \leq \,\frac{t^n}{([t] + 1)^n}\,\Big(g(t) + \frac{1}{t}\,\Big) + 
\,|\xi|^2 \bigg(1 - \Big(\frac{s - t - 1}{s}\Big)^n\Big(\frac{t}{t + 1}\Big)^n\bigg).
\end{equation*}
Taking first the upper limit as $s \rightarrow + \infty$ and then the lower limit as 
$t \rightarrow + \infty$ we get
\begin{equation*}
\limsup_{s \rightarrow + \infty} g(s) \leq \liminf_{t \rightarrow + \infty} g(t),
\end{equation*}
and this concludes the proof.
\end{proof}

\noindent
Next theorem shows that the $\Gamma$-limit of the sequence $(\mathcal{F}^\varepsilon)$ can be expressed in 
terms of the homogenization formula (\ref{deff0}).

\begin{thm}
The function $f$ appearing in the expression of the limit functional $\mathcal{F}^{hom}$ and the function $f_{hom}$ 
defined by the asymptotic cell problem coincide, i.e., for every $\xi \in \mathbb{R}^n$ it turns out that
\begin{equation*}
f(\xi) = f_{hom}(\xi).
\end{equation*}
\end{thm}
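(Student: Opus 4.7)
The plan is to establish the two inequalities $f(\xi) \le f_{hom}(\xi)$ and $f(\xi) \ge f_{hom}(\xi)$ separately, via an explicit construction for the upper bound and via the fundamental estimate (Theorem \ref{fe}) together with a rescaling for the lower bound. Both sides are translation-invariant, so by the integral representation $f(\xi)\,\mathcal{L}^n(A) = \mathcal{F}^{hom}(\xi\cdot x, A)$ for any cube $A\subset\Omega$, and it suffices to compare this limit energy with $f_{hom}(\xi)\,\mathcal{L}^n(A)$.

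For the upper bound I would fix $\eta > 0$, pick an integer $t_0$ and a near-optimal $w_0 \in SBV^2_0((0,t_0)^n)$ with $S_{w_0}\subset \tilde{E}\cup\tilde{F}$ realizing $g(t_0) \le f_{hom}(\xi) + \eta$, where $g$ is the function defined in \eqref{defg}. Since $t_0 \in \mathbb{N}$, extending $w_0$ by zero outside $(0,t_0)^n$ and then by $t_0\mathbb{Z}^n$-periodicity produces $\tilde{w} \in SBV^2_{loc}(\mathbb{R}^n)$ with $S_{\tilde{w}} \subset \tilde{E}\cup\tilde{F}$ (the vanishing boundary values ensure no new jumps across the faces of the reference cell, and the alignment $t_0\mathbb{Z}^n \subset \mathbb{Z}^n$ preserves the inclusion of the jump set). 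Set $v^\varepsilon(x) := \varepsilon\,\tilde{w}(x/\varepsilon)$, so that $v^\varepsilon \to 0$ uniformly, $\xi\cdot x + v^\varepsilon \to \xi\cdot x$ in $L^2(A)$, and $S_{v^\varepsilon}\subset \tilde{E}^\varepsilon\cup\tilde{F}^\varepsilon$. A direct computation using the change of variables $y=x/\varepsilon$ and periodic averaging for the bulk term, together with the bookkeeping $\varepsilon \cdot \varepsilon^{n-1} \cdot (\text{number of reference cells in }A)$ for the surface term, yields
\[
\lim_{\varepsilon\to 0} \mathcal{F}^\varepsilon(\xi\cdot x + v^\varepsilon, A) = \mathcal{L}^n(A)\cdot g(t_0) \le \mathcal{L}^n(A)\bigl(f_{hom}(\xi) + \eta\bigr),
\]
whence $f(\xi) \le f_{hom}(\xi) + \eta$ and $\eta\to 0$.

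For the lower bound I would start from a recovery sequence $u^\varepsilon \to \xi\cdot x$ in $L^2$ realizing $\mathcal{F}^{hom}(\xi\cdot x, A)$ on a cube $A$ of side $\varrho$. Fix $\eta > 0$ and nested cubes $A' \subset\subset A'' \subset\subset A$ with $\mathcal{L}^n(A\setminus A')$ small. Applying Theorem \ref{fe} with $u = u^\varepsilon$ on $A''$ and $v = \xi\cdot x$ on $B = A\setminus\overline{A'}$ produces $\tilde{u}^\varepsilon \in SBV^2(A)$ coinciding with $\xi\cdot x$ in $A\setminus A''$ and satisfying
\[
\mathcal{F}^\varepsilon(\tilde{u}^\varepsilon, A) \le (1+\eta)\,\mathcal{F}^\varepsilon(u^\varepsilon, A) + (1+\eta)\,|\xi|^2\,\mathcal{L}^n(A\setminus A') + M\int_T |u^\varepsilon - \xi\cdot x|^2\,dx.
\]
Choosing $A$ with $\varrho = t_\varepsilon\,\varepsilon$ and $t_\varepsilon \in \mathbb{N}$, $t_\varepsilon\to\infty$, the function $w^\varepsilon(y) := \varepsilon^{-1}\bigl(\tilde{u}^\varepsilon(\varepsilon y) - \varepsilon\,\xi\cdot y\bigr)$ belongs to $SBV^2_0\bigl((0,t_\varepsilon)^n\bigr)$ with $S_{w^\varepsilon}\subset \tilde{E}\cup\tilde{F}$, and the scaling identity
\[
\mathcal{F}^\varepsilon(\tilde{u}^\varepsilon, A) = \varepsilon^n\bigg(\int_{(0,t_\varepsilon)^n}|\xi + \nabla w^\varepsilon|^2\,dy + \mathcal{H}^{n-1}(S_{w^\varepsilon})\bigg) \ge \varrho^n g(t_\varepsilon)
\]
provides a lower bound in terms of $g(t_\varepsilon) \to f_{hom}(\xi)$. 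Letting $\varepsilon\to 0$ (so that the error $M\int_T|u^\varepsilon - \xi\cdot x|^2 \to 0$ by strong $L^2$ convergence), then shrinking $\mathcal{L}^n(A\setminus A')$ and $\eta$ to zero, I obtain $f_{hom}(\xi) \le f(\xi)$.

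The step I expect to be the main obstacle is the careful bookkeeping in the boundary-modification argument: one must apply the fundamental estimate on cubes whose sides are multiples of $\varepsilon$ so that the rescaled function really vanishes on the boundary of an integer cube and so that its jump set stays inside $\tilde{E}\cup\tilde{F}$, while simultaneously keeping the additive error $|\xi|^2\,\mathcal{L}^n(A\setminus A')$ and the multiplicative loss $(1+\eta)$ both small. Once this is done, the rescaling to the cell problem and the passage to the limit are direct. A posteriori, since the identified limit $f=f_{hom}$ is independent of the subsequence $\sigma(\varepsilon)$ of Theorem \ref{Gc}, Urysohn's property yields $\Gamma$-convergence of the whole sequence $(\mathcal{F}^\varepsilon)$, as claimed in the comments preceding Lemma \ref{citare}.
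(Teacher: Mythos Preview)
Your proposal is correct and follows essentially the same two-step strategy as the paper: the upper bound via periodic extension of a near-minimizer of the cell problem, and the lower bound via boundary modification of a recovery sequence followed by rescaling to the cell problem. The paper handles the boundary modification in one line (``without loss of generality we can assume $v^\varepsilon\in SBV^2_0(Q)$, by means of a cut-off function''), whereas you spell it out via the fundamental estimate; both are the same device.

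One remark on presentation: your phrase ``choosing $A$ with $\varrho=t_\varepsilon\varepsilon$ and $t_\varepsilon\in\mathbb{N}$'' makes the domain depend on $\varepsilon$, which conflicts with having a fixed $A$ on which the recovery sequence realizes $\mathcal{F}^{hom}(\xi\cdot x,A)$. In fact integrality of $t_\varepsilon$ is unnecessary: the cell function $g(t)$ in \eqref{defg} is defined for all $t>0$, and once $A=(0,\varrho)^n$ has a corner at the origin, the rescaling $y=x/\varepsilon$ automatically maps $\tilde{E}^\varepsilon\cup\tilde{F}^\varepsilon$ to $\tilde{E}\cup\tilde{F}$, so the rescaled function is an admissible competitor for $g(\varrho/\varepsilon)$ regardless. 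Simply fix $A$ and let $t_\varepsilon=\varrho/\varepsilon\to\infty$.
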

\begin{proof}

\textit{First step: $f\geq f_{hom}$}.

Let $\xi\in\mathbb{R}^n$ and define $u_\xi(x):= \xi\cdot x$ for every $x\in \mathbb{R}^n$. By definition of $\Gamma$-convergence, there exists a recovery sequence $u^\varepsilon\subset SBV^2(Q)$ with 
$S_{u^\varepsilon}\subset \big(\tilde{E}^{\varepsilon} \cup \tilde{F}^{\varepsilon}\big)\cap Q$, 
such that $u^\varepsilon\rightarrow u_\xi$ strongly in $L^2(Q)$ and
\begin{equation*}
\lim_{\varepsilon \rightarrow 0}\mathcal{F}^\varepsilon(u^\varepsilon,Q) = \mathcal{F}^{hom}(u_\xi, Q) = f(\xi). 
\end{equation*}
Let us write $u^\varepsilon =: u_\xi + v^\varepsilon$, where $v^\varepsilon\subset SBV^2(Q)$ and 
$v^\varepsilon\rightarrow 0$ strongly in $L^2(Q)$.  
Without loss of generality we can assume $v^\varepsilon \in SBV^2_0(Q)$. Hence
\begin{equation}\label{cell}
f(\xi) = \lim_{\varepsilon \rightarrow 0}\mathcal{F}^\varepsilon(u_\xi + v^\varepsilon,Q) = 
\lim_{\varepsilon \rightarrow 0} \bigg\{\int_Q |\xi + \nabla v^\varepsilon|^2 d\,x + 
\varepsilon\,\mathcal{H}^{n - 1}(S_{v^\varepsilon})\bigg\}.
\end{equation}
Now, let us define the function $w^\varepsilon \in SBV^2_0(Q/\varepsilon)$ as
\begin{equation*}
v^\varepsilon(x)=: \varepsilon\,w^\varepsilon\Big(\frac{x}{\varepsilon}\Big).
\end{equation*}
Remark that $S_{w^\varepsilon} \subset \tilde{E}\cup \tilde{F}$. Then, rewriting (\ref{cell}) 
in terms of $w^\varepsilon$ we obtain
\begin{align*}
f(\xi) =& \,\lim_{\varepsilon \rightarrow 0} \varepsilon^n\bigg\{\int_{Q/\varepsilon} 
|\xi + \nabla w^\varepsilon|^2 dx + 
\mathcal{H}^{n - 1}(S_{w^\varepsilon})\bigg\}\\
\geq& \,\lim_{\varepsilon \rightarrow 0} \varepsilon^n
\inf \bigg\{\int_{(0,\frac{1}{\varepsilon})^n} |\xi + \nabla w|^2 dx + 
\mathcal{H}^{n - 1}(S_w): w\in SBV^2_0\big(\big(0,1/\varepsilon\big)^n\big),\,S_w \subset \tilde{E}\cup \tilde{F}\bigg\}\\
=&\, f_{hom}(\xi).
\end{align*}

\textit{Second step: $f\leq f_{hom}$}.

\noindent
Let $\xi\in\mathbb{R}^n$ and $l\in\mathbb{N}$; consider a function 
$w \in SBV^2_0((0,l)^n)$, with $S_w\subset \tilde{E}\cup \tilde{F}$, such that
\begin{align}\label{infw}
&\int_{(0,l)^n}|\xi + \nabla w|^2 dx + \mathcal{H}^{n - 1}(S_{w}) \nonumber\\
\leq&\, \inf \bigg\{ \int_{(0,l)^n}|\xi + \nabla v|^2 dx + \mathcal{H}^{n - 1}(S_{v}) :
v \in SBV^2_0((0,l)^n), S_v\subset \tilde{E}\cup \tilde{F} \bigg\} + 1.
\end{align}

\noindent
Let us define the sequence $u^\varepsilon : Q \rightarrow \mathbb{R}$ as 
\begin{equation*}
u^\varepsilon(x):= \xi\cdot x + \varepsilon\,\tilde{w}\Big(\frac{x}{\varepsilon}\Big),  
\end{equation*}
where $\tilde{w}$ denotes the function defined in the whole $\mathbb{R}^n$, obtained 
through a periodic extension of $w$. We have that $\mathcal{F}^\varepsilon(u^\varepsilon,Q) < +\infty$, 
being $S_{u^\varepsilon} \subset \tilde{E}^{\varepsilon}
\cup \tilde{F}^{\varepsilon}$, and that $u^\varepsilon$ converges to $\xi\cdot x$ strongly in $L^2(Q)$. 
Moreover
\begin{equation*}
\mathcal{F}^\varepsilon(u^\varepsilon,Q) = \int_{Q}|\nabla u^\varepsilon|^2 dx + \varepsilon\,
\mathcal{H}^{n - 1}(S_{u^\varepsilon}) = \varepsilon^n \bigg\{\int_{Q/\varepsilon}|\xi + 
\nabla \tilde{w}|^2 dx + \mathcal{H}^{n - 1}(S_{\tilde{w}})\bigg\}.
\end{equation*}
Now, in order to use the periodicity of $\tilde{w}$, we can write the domain $Q/\varepsilon$ as
union of (suitably translated) periodicity cells $(0,l)^n$. Assume for simplicity that 
$Q/\varepsilon$ is covered exactly by an integer number of these cells, that is by 
$1/(l\,\varepsilon)^n$ cells. 
Indeed, the integral over the remaining part of $Q/\varepsilon$ is a term of order $1/(l\,\varepsilon)^{n-1}$ 

Using (\ref{infw}), we get
\begin{align*}
&\mathcal{F}^\varepsilon(u^\varepsilon,Q) = \,\frac{1}{l^n}\bigg\{\int_{(0,l)^n} |\xi + 
\nabla w|^2 dx + \mathcal{H}^{n - 1}(S_{w})\bigg\} \nonumber\\
&\leq \,\frac{1}{l^n}\,\inf \bigg\{ \int_{(0,l)^n}|\xi + \nabla v|^2 dx + 
\mathcal{H}^{n - 1}(S_{v}) :
v \in SBV^2_0((0,l)^n), S_w\subset \tilde{E}\cup\tilde{F} \bigg\} + \frac{1}{l^n}.
\end{align*}
Taking first the $\limsup$ of both sides as $\varepsilon\rightarrow 0$ and then letting $l \rightarrow + \infty$ we obtain
\begin{equation*}
\limsup_{\varepsilon \rightarrow 0} \mathcal{F}^\varepsilon(u^\varepsilon,Q) \leq f_{hom}(\xi),
\end{equation*}
hence the claim is proved.
\end{proof} 

\noindent
Notice that from this theorem we deduce that the whole sequence $(\mathcal{F}^\varepsilon)$ 
$\Gamma$-converges, since the formula for the limit energy density does not depend on the subsequence.


\noindent
Up to now we have proved that the $\Gamma$-limit of the sequence
$\mathcal{F}^\varepsilon$ can be expressed through an asymptotic cell problem.
Nevertheless it is desirable to give a more explicit description of the 
density $f_{hom}$ and this will be partially done in the next lemmas.

\begin{lem}\label{notquad}
The functional $\mathcal{F}^{hom}$ is not a quadratic form.
\end{lem}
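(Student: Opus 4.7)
The plan is to first establish the two-sided inequality
\[
f_0(\xi) \lneqq f_{hom}(\xi) \le \min\{|\xi|^2,\ f_0(\xi) + c(E)\}, \qquad \xi \in \mathbb{R}^n \setminus \{0\},
\]
and then to deduce non-quadraticity by a short scaling argument.

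\textbf{Upper bounds.} The estimate $f_{hom}(\xi) \le |\xi|^2$ follows immediately by using $w \equiv 0$ as a competitor in the asymptotic cell formula \eqref{deff0}. For the bound $f_{hom}(\xi) \le f_0(\xi) + c(E)$, I take a minimiser $w_\xi \in H^1_\#(Q \setminus (E \cup F))$ of \eqref{cell0} and extend it to $Q$ by setting it equal to $-\xi \cdot y$ on $E$, so that $\xi + \nabla w_\xi = 0$ on $E$ and the resulting function is in $SBV^2(Q)$ with jump set contained in $\partial E \cup F$. Periodising and multiplying by a smooth cutoff equal to $1$ on $(1,t-1)^n$ and vanishing on $\partial(0,t)^n$ yields a competitor in $SBV^2_0((0,t)^n)$ with $S_\cdot \subset \tilde E \cup \tilde F$. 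The bulk integral reduces, per unit volume, to $\int_{Q\setminus E}|\xi + \nabla w_\xi|^2 dy = f_0(\xi)$, the surface contribution to $c(E) := \mathcal{H}^{n-1}(\partial E \cap Q) + \mathcal{H}^{n-1}(F \cap Q)$, and the cutoff corrections are $O(t^{n-1})$, which vanish after division by $t^n$.

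\textbf{Strict lower bound.} The non-strict inequality $f_{hom}(\xi) \ge f_0(\xi)$ is a consequence of the $\Gamma$-liminf argument of Theorem \ref{liminf1} (valid regardless of the rate of $\alpha_\varepsilon$, by Remark \ref{remark}), combined with the integral representations of $\mathcal F^0$ and $\mathcal F^{hom}$. To obtain the strict inequality at some $\xi_0 \ne 0$, I argue that for $|\xi|$ sufficiently small the competitor $w\equiv 0$ is nearly optimal: any admissible $w$ with $\mathcal H^{n-1}(S_w) = 0$ lies in $H^1_0((0,t)^n)$, and since $\int \nabla w\,dx = 0$ its bulk energy is at least $|\xi|^2 t^n$; introducing jumps to reduce the bulk costs at least of order $\mathcal H^{n-1}(\partial E \cap Q)$ per unit cell, which outweighs the possible bulk saving (bounded by $|\xi|^2 \mathcal L^n(E)$ per cell) when $|\xi|$ is small. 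A quantitative version of this trade-off yields $f_{hom}(\xi)/|\xi|^2 \to 1$ as $|\xi| \to 0$, while $f_0(\xi)/|\xi|^2 \le \lambda_{\max}(A_0) < 1$ since $A_0 \lneqq \mathrm{Id}$; hence $f_{hom}(\xi) > f_0(\xi)$ for $\xi \neq 0$ of sufficiently small norm.

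\textbf{Non-quadraticity.} Suppose by contradiction that $f_{hom}(\xi) = B\xi \cdot \xi$ for some matrix $B$. Pick $\xi_0 \ne 0$ with $f_{hom}(\xi_0) > f_0(\xi_0)$. Applying the upper bound to $t\xi_0$,
\[
t^2\, B\xi_0 \cdot \xi_0 \;=\; f_{hom}(t\xi_0) \;\le\; t^2\, f_0(\xi_0) + c(E),
\]
and dividing by $t^2$ and letting $t \to +\infty$ gives $B\xi_0 \cdot \xi_0 \le f_0(\xi_0)$, i.e.\ $f_{hom}(\xi_0) \le f_0(\xi_0)$, contradicting the choice of $\xi_0$. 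Hence $f_{hom}$ is not 2-homogeneous, and therefore not a quadratic form.

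The main obstacle is the strict lower bound $f_0(\xi) < f_{hom}(\xi)$: the liminf comparison with the subcritical $\Gamma$-limit only gives a non-strict inequality, and upgrading to strictness requires a careful isoperimetric-type estimate showing that at small $|\xi|$ the surface term forces competitors in the cell problem to lie close to the $H^1_0$ class, whose bulk energy is at least $|\xi|^2$ per unit volume.
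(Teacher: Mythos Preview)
Your overall strategy --- two-sided estimate plus a scaling contradiction --- matches the paper exactly, and your upper bounds and the final step are fine. The genuine gap, which you yourself flag, is the strict inequality $f_{hom}(\xi) > f_0(\xi)$.

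Your proposed route to strictness goes through the claim $f_{hom}(\xi)/|\xi|^2 \to 1$ as $|\xi|\to 0$. This is precisely what the paper \emph{does not} prove: immediately after the lemma the authors write that the small-$|\xi|$ behaviour is ``not yet clear'' and only \emph{expect} this limit. Your heuristic (``introducing jumps \dots\ costs at least of order $\mathcal H^{n-1}(\partial E\cap Q)$ per unit cell'') is not correct as stated: a competitor in the asymptotic cell problem may use an arbitrarily small amount of jump set in an arbitrarily small fraction of cells, so there is no obvious per-cell lower bound on the surface cost. Making your trade-off quantitative would essentially require the content of the supercritical analysis.

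The paper's argument for strictness is different and worth knowing. It works for \emph{every} $\xi\ne 0$ and goes through the $\varepsilon$-problems rather than the cell formula: take any recovery sequence $u^\varepsilon\to u_\xi$ and classify cubes $Q_k^\varepsilon$ as ``damaged'' or not according to whether $\mathcal H^{n-1}(S_{u^\varepsilon}\cap Q_k^\varepsilon)>\beta\varepsilon^{n-1}$, where $\beta$ is the threshold from Theorem~\ref{bfbw} guaranteeing that cubes with small jump can be replaced by Sobolev functions with essentially no energy loss. If the volume fraction of damaged cubes tends to zero, one reduces to the jump-free case and gets $\liminf\mathcal F^\varepsilon(u^\varepsilon)\ge|\xi|^2>f_0(\xi)$; if it stays bounded away from zero along a subsequence, the surface term alone contributes a fixed positive amount on top of the subcritical lower bound $f_0(\xi)$. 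Either way $\limsup\mathcal F^\varepsilon(u^\varepsilon)>f_0(\xi)$. The key missing ingredient in your write-up is this use of the ``healing'' estimate~\eqref{veroclaim} from the supercritical section to dispose of small jumps.
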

\begin{proof}
\textit{First step.} For every $\xi \in \mathbb{R}^n$ the following estimate holds:
\begin{equation}\label{est}
 A_0 \xi \cdot \xi \leq f_{hom}(\xi) \leq A_0 \xi \cdot \xi + P(E,Q),
\end{equation}
where $P(E,Q)$ denotes the perimeter of the set $E$ in $Q$, according to the notation 
introduced in Section 2.

Indeed, the lower bound follows from (\ref{impbound}) and Remark \ref{remark}. 
For the upper bound, by the definition of $\Gamma$-limit it is sufficient to find a sequence 
$u^\varepsilon \subset SBV^2(\Omega)$ with $S_{u^\varepsilon} \subset \tilde{E}^\varepsilon
\cup\tilde{F}^\varepsilon$ and converging to $u_{\xi}:= \xi\cdot x$ strongly in $L^2(\Omega)$, 
such that
\begin{equation*}
 \lim_{\varepsilon \rightarrow 0}\mathcal{F}^\varepsilon (u^\varepsilon) = A_0 \xi \cdot \xi + P(E,Q).
\end{equation*}
To this aim, we just take as $u^\varepsilon$ the recovery sequence introduced
in the proof of Theorem \ref{bfab}. 

\noindent
\textit{Second step.} For every $\xi \in \mathbb{R}^n \setminus \{0\}$, we have
\begin{equation}\label{diver}
A_0 \xi \cdot \xi \lneqq |\xi|^2.
\end{equation}
Indeed, for $\xi \neq 0$, we have
\begin{align*}
A_{0}\xi\cdot\xi &=\, \min \Big\{\int_{Q\setminus E}|\,\xi + \nabla w(y)|^2 dy : 
w\in SBV^2_{\#} (Q), S_{w} \subset E\cup F\Big\}\\
& \leq\,  \int_{Q\setminus E}|\xi|^2 dy = \mathcal{L}^n(Q\setminus E)\,|\xi|^2 < |\xi|^2,
\end{align*}
since $0< \mathcal{L}^n(Q\setminus E) < \mathcal{L}^n(Q) = 1$.

\noindent
\textit{Third step.}
For every $\xi \in \mathbb{R}^n \setminus \{0\}$ we have 
\begin{equation}\label{div}
f_{hom}(\xi) \gneqq A_0 \xi \cdot \xi.
\end{equation}
To prove (\ref{div}) it is enough to show that, for every $\xi\neq 0$ and for every 
admissible sequence $u^\varepsilon$ converging to $u_\xi = \xi\cdot x$ strongly in $L^2(\Omega)$, we have
\begin{equation}\label{cl}
\limsup_{\varepsilon \rightarrow 0} \mathcal{F}^\varepsilon (u^\varepsilon) > \mathcal{L}^n(\Omega)\, A_0 \xi \cdot \xi.
\end{equation}
We can restrict to the case $\mathcal{F}^\varepsilon (u^\varepsilon) < + \infty$, otherwise there is nothing to prove.
For the sake of simplicity, let us assume that $\Omega = Q$. We will treat separately the case in which
$u^\varepsilon$ has no jumps and the general case.

\textit{Case \,$S_{u^\varepsilon} = \emptyset$ for every $\varepsilon > 0$.}
Being $\mathcal{F}^\varepsilon (u^\varepsilon) = \int_{Q}|\nabla u^\varepsilon|^2 d\,x < + \infty$, we have 
that the sequence $(u^\varepsilon)$ is bounded in $H^1(Q)$. In particular this implies that 
$\nabla u^\varepsilon \rightharpoonup \xi$ weakly in $L^2(Q)$. By the weakly lower semicontinuity of the 
Dirichlet integral we deduce that
\begin{equation*}
|\xi|^2 \leq \liminf_{\varepsilon\rightarrow 0} \mathcal{F}^\varepsilon(u^\varepsilon), 
\end{equation*}
which together with (\ref{diver}), gives (\ref{cl}).

\textit{Case \,$S_{u^\varepsilon} \neq \emptyset$ for some $\varepsilon > 0$.} 
Let us fix $\beta>0$ independent of $\varepsilon$ and classify the cubes $Q_k^{\varepsilon}$ according to 
$\mathcal{H}^{n - 1} (S_{u^\varepsilon}\cap Q_k^{\varepsilon})$ being smaller or larger than 
$\beta\,\varepsilon^{n - 1}$. From what we proved in Theorem \ref{bfbw}, it is possible to choose 
the parameter $\beta$ in such a way that the cubes where 
$\mathcal{H}^{n - 1} (S_{u^\varepsilon}\cap Q_k^{\varepsilon}) \leq \beta\,\varepsilon^{n - 1}$ 
can be assumed to be undamaged. 

Hence we can divide the cubes $Q_k^{\varepsilon}$ in two classes: the undamaged cubes and the ones such that $\mathcal{H}^{n - 1} (S_{u^\varepsilon}\cap Q_k^{\varepsilon}) > \beta\,\varepsilon^{n - 1}$, where $\beta>0$ is a small constant, independent of $\varepsilon$. Denote by $N_d(\varepsilon)$ the number of damaged cubes. 
From the expression of the functional no bound for $N_d(\varepsilon)$ can be derived, i.e., it may happen that 
$\mathcal{H}^{n - 1}(S_{u^\varepsilon}\cap Q_k^{\varepsilon}) > \beta\,\varepsilon^{n - 1}$ for every $k =1.\dots,N(\varepsilon)$. In any case it is clear that $\varepsilon^n N_d(\varepsilon)$ is a bounded quantity. 
According to the behaviour of $N_d(\varepsilon)$ as $\varepsilon\rightarrow 0$, three different cases may arise.

1) Assume that the number of damaged cube is small, that is 
\begin{equation}\label{few}
\limsup_{\varepsilon \rightarrow 0} \varepsilon^n N_d(\varepsilon) = 0.
\end{equation} 
Define the function $a^\varepsilon: Q \rightarrow \mathbb{R}$ as
\begin{equation*}
a^\varepsilon(x) := 
\begin{cases}
0 \quad & \hbox{in the damaged } Q_{k}^{\varepsilon},\\
1 \quad & \hbox{otherwise in } Q.
\end{cases}
\end{equation*}
From (\ref{few}) we have that $a^\varepsilon \rightarrow 1$ strongly in $L^1(Q)$. Now,
\begin{align*}
\mathcal{F}^{\varepsilon}(u^\varepsilon) &=\, \int_{Q}|\nabla u^\varepsilon|^2 dx + 
\varepsilon\,\mathcal{H}^{n - 1}(S_{u^\varepsilon}) \\
&\geq\, \int_{Q}a^\varepsilon(x)\,|\nabla u^\varepsilon|^2 dx + 
\beta\,\varepsilon^n N_d(\varepsilon).
\end{align*}
Then, taking the $\liminf$ as $\varepsilon \rightarrow 0$ we get
\begin{equation*}
\liminf_{\varepsilon \rightarrow 0}\mathcal{F}^{\varepsilon}(u^\varepsilon) \geq |\xi|^2,
\end{equation*}
so also in this case (\ref{cl}) follows from (\ref{diver}).

2) Assume that the number of damaged cube is high, that is 
\begin{equation}\label{many}
\liminf_{\varepsilon \rightarrow 0} \varepsilon^n N_d(\varepsilon) = C > 0.
\end{equation} 
In this case we can say that, for $\varepsilon$ small enough, we have $\varepsilon^n N_d(\varepsilon) > C/2$. 
Hence, recalling the definition (\ref{defGe}) after a suitable extension of $u^\varepsilon$ in $\tilde{E}^\varepsilon$, we have
\begin{equation*}
\mathcal{F}^{\varepsilon}(u^\varepsilon) =\, \int_{Q}|\nabla u^\varepsilon|^2 dx + 
\varepsilon\,\mathcal{H}^{n - 1}(S_{u^\varepsilon}) 
\geq\, \mathcal{G}^\varepsilon(u^\varepsilon) + \beta\,\varepsilon^n N_d(\varepsilon) 
\geq\, \mathcal{G}^\varepsilon(u^\varepsilon) + \beta\,\frac{C}{2}.
\end{equation*}
Then, taking the $\liminf$ as $\varepsilon \rightarrow 0$ we get by Theorem \ref{GammaG}
\begin{equation*}
\liminf_{\varepsilon \rightarrow 0}\mathcal{F}^{\varepsilon}(u^\varepsilon) \geq A_0\xi\cdot \xi + \beta\,\frac{C}{2},
\end{equation*}
so also in this case (\ref{cl}) holds.

3) Finally, let us analyze the intermediate case. Assume that 
\begin{equation*}
\liminf_{\varepsilon \rightarrow 0} \varepsilon^n N_d(\varepsilon) = 0.
\end{equation*}
and
\begin{equation*}
\limsup_{\varepsilon \rightarrow 0} \varepsilon^n N_d(\varepsilon) = C > 0.
\end{equation*}
Consider a subsequence $\varepsilon_k$ such that 
\begin{equation*}
\lim_{k \rightarrow \infty} \varepsilon_k^n N_d(\varepsilon_k) = 
\limsup_{\varepsilon \rightarrow 0} \varepsilon^n N_d(\varepsilon).
\end{equation*}
Then, we can apply the result of the previous case to this subsequence and we get 
\begin{equation*}
\limsup_{k \rightarrow \infty}\mathcal{F}^{\varepsilon_k}(u^{\varepsilon_k}) \geq A_0\xi\cdot \xi + \beta\,\frac{C}{2}.
\end{equation*}
Being the $\limsup$ of the whole sequence bigger or equal to the $\limsup$ of a subsequence, we have 
the thesis (\ref{cl}).

\noindent
\textit{Fourth step.}
Assume by contradiction that $f_{hom}$ is $2$-homogeneous. 
Hence replacing $\xi$ with $\lambda\,\xi$ in (\ref{est}) we have that, for every $\lambda\in\mathbb{R}$,
\begin{equation}\label{estlambda}
\lambda^2 A_0 \xi \cdot \xi \leq \lambda^2 f_{hom}(\xi) \leq \lambda^2 A_0 \xi \cdot \xi + P(E,Q).
\end{equation}
Dividing by $\lambda^2$ and letting $\lambda \rightarrow + \infty$ one gets
\begin{equation*}
f_{hom}(\xi) = A_0 \xi \cdot \xi,
\end{equation*}
which is in contrast with (\ref{div}). This shows that $f_{hom}$ is not $2$-homogeneous and therefore 
$\mathcal{F}^{hom}$ is not a quadratic form.
\end{proof}

\begin{rem}
The estimates (\ref{est}) and (\ref{div}) proved in the previous lemma can be summarized by the formula
\begin{equation}\label{costr}
 A_0 \xi\cdot\xi \lneqq f_{hom}(\xi) \leq \min \big\{|\xi|^2, A_0 \xi \cdot \xi + P(E,Q)\big\},
\end{equation}
that holds true for every $\xi \in \mathbb{R}^n \setminus \{0\}$.

It is clear that there exists a threshold $M>0$ such that 
\begin{equation}\label{asy}
A_0 \xi \cdot \xi + P(E,Q) \lneqq  |\xi|^2 \quad \mbox{for every } |\xi| > M.
\end{equation}
Condition (\ref{asy}) together with (\ref{costr}) entail in particular that 
\begin{equation*}
 f_{hom}(\xi) \lneqq |\xi|^2 \quad \mbox{for every } |\xi|> M,
\end{equation*}
that is, for $|\xi|$ sufficiently big, the limit density is strictly smaller than $|\xi|^2$. 

The situation is clarified by the following figure

\vspace{-.2 cm}
\begin{figure}[htbp]
\begin{center}
\includegraphics[height=.3\textwidth]{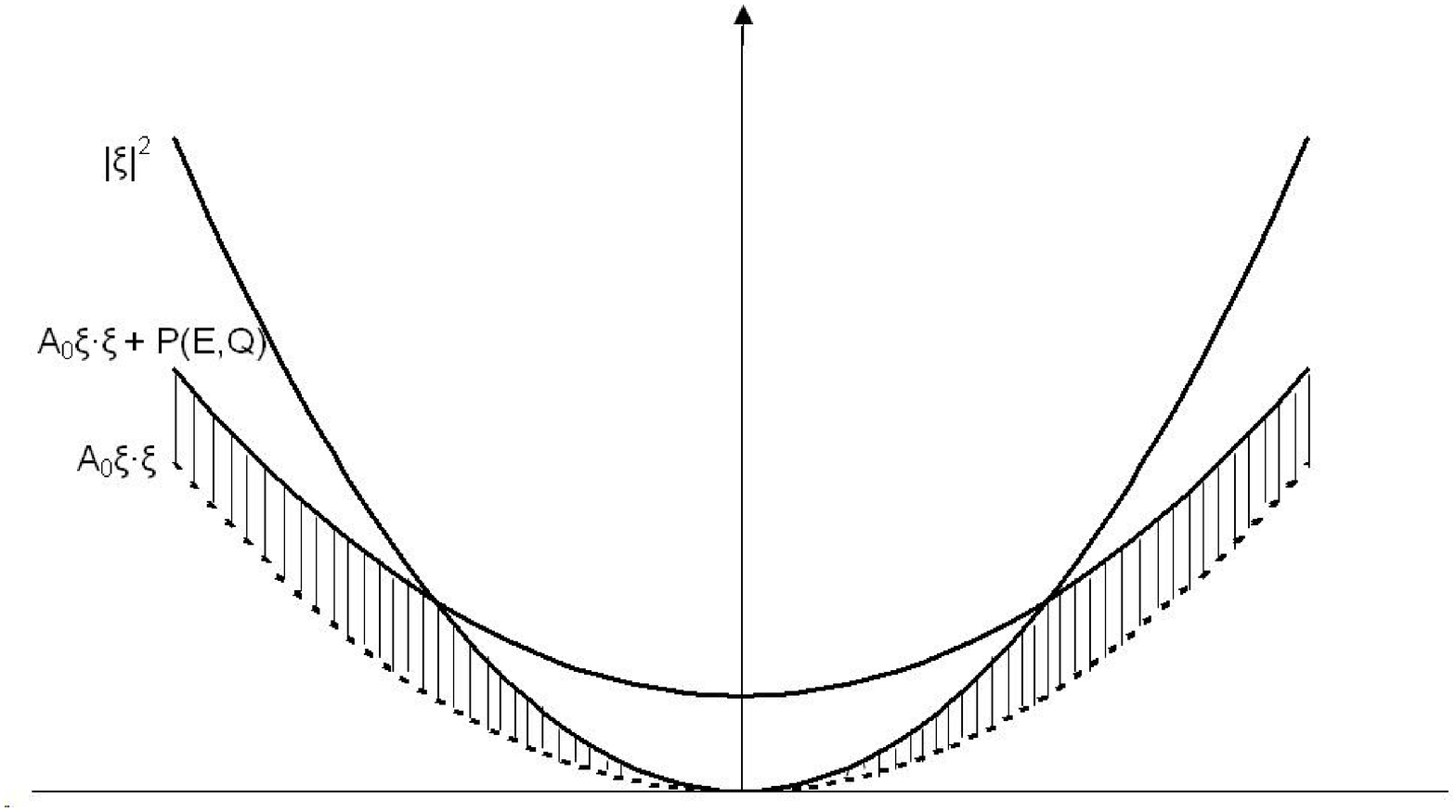}
\end{center}
\caption{Limit energy density}
\label{visco}
\end{figure}

%

It is not yet clear the behaviour of $f_{hom}(\xi)$ for $|\xi|$ very small, but we 
expect that 
\begin{equation*}
 \lim_{|\xi|\rightarrow 0} \frac{f_{hom}(\xi)}{|\xi|^2} = 1.
\end{equation*}
\end{rem}

\vspace{.2cm}
\noindent
Lemma \ref{notquad} shows also that the functional $\mathcal{F}^{hom}$ is not a quadratic form and it is 
not even $2$-homogeneous. Next lemma clarifies how $2$-homogeneity is violated.

\begin{lem}
For every $\xi\in\mathbb{R}^n$ and every $\lambda \geq 1$ we have the inequality
\begin{equation}\label{leq}
f_{hom}(\lambda\,\xi) \leq \lambda^2 f_{hom}(\xi),
\end{equation}
while for every $\xi\in\mathbb{R}^n$ and every $0 < \lambda \leq 1$ we have the reverse inequality
\begin{equation}\label{geq}
f_{hom}(\lambda\,\xi) \geq \lambda^2 f_{hom}(\xi).
\end{equation}
\end{lem}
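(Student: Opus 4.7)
The plan is to prove (\ref{leq}) directly via a scaling argument on the cell formula, and then deduce (\ref{geq}) as a formal consequence.

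For (\ref{leq}), fix $\xi\in\mathbb{R}^n$, $\lambda\geq 1$, and $t>0$. By the definition of $f_{hom}(\xi)$, we can select a near-minimizer $w_t\in SBV^2_0((0,t)^n)$ with $S_{w_t}\subset \tilde{E}\cup\tilde{F}$ such that
\begin{equation*}
\frac{1}{t^n}\bigg\{\int_{(0,t)^n}|\xi+\nabla w_t|^2\,dx+\mathcal{H}^{n-1}(S_{w_t})\bigg\}\leq g(t)+\frac{1}{t},
\end{equation*}
where $g$ is as in (\ref{defg}). The function $\lambda w_t$ still lies in $SBV^2_0((0,t)^n)$ and, since $\lambda\neq 0$, $S_{\lambda w_t}=S_{w_t}\subset\tilde{E}\cup\tilde{F}$, so it is admissible in the cell problem for $f_{hom}(\lambda\xi)$. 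A direct computation gives
\begin{align*}
\int_{(0,t)^n}|\lambda\xi+\nabla(\lambda w_t)|^2\,dx+\mathcal{H}^{n-1}(S_{\lambda w_t})
&= \lambda^2\int_{(0,t)^n}|\xi+\nabla w_t|^2\,dx+\mathcal{H}^{n-1}(S_{w_t})\\
&\leq \lambda^2\bigg\{\int_{(0,t)^n}|\xi+\nabla w_t|^2\,dx+\mathcal{H}^{n-1}(S_{w_t})\bigg\},
\end{align*}
where the last inequality uses $\lambda^2\geq 1$. Dividing by $t^n$ and using the choice of $w_t$, we obtain
\begin{equation*}
g_{\lambda\xi}(t)\leq \lambda^2\Big(g(t)+\tfrac{1}{t}\Big),
\end{equation*}
where $g_{\lambda\xi}$ denotes the function (\ref{defg}) with $\xi$ replaced by $\lambda\xi$. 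Letting $t\to+\infty$ yields (\ref{leq}).

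For (\ref{geq}), let $0<\lambda\leq 1$ and $\xi\in\mathbb{R}^n$. Set $\mu:=\lambda\xi$ and $\mu':=1/\lambda\geq 1$. Applying (\ref{leq}), which we just proved, to the vector $\mu$ with scaling factor $\mu'$ gives
\begin{equation*}
f_{hom}(\mu'\mu)\leq (\mu')^2 f_{hom}(\mu),\qquad\text{i.e.,}\qquad f_{hom}(\xi)\leq \frac{1}{\lambda^2}f_{hom}(\lambda\xi),
\end{equation*}
which is exactly (\ref{geq}) after multiplying by $\lambda^2$.

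There is no substantive obstacle: the only point to verify carefully is the scaling identity $S_{\lambda w}=S_w$ for $\lambda\neq 0$ and $\nabla(\lambda w)=\lambda\nabla w$ in $SBV$, which are standard. The asymmetry between (\ref{leq}) and (\ref{geq}) comes entirely from the fact that the surface term in the cell formula is not $2$-homogeneous in $w$, so only the bound $\min(1,\lambda^2)\leq\lambda^2\leq\max(1,\lambda^2)$ on the coefficient of $\mathcal{H}^{n-1}(S_w)$ can be used, forcing the direction of the inequality to depend on whether $\lambda\geq 1$ or $\lambda\leq 1$.
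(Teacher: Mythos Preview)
Your proof is correct and follows essentially the same scaling argument as the paper: test the cell problem for $\lambda\xi$ with $\lambda w$, and use that the bulk term scales by $\lambda^2$ while the surface term is unchanged. The only cosmetic difference is that the paper proves (\ref{geq}) by repeating the direct computation with $\lambda\le 1$, whereas you obtain it by applying (\ref{leq}) to $\lambda\xi$ with the factor $1/\lambda\ge 1$; both are equally valid.
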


\begin{proof}
Let $\xi\in\mathbb{R}^n$ be given and let $w\in SBV_0^2((0,t)^n)$ with $S_w \subset \tilde{E}\cup\tilde{F}$. 
Consider $\lambda \geq 1$ and set $w_\lambda:= \lambda \, w$. Clearly it turns out that 
$w_\lambda\in SBV_0^2((0,t)^n)$ and $S_{w_\lambda} \subset \tilde{E}\cup\tilde{F}$. Moreover

\begin{equation}\label{bu}
\int_{(0,t)^n}|\xi + \nabla w|^2 dx + \mathcal{H}^{n - 1}(S_{w}) 
\geq \,\frac{1}{\lambda^2}\,\bigg\{ \int_{(0,t)^n}|\,\lambda\,\xi + \nabla w_\lambda|^2 dx + \mathcal{H}^{n - 1}\big(S_{w_\lambda}\big)\bigg\}.
\end{equation}
Now, if we take the infimum of both sides of (\ref{bu}) over all $w\in SBV_0^2((0,t)^n)$ with $S_w \subset \tilde{E}\cup\tilde{F}$, we divide by $t^n$ the resulting expression and let $t\rightarrow +\infty$, we obtain 
exactly (\ref{leq}), using the definition (\ref{deff0}). 

Proceeding in a similar way we get the reverse inequality (\ref{geq}) in the case $\lambda\leq 1$. 
\end{proof}


\section{Appendix}
\noindent
In this appendix we present an alternative proof of Theorem \ref{bfbw} in the case of 
a two-dimensional domain $\Omega$. This proof is based on the maximum principle, which 
allows us to estimate the local opening of the crack in a small ball surrounding the crack. 
It is therefore strictly bidimensional. 
A similar method can be found in \cite{CGP05} and in \cite{DMMS}.

We use the same notation as in the previous sections. In particular we denote with $Q:= (0,1)^2$ the unit cube 
and with $Q_{\delta} \subset \subset Q_{\hat{\delta}} \subset\subset Q$ the concentric cubes with distance 
$\delta$ and $\hat{\delta}$ from $\partial Q$, respectively. Let $E, F \subset Q_\delta$ be the sets where
a crack may appear, satisfying the assumptions required in Section 2. Let us fix a 
boundary displacement on $\partial Q_{\hat{\delta}}$, given by the trace of a function $\varphi\in H^1(Q)$,
and let $0< \beta< (\delta - \hat{\delta})/2$ be a parameter.

Let $\tilde{v}$ be the elastic solution corresponding to the datum $\varphi$, that is the solution to the problem  
\begin{equation*}
\hbox{(Dir)} \quad  \displaystyle \min\bigg\{\int_{Q_{\hat{\delta}}} |\nabla w|^2 dx : w\in H^1(Q_{\hat{\delta}}), 
w = \varphi\, \hbox{on } \partial Q_{\hat{\delta}}\bigg\},
\end{equation*}
and let $\hat{v}$ be a solution to the problem 
\begin{equation*}
\begin{array}{ll}
&\hbox{(MS)} \quad \displaystyle \min\bigg\{\int_{Q_{\hat{\delta}}} |\nabla w|^2 dx + 
\mathcal{H}^{1}(S_{w}) : w \in SBV^2(Q_{\hat{\delta}}), S_{w}\subset E\cup F,\\
& \hspace{6.5cm}\mathcal{H}^{1}(S_{w}) \leq \beta, w = \varphi\, \hbox{on } \partial Q_{\hat{\delta}}\bigg\}.
\end{array}
\end{equation*}
The main result of this section is the following.

\begin{thm}\label{app}
For every $\beta$ small enough, there exists a constant 
$\omega(\beta)>0$ with $\omega(\beta)\rightarrow 0$ as $\beta\rightarrow 0$ such that the 
functions $\tilde{v}$ and $\hat{v}$ defined by the problems (Dir) 
and (MS), respectively, satisfy
the following relation:
\begin{equation}\label{trick}
\int_{Q_{\hat{\delta}}} |\nabla \hat{v}|^2 dx + \mathcal{H}^{1}(S_{\hat{v}}) 
\geq (1 - \omega(\beta))\int_{Q_{\hat{\delta}}}|\nabla \tilde{v}|^2 dx. 
\end{equation}
\end{thm}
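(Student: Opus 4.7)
The plan is to construct an admissible competitor $v^{\ast}\in H^{1}(Q_{\hat{\delta}})$ for the Dirichlet problem \textup{(Dir)} by ``closing'' the cracks of $\hat{v}$, and then to invoke the minimality of $\tilde{v}$: once I produce $v^{\ast}$ with $v^{\ast}=\varphi$ on $\partial Q_{\hat{\delta}}$ and
\[
\int_{Q_{\hat{\delta}}}|\nabla v^{\ast}|^{2}\,dx\le \bigl(1+\omega(\beta)\bigr)\Bigl(\int_{Q_{\hat{\delta}}}|\nabla \hat{v}|^{2}\,dx+\mathcal{H}^{1}(S_{\hat{v}})\Bigr),
\]
the inequality (\ref{trick}) follows from $\int|\nabla \tilde{v}|^{2}\le\int|\nabla v^{\ast}|^{2}$ combined with the elementary rearrangement $(1+\omega)^{-1}\ge 1-\omega$.

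To build $v^{\ast}$, I use that $S_{\hat{v}}\subset Q_{\delta}$ and $\mathcal{H}^{1}(S_{\hat{v}})\le\beta$ to decompose $S_{\hat{v}}$ into finitely many connected arcs $\Sigma_{1},\dots,\Sigma_{m}$ of lengths $\ell_{i}$ with $\sum_{i}\ell_{i}\le\beta$, and cover them by pairwise disjoint discs $B_{i}:=B_{r_{i}}(x_{i})\subset\subset Q_{\hat{\delta}}$ with $r_{i}$ comparable to $\ell_{i}$ and $\sum_{i}r_{i}\le C\beta$. Outside $\bigcup_{i}B_{i}$ the function $\hat{v}$ already lies in $H^{1}$, so I set $v^{\ast}=\hat{v}$ there; on each $B_{i}$ I replace $\hat{v}$ by the harmonic extension to $B_{i}$ of its trace on $\partial B_{i}$, which belongs to $H^{1/2}(\partial B_{i})$ since $\partial B_{i}$ avoids $\Sigma_{i}$. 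The resulting $v^{\ast}$ lies in $H^{1}(Q_{\hat{\delta}})$ and matches $\varphi$ on $\partial Q_{\hat{\delta}}$.

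The heart of the argument is a local estimate on each disc $B_{i}$: this is the step where the hypothesis $n=2$ becomes indispensable and the maximum principle enters. Since $\hat{v}$ minimizes \textup{(MS)}, it is harmonic in $B_{i}\setminus\Sigma_{i}$ with homogeneous Neumann condition on $\Sigma_{i}$; after a bi-Lipschitz straightening of $\Sigma_{i}$ (possible since $\Sigma_{i}$ is a piece of the smooth manifold $F$ or of the Lipschitz boundary $\partial E$), reflection across the straightened arc combined with the planar maximum principle controls pointwise the opening $[\hat{v}]$ across $\Sigma_{i}$ by local $H^{1}$-quantities of $\hat{v}$, in the spirit of the arguments of \cite{CGP05} and \cite{DMMS}. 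Coupled with the Dirichlet principle on the disc and a trace inequality on $\partial B_{i}$, this produces a local bound of the form
\[
\int_{B_{i}}|\nabla v^{\ast}|^{2}\,dx\le \int_{B_{2r_{i}}}|\nabla \hat{v}|^{2}\,dx+(1+\omega_{i})\,\mathcal{H}^{1}(\Sigma_{i}),\qquad \omega_{i}\to 0\ \text{as }r_{i}\to 0.
\]
Summing over $i$ and controlling the overlap of the enlarged discs $B_{2r_{i}}$ by $\max_{i}r_{i}\to 0$ as $\beta\to 0$ yields the displayed global energy bound for $v^{\ast}$.

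The delicate step of the plan, and the one on which the two-dimensional hypothesis truly hinges, is the opening estimate embedded in the local bound: reflection across a slit and the attendant maximum principle on both sides are meaningful only in dimension two, where the complement of an arc is still connected and a uniform bi-Lipschitz/conformal straightening of $C^{1}$ arcs is available. The remaining bookkeeping --- covering, harmonic replacement, Dirichlet minimality, and the passage from the local to the global estimate --- is essentially routine, provided the leading constant in front of $\mathcal{H}^{1}(\Sigma_{i})$ in the local bound is exactly $1$, so that the surface energy of $\hat{v}$ fully absorbs the Dirichlet cost of gluing across each crack.
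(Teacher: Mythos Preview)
Your plan rests on a structural picture of $S_{\hat v}$ that the hypotheses do not provide. You write that each $\Sigma_i$ ``is a piece of the smooth manifold $F$ or of the Lipschitz boundary $\partial E$'', but in the setting of the paper $E\subset Q_\delta$ is an \emph{open} two-dimensional set (a finite union of closures of Lipschitz domains), not a curve: the constraint is only $S_{\hat v}\subset E\cup F$, so inside $E$ the jump set of $\hat v$ is an \emph{a priori} arbitrary rectifiable $1$-set of length $\le\beta$. Hence there is no reason for $S_{\hat v}$ to decompose into finitely many smooth arcs admitting a uniform bi-Lipschitz straightening, and your reflection step has no substrate to act on. Appealing to regularity of Mumford--Shah minimizers does not help here either, since the minimization is under the hard constraints $S_w\subset E\cup F$ and $\mathcal H^1(S_w)\le\beta$, for which the standard interior regularity theory is not directly available.

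Even granting a decomposition into arcs, the local estimate
\[
\int_{B_i}|\nabla v^{\ast}|^{2}\,dx\le \int_{B_{2r_i}}|\nabla \hat v|^{2}\,dx+(1+\omega_i)\,\mathcal H^{1}(\Sigma_i)
\]
is the crux and is merely asserted. By the minimality of $\hat v$ one has the \emph{lower} bound $\int_{B_i}|\nabla v^{\ast}|^{2}\ge \int_{B_i}|\nabla \hat v|^{2}+\mathcal H^{1}(\Sigma_i)$, and your claim is that the reverse inequality nearly holds, i.e.\ that the Dirichlet energy released by the crack in $B_i$ is essentially \emph{equal} to its length. Nothing in the maximum-principle opening estimate forces this quantitative balance: controlling $[\hat v]$ by local $H^1$-norms of $\hat v$ does not by itself bound $\int_{B_i}|\nabla v^\ast|^2-\int_{B_i}|\nabla\hat v|^2$ in terms of $\mathcal H^1(\Sigma_i)$ with leading constant $1$. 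You correctly flag this constant as essential, but you do not supply the argument that produces it; the ``Dirichlet principle on the disc and a trace inequality'' you invoke gives a bound in terms of $H^{1/2}$-norms of the trace, not in terms of $\ell_i$.

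By contrast, the paper avoids constructing any competitor $v^{\ast}$. It compares $\tilde v$ and $\hat v$ \emph{directly}: the Euler--Lagrange equation for $\hat v$ and an integration by parts give
\[
\int_{Q_{\hat\delta}}|\nabla\tilde v|^2-\int_{Q_{\hat\delta}}|\nabla\hat v|^2
= -\int_{S_{\hat v}}\frac{\partial\tilde v}{\partial\nu}\,[\hat v]\,d\mathcal H^1,
\]
so one only needs a pointwise bound on $|[\hat v](x)|$ for $\mathcal H^1$-a.e.\ $x\in S_{\hat v}$ and an interior gradient estimate for the harmonic $\tilde v$. The jump estimate is obtained without any decomposition of $S_{\hat v}$: for each $x\in S_{\hat v}$ one looks at circles $\partial B_r(x)$ with $r\le 2\beta$ that miss $S_{\hat v}$ (a set of radii of measure at least $\beta$), uses a one-dimensional Poincar\'e inequality on those circles, and invokes the maximum principle (after fixing the free constants of $\hat v$ on the indecomposable components enclosed by $S_{\hat v}$) to bound $|[\hat v](x)|$ by the oscillation of $\hat v$ on $\partial B_r(x)$. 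This yields $\int_{S_{\hat v}}|[\hat v]|\,d\mathcal H^1\le c\,\mathcal H^1(S_{\hat v})\,\|\nabla\hat v\|_{L^2}$, and a Young inequality finishes the proof with $\omega(\beta)=2c\beta/(1+c\beta)$. The two-dimensionality enters only through the circle trick, and no structure on $S_{\hat v}$ beyond rectifiability and $\mathcal H^1(S_{\hat v})\le\beta$ is used.
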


\begin{rem}
Theorem \ref{app} ensures that if a function has a ``small" jump set, then it can be replaced with a 
function which has no discontinuities, up to a ``small" error in terms of the energy, 
depending on the measure of the jump set. 

This is exactly what we proved in (\ref{veroclaim}) within Theorem \ref{bfbw}. 
As we have already noticed, the proof of Theorem \ref{app} works only in dimension $2$, 
but it has the advantage of being more direct.
\end{rem}

\begin{proof}[Proof of Theorem \ref{app}] 
Let $\hat{v}$ be a minimizer for the problem (MS) and let us set
\begin{equation}\label{jump}
\Gamma:= S_{\hat{v}}.
\end{equation}

We notice that we can arbitrarily change the (constant) values of the function $\hat{v}$ in the regions 
where the gradient is zero, and the resulting function is still a minimizer for the same problem. 
So our first step is to fix the constants in these regions. 

\textit{Properties of $\Gamma$}. We shall split $\Gamma$ in two parts, called  
$\Gamma_*$ and $\Gamma \setminus \Gamma_*$, where $\Gamma_*$ will be related to the 
sets on which $\hat{v}$ is constant.

Let $G\subset Q_{\hat{\delta}}$ be a set having finite perimeter in $Q_{\hat{\delta}}$, 
maximal with respect to inclusion, such that $\partial^* G\subset \Gamma$. Assume that 
$\mathcal{L}^2(G) > 0$. 

It is easy to show that the function $\hat{v}$ is constant in $G$. In fact otherwise 
we can define, for a constant $c\in\mathbb{R}$, the function
\begin{equation*}
w:= 
 \begin{cases}
  \hat{v} & \hbox{in } Q_{\hat{\delta}}\setminus G,\\
  c & \hbox{in } G.
 \end{cases}
\end{equation*}
It turns out that $w$ is still a competitor for (MS) and that its energy 
is strictly smaller than the energy of $\hat{v}$, which contradicts the minimality.
Hence $\hat{v}$ is constant in $G$.
In view of this, we may also assume that if $x\in \Gamma\setminus \partial^* G$, 
then $x$ is not a point of density $1$ for $G$. Otherwise we would get $[\hat{v}](x) = 0$, 
where $[\hat{v}](x)$ denotes the difference of the traces of $\hat{v}$ at $x$.

Let us divide $G$ in the union of its \textit{indecomposable components} according to 
\cite[Theorem 1]{ACMM}, i.e., let $(G_i)_{i\in\mathbb{N}}$ be a family of sets with finite 
perimeter such that $G = \cup_{i\in\mathbb{N}}G_i$, $\mathcal{H}^1(\partial G) = 
\sum_{i\in\mathbb{N}}\mathcal{H}^1(\partial G_i)$, $\mathcal{L}^2(G_h \cap G_k) = 0$, 
$\mathcal{H}^1(\partial^*G_h \cap \partial^*G_k) = 0$ for every $h\neq k$, and such that
for every $k \in \mathbb{N}$ the set $G_k$ cannot be written as $G_k = G_k^1 \cup G_k^2$ 
with $\mathcal{L}^2(G^1_k \cap G^2_k) = 0$ and $\mathcal{H}^1(\partial^*G_k) = \mathcal{H}^1(\partial^*G^1_k) + \mathcal{H}^1(\partial^*G^2_k)$.

Let us set 
\begin{equation*}
 \Gamma_*:= \partial^* G = \bigcup_{j = 0}^{\infty} \partial^* G_j.
\end{equation*}

\textit{Choice of minimizers for (MS).} Let us choose the minimizer $\hat{v}$ by requiring
\begin{equation}\label{choiv}
\mbox{ess-}\inf_{\partial^* G_j} \hat{v}^+ \leq \hat{v}_{|G_j} \leq \mbox{ess-}\sup_{\partial^* G_j}\hat{v}^+,
\end{equation} 
where $\hat{v}^+$ denotes the trace of $\hat{v}$ external to $G_j$. 
In this way we have imposed a constraint on the constant values of $\hat{v}$ in the connected 
components of $Q_{\hat{\delta}}$ that do not touch $\partial Q_{\hat{\delta}}$. 

\textit{Comparison between $\hat{v}$ and $\tilde{v}$.}
We now prove (\ref{trick}). First of all we have that 
\begin{align}\label{EL}
\int_{Q_{\hat{\delta}}} \big(|\nabla  \tilde{v}|^2 - |\nabla  \hat{v}|^2\big)\,dx =& 
\int_{Q_{\hat{\delta}}} (\nabla  \tilde{v} - \nabla  \hat{v})\,(\nabla  \tilde{v} + \nabla  \hat{v})\,dx \nonumber\\
=& \int_{Q_{\hat{\delta}}} (\nabla  \tilde{v} - \nabla  \hat{v})\,\nabla \tilde{v}\,dx.
\end{align}
The last equality follows from 
\begin{equation*}
\int_{Q_{\hat{\delta}}} (\nabla \tilde{v} - \nabla  \hat{v})\, \nabla  \hat{v}\,dx = 0,
\end{equation*}
that is the Euler-Lagrange equation satisfied by $\hat{v}$, using as test function $\tilde{v} - \hat{v}$. 
Integrating by parts (\ref{EL}) we get
\begin{align}\label{harm}
\int_{Q_{\hat{\delta}}} \big(|\nabla  \tilde{v}|^2 - |\nabla  \hat{v}|^2\big)\,dx =& 
- \int_{Q_{\hat{\delta}}} (\tilde{v} - \hat{v})\,\Delta \tilde{v}\,dx 
+ \, \int_{\partial Q_{\hat{\delta}}} (\tilde{v} - \hat{v})\,\frac{\partial \tilde{v}}{\partial \nu}\, d\mathcal{H}^{1}   \nonumber\\
-&\, \int_{S_{\hat{v}}} \frac{\partial \tilde{v}}{\partial \nu} \, [\hat{v}]\, d\mathcal{H}^{1}. 
\end{align}
Notice that in the right-hand side of (\ref{harm}) 
the first two terms vanish because $\tilde{v}$ is harmonic and $\hat{v}=\tilde{v}$ on $\partial Q_{\hat{\delta}}$. 
Therefore, (\ref{harm}) reduces to
\begin{equation}\label{starina}
\int_{Q_{\hat{\delta}}} \big(|\nabla  \tilde{v}|^2 - |\nabla  \hat{v}|^2\big)\,dx = - 
\int_{S_{\hat{v}}} \frac{\partial \tilde{v}}{\partial \nu} \, [\hat{v}]\, d\mathcal{H}^{1}.
\end{equation}
We want now to give an estimate of the last term in the previous expression.
For the normal derivative of $\tilde{v}$, using the harmonicity of $\tilde{v}$ we get
\begin{equation}\label{bou2}
\Big|\frac{\partial \tilde{v}}{\partial \nu}\Big| \leq  \sup_{Q_\delta} |\nabla  \tilde{v}| 
\leq C(\delta, \hat{\delta})\, ||\nabla  \tilde{v} ||_{L^2(Q_{\hat{\delta}})}.
\end{equation}
It remains to estimate $\int_{S_{\hat{v}}}|[\hat{v}]|\, d\mathcal{H}^{1}$.

\textit{Estimate for the jump of $\hat{v}$.} 
Let us fix $x\in S_{\hat{v}}$ and let us define the set 
\begin{equation*}
 C(x):= \big\{r \in [0, 2\,\beta]: \partial B_r(x)\cap S_{\hat{v}} = \emptyset \big\}.
\end{equation*}
As $\mathcal{H}^1(S_{\hat{v}}) < \beta$, we conclude that 
\begin{equation*}
\mathcal{H}^1(C(x)) \geq \beta  
\end{equation*}
and this estimate holds true for every $x\in S_{\hat{v}}$.

Let us now take $r \in C(x)$, $\xi, \zeta \in \partial B_{r}(x)$. Let us consider the angles 
$\varphi,\psi \in [0, 2\pi)$ such that
\begin{equation*}
 \xi = x + (r\,\cos \varphi, r\,\sin \varphi), \quad  \zeta = x + (r\,\cos \psi, r\,\sin \psi),
\end{equation*}
and assume for instance that $\psi < \varphi$. Then we can write
\begin{align*}
 |\hat{v}(\xi) - \hat{v}(\zeta)| = \Big|\int_{\psi}^{\varphi}\partial_{\vartheta}
 \hat{v}(r,\vartheta)\,d \vartheta \Big|
\leq \sqrt{\varphi - \psi}\, \Big(\int_{\psi}^{\varphi} |\partial_{\vartheta}
 \hat{v}(r,\vartheta)|^2\,d \vartheta\Big)^{1/2}.
\end{align*}
Using the fact that $\partial_{\vartheta} = -r \sin \vartheta \partial_1 + r \cos \vartheta \partial_2$  
and the bound $(\varphi - \psi) < 2 \pi$, we have
\begin{equation*}
|\hat{v}(\xi) - \hat{v}(\zeta)| \leq c\,\Big(\int_{\psi}^{\varphi} r^2 |\nabla \hat{v}|^2 d \vartheta\Big)^{1/2} 
\leq  c\,\Big(\int_{0}^{2\,\pi} r^2 |\nabla \hat{v}|^2 d \vartheta\Big)^{1/2}.
\end{equation*}
Hence, since the previous estimate holds true for every $\xi,\zeta \in\partial B_r(x)$, we have
\begin{equation}\label{MPP}
\frac{1}{\sqrt{r}}\,\sup_{\xi,\zeta \in\partial B_r(x)}|\hat{v}(\xi) - \hat{v}(\zeta)| 
\leq c\,\Big(\int_{0}^{2\,\pi} r |\nabla \hat{v}|^2 d \vartheta\Big)^{1/2}.
\end{equation}

\textit{Maximum principle.} For every $x \in S_{\hat{v}}$ and for a.e. $r \in C(x)$ we have
\begin{equation}\label{Mp}
|[\hat{v}](x)| \leq \sup_{\xi,\zeta \in\partial B_r(x)}|\hat{v}(\xi) - \hat{v}(\zeta)|.
\end{equation}
Indeed, we can define the new function 
\begin{equation*}
\hat{v}_r:= 
\begin{cases}
m_r \vee (M_r \wedge \hat{v}) & \hbox{in } B_r(x),\\
\hat{v} & \hbox{otherwise in } Q_{\hat{\delta}}, 
\end{cases}
\end{equation*}
where 
\begin{equation*}
m_r:= \min_{\partial B_{r}(x)}\hat{v} \quad \hbox{and} \quad M_r:= \max_{\partial B_{r}(x)}\hat{v}. 
\end{equation*}
The function $\hat{v}_r$ is still a competitor for the minimum of (MS) and it coincides with $\hat{v}$ 
by (\ref{choiv}). Hence either $\hat{v}_r = \hat{v}$, 
or the energy associated to $\hat{v}_r$ is greater or equal to the energy corresponding to $\hat{v}$. Since, by definition, the truncation reduces the energy, we conclude that 
$\hat{v}_r = \hat{v}$. 
This gives immediately that $\hat{v}$ satisfies the maximum principle in the ball $B_r(x)$, hence 
(\ref{Mp}) is satisfied.

From (\ref{MPP}) and (\ref{Mp}) we obtain the inequality
\begin{equation*}
\frac{1}{\sqrt{r}}\,|[\hat{v}](x)|
\leq c\,\Big(\int_{0}^{2\,\pi} r |\nabla \hat{v}|^2 d \vartheta\Big)^{1/2}.
\end{equation*}
Squaring and integrating over $C(x)$ yields
\begin{equation*}
|[\hat{v}](x)|^2 \int_{C(x)}\frac{1}{r}\, dr
\leq c\,\int_{C(x)}\int_{0}^{2\pi} |\nabla \hat{v}|^2 r\,dr\,d \vartheta. 
\end{equation*}
Since $C(x) \subset [0, 2\,\beta]$, we have
\begin{equation*}
\int_{C(x)}\frac{1}{r}\, dr \geq \frac{1}{2\,\beta}\, \mathcal{H}^1(C(x)) \geq \frac{1}{2}, 
\end{equation*}
hence we deduce
\begin{equation*}
|[\hat{v}](x)| \leq c\,\Big(\int_{B_{2\,\beta}(x)} |\nabla \hat{v}|^2 dz\Big)^{1/2}
\end{equation*}
for $\mathcal{H}^1$-a.e. $x \in S_{\hat{v}}$. Moreover, since $\beta<(\delta-\hat{\delta})/2$, we have that
$B_{2\,\beta}(x) \subset Q_{\hat{\delta}}$ for every $x \in S_{\hat{v}}$, so that
\begin{equation*}
|[\hat{v}](x)| \leq c\,\Big(\int_{Q_{\hat{\delta}}} |\nabla \hat{v}|^2 dz\Big)^{1/2}.
\end{equation*} 
By integrating the previous expression over $S_{\hat{v}}$ we obtain
\begin{equation}\label{bou1}
\int_{S_{\hat{v}}}|[\hat{v}]|\,d\mathcal{H}^1 \leq c\,\mathcal{H}^1(S_{\hat{v}})||\nabla \hat{v}||_{L^2(Q_{\hat{\delta}})}.
\end{equation} 
Combining together (\ref{starina}), (\ref{bou2}) and (\ref{bou1}) we obtain
\begin{equation}\label{harm2}
\int_{Q_{\hat{\delta}}} \big(|\nabla \tilde{v}|^2 - |\nabla \hat{v}|^2\big)\,dx \leq 
2\,c\,C(\delta,\hat{\delta})\,\mathcal{H}^{1}(S_{\hat{v}})\,||\nabla \tilde{v} ||_{L^2(Q_{\hat{\delta}})} ||\nabla \hat{v}||_{L^2(Q_{\hat{\delta}})}. 
\end{equation}
Using in (\ref{harm2}) the Young inequality $2\,ab \leq a^2 + b^2$, which holds true for every $a,b >0$, we have
\begin{equation*}
\int_{Q_{\hat{\delta}}} \big(|\nabla  \tilde{v}|^2 - |\nabla  \hat{v}|^2\big)\,dx \leq c\,C(\delta,\hat{\delta}) \mathcal{H}^{1}(S_{\hat{v}})\big(||\nabla  \tilde{v} ||^2_{L^2(Q_{\hat{\delta}})} + ||\nabla \hat{v}||^2_{L^2(Q_{\hat{\delta}})}\big).
\end{equation*}
Being $\mathcal{H}^{1}(S_{\hat{v}}) < \beta$, we finally have
\begin{equation}\label{diseg}
\int_{Q_{\hat{\delta}}} |\nabla  \hat{v}|^2\,dx \geq \Big(\frac{1 - c\,\beta}{1 + c\,\beta}\Big)
\int_{Q_{\hat{\delta}}} |\nabla  \tilde{v}|^2\,dx,
\end{equation}
where $c>0$ is a constant depending only on the geometry of the problem. 
The estimate (\ref{diseg}) gives (\ref{trick}) with $\omega(\beta):= 2c\beta/(1 + c\beta)$.
\end{proof}

\bigskip
\bigskip
\centerline{\textsc{Acknowledgments}}
\bigskip
\noindent
I warmly thank Gianni Dal Maso for having proposed to me the study of this problem and 
for many interesting discussions, and Maria Giovanna Mora for several stimulating suggestions.
I would like to thank also Massimiliano Morini for his valuable comments on the subject of this paper.
\bigskip

\addcontentsline{toc}{chapter}{References}


\begin{thebibliography}{9}

\bibitem[1]{ACDP} Acerbi E., Chiad\`o Piat V., Dal Maso G., Percivale D.: An extension theorem from connected sets, and homogenization in general periodic domains. \textit{Nonlinear Anal.}, \textbf{18}/5 (1992), 481--496.
\bibitem[2]{Ada} Adams R.A., Fournier J.: Sobolev Spaces (2nd edition). Academic Press, Oxford, 2003.
\bibitem[3]{AM} Alberti G., Mantegazza C.: A note on the theory of $SBV$ functions. 
\textit{Boll. Un. Mat. Ital. B(7)}, \textbf{11} (1997), 375--382.
\bibitem[4]{Amb} Ambrosio L.: A new proof of the $SBV$ compactness theorem. \textit{Calc. Var.}, \textbf{3} 
(1995), 127--137. 
\bibitem[5] {ACMM} Ambrosio L., Caselles V., Masnou S., Morel J.-M.: Connected components of sets of finite perimeter 
and applications to image processing. \textit{J. Eur. Math. Soc. (JEMS)}, \textbf{3} (2001), 39--92.
\bibitem[6]{AFP} Ambrosio L., Fusco N., Pallara D.: Functions of bounded variations and Free Discontinuity
Problems. Clarendon Press, Oxford, 2000.
\bibitem[7]{Att} Attouch H.: Variational Convergence for Functions and Operators. Pitman Advanced Publishing Program, 
Boston, 1984.
\bibitem[8]{Bar} Barchiesi M., Dal Maso G.: Perfectly elastic and brittle materials as $\Gamma$-limit of a fiber 
reinforced microstructure. Work in progress.
\bibitem[9]{BDf} Braides A., Defranceschi A.: Homogenization of Multiple Integrals. 
Oxford University Press, New York, 1998.
\bibitem[10]{BDfV} Braides A., Defranceschi A., Vitali E.: Homogenization of Free Discontinuity Problems. 
\textit{Arch. Ration. Mech. Anal.}, \textbf{135} (1996), 297--356.
\bibitem[11]{CGP05} Chambolle A., Giacomini A., Ponsiglione M.: Crack initiation in elastic bodies. Preprint, 2005.
\bibitem[12]{DM93} Dal Maso G.: An introduction to $\Gamma$-convergence. Birkh\"auser, Boston, 1993.
\bibitem[13]{DMM} Dal Maso G., Modica L.: A general theory of variational functionals. Topics in functional analysis, 1980--81, Quaderni, Scuola Norm. Sup. Pisa, Pisa, 1981, 149--221.
\bibitem[14]{DMMS} Dal Maso G., Morel J.M., Solimini S.: A variational method in image segmentation: 
Existence and approximation results. \textit{Acta Mathematica}, \textbf{168} (1992), 89--151.
\bibitem[15]{Grif} Griffith A.: The phenomena of rupture and flow in solids.  \textit{Philos. Trans. Roy. Soc. 
London Ser. A}, \textbf{221} (1920), 163--198.
\bibitem[16]{Khru} Khruslov E. Ya.: The asymptotic behaviour of solutions of the second boundary value problem 
under fragmentation of the boundary of the domain. \textit{Math. USSR-Sb.}, \textbf{35} (1979), 266--282.
\end{thebibliography}
\end{document}